\documentclass[a4paper,american]{lipics-v2019}

\usepackage{enumerate}
\usepackage{amsopn}
\usepackage{amsmath,tabu}
\usepackage{stackengine}
\usepackage{amsfonts}
\usepackage{complexity}
\usepackage{mathtools}
\usepackage{tabularx}
\usepackage{environ}
\usepackage{hyperref}

\newcommand{\lcs}{\mathcal{S}}
\newcommand{\initmem}{a^0}
\newcommand{\Ops}[1]{\mathit{Op}(#1)}
\newcommand{\Domain}{\mathit{D}}
\newcommand{\wt}[1]{! #1}
\newcommand{\rd}[1]{? #1}

\newcommand{\pc}{\mathit{pc}}
\newcommand{\Conf}{\mathit{CF}}
\newcommand{\WritesSCC}{\Writes_{\mathit{SCC}}}

\newcommand{\ld} {{\it ld}}
\newcommand{\ct} {{\it ct}}
\newcommand{\lp} {{\it lp}}
\newcommand{\idx} {{\bf idx}}
\newcommand{\white}{{ \it \#\!Wt }}

\newcommand{\ord}{\mathit{ord}}
\newcommand{\init}{\mathit{init}}
\newcommand{\Wit}{\mathit{Wit}}
\newcommand{\SWit}{\Wit^{\mathit{sh}}}
\newcommand{\Trace}{\mathit{Trace}}
\newcommand{\Expr} {\mathit{Expr}}
\newcommand{\FullExpr}{\mathit{FullExpr}}
\newcommand{\Shrink}{\mathit{Shrink}}
\newcommand{\SValid}{\Valid^{\mathit{sh}}}

\DeclareMathOperator{\Inf}{Inf}
\DeclareMathOperator{\Writes}{Writes}

\DeclareMathOperator{\LValid}{LValid}
\DeclareMathOperator{\CValid}{CValid}
\DeclareMathOperator{\Valid}{Valid}
\DeclareMathOperator{\Loop}{Loop}
\DeclareMathOperator{\Ord}{Ord}

\newcommand{\sat}{\mathit{sat}}
\newcommand{\cop}{\mathit{cc}}
\newcommand{\restrictionGraph}[2]{G_{#1}(#2)}
\newcommand{\decomp}[2]{\mathit{SCCdcmp}_{#1}(#2)}

\newcommand{\inc}{\mathit{inc}}
\newcommand{\card}{\mathit{card}}
\newcommand{\pre}{\mathit{pre}}
\newcommand{\po}{\mathit{po}}

\DeclareMathOperator{\IF}{IF}
\DeclareMathOperator{\FW}{FW}
\DeclareMathOperator{\States}{States}
\DeclareMathOperator{\cyc}{cycle}
\DeclareMathOperator{\Gen}{Gen}

\newcommand{\setcon}[1]{ \lbrace #1 \rbrace }
\newcommand{\Setcon}[2]{ \lbrace #1 \mid #2 \rbrace }
\newcommand{\Naturals}{\mathbb{N}}
\newcommand{\Powerset}{\mathcal{P}}

\newcommand{\langu}{\mathcal{L}}
\newcommand{\TRUE}{\mathit{true}}

\newcommand{\abs}[1]{|#1|}
\newcommand{\proj}[1]{\pi_{#1}}

\newcommand{\bigO}{\mathcal{O}}
\newcommand{\bigOS}{\bigO^*}
\newcommand{\LCL}{\ComplexityFont{LCL}}
\newcommand{\LCR}{\ComplexityFont{LCR}}
\newcommand{\CYC}{\ComplexityFont{CYC}}
\newcommand{\ETH}{\ComplexityFont{ETH}}
\newcommand{\kSAT}[1]{#1\text{-}\SAT}
\newcommand{\NEXPTIME}{\ComplexityFont{NEXPTIME}}

\newcommand{\reach}{\mathit{Reach}(\sizeL,\sizeD,\sizeC)}
\newcommand{\cycle}{\mathit{Cycle}(\sizeL,\sizeD,\sizeC)}
\newcommand{\reachEmph}{\mathit{Reach}(\sizeL,\sizeD,\sizeC \,)}
\newcommand{\cycleEmph}{\mathit{Cycle}(\sizeL,\sizeD,\sizeC \,)}

\newcommand{\evaltime}{\sizeD \cdot (\sizeC^{2} + \sizeL^{2} \cdot \sizeD^{2})}
\newcommand{\evaltimeEmph}{\sizeD \cdot (\sizeC^{\,2} + \sizeL^{\,2} \cdot \sizeD^{\,2})}

\newcommand{\fixpointtime}{\sizeD^{2} \cdot (\sizeC^{2} + \sizeL^{2} \cdot \sizeD^{2})}
\newcommand{\fixpointtimeEmph}{\sizeD^{\,2} \cdot (\sizeC^{\,2} + \sizeL^{\,2} \cdot \sizeD^{\,2})}

\newcommand{\contributortimeEmph}{2^{\sizeC} \cdot \sizeL \cdot \sizeD^{\,2} \cdot (\sizeL \cdot \sizeC^{\,4} + \sizeD \cdot \sizeC^{\,2} + \sizeL^{\,2} \cdot \sizeD^{\,3})}

\newcommand{\contributortimeReach}{2^\sizeC \cdot \sizeC^4 \cdot \sizeL^2 \cdot \sizeD^2}
\newcommand{\contributortimeReachEmph}{2^\sizeC \cdot \sizeC^{\,4} \cdot \sizeL^{\,2} \cdot \sizeD^{\,2}}

\newcommand{\leaderdomaintimeEmph}{(\sizeL + \sizeD)^{\bigO(\sizeL + \sizeD)}}

\newcommand{\sizeL}{\texttt{L}}
\newcommand{\sizeD}{\texttt{D}}
\newcommand{\sizeC}{\texttt{C}}

\stackMath
\newcommand\XXrightarrow[2]{
	\raisebox{-.75pt}{
		\ensuremath{
			\smash{
				\mathrel{%
					\setbox2=\hbox{\stackon{\scriptstyle#1}{\scriptstyle#1}}%
					\stackon[-2pt]{%
						\xrightarrow{\makebox[\dimexpr\wd2\relax]{}}%
					}{%
					\scriptstyle#1\,%
					}%
				}_{#2}
			}
		}
	}
}

\stackMath
\newcommand\XXrightarrowLow[2]{
	\raisebox{-.75pt}{
		\ensuremath{
			\smash{
				\mathrel{%
					\setbox2=\hbox{\stackon{\scriptstyle#1}{\scriptstyle#1}}%
					\stackon[-3.5pt]{%
						\xrightarrow{\makebox[\dimexpr\wd2\relax]{}}%
					}{%
					\scriptstyle#1\,%
					}%
				}_{#2}
			}
		}
	}
}

\stackMath
\newcommand\xxrightarrow[1]{
	\raisebox{-.75pt}{
		\ensuremath{
			\smash{
				\mathrel{%
					\setbox2=\hbox{\stackon{\scriptstyle#1}{\scriptstyle#1}}%
					\stackon[-2pt]{%
						\xrightarrow{\makebox[\dimexpr\wd2\relax]{}}%
					}{%
					\scriptstyle#1\,%
					}%
				}
			}
		}
	}
}

\stackMath
\newcommand\xxrightarrowLow[1]{
	\raisebox{-.75pt}{
		\ensuremath{
			\smash{
				\mathrel{%
					\setbox2=\hbox{\stackon{\scriptstyle#1}{\scriptstyle#1}}%
					\stackon[-3.5pt]{%
						\xrightarrow{\makebox[\dimexpr\wd2\relax]{}}%
					}{%
					\scriptstyle#1\,%
					}%
				}
			}
		}
	}
}

\makeatletter
\newcommand{\problemtitle}[1]{\gdef\@problemtitle{#1}}
\newcommand{\problemshort}[1]{\gdef\@problemshort{#1}}
\newcommand{\probleminput}[1]{\gdef\@probleminput{#1}}
\newcommand{\problemparameter}[1]{\gdef\@problemparameter{#1}}
\newcommand{\problemquestion}[1]{\gdef\@problemquestion{#1}}

\NewEnviron{myproblem}{
	\problemtitle{}
	\problemshort{}
	\probleminput{}
	\problemquestion{}
	
	\BODY
	\par\addvspace{.5\baselineskip}
	\noindent
	\framebox[\textwidth]{
		\begin{tabularx}{\textwidth}{@{\hspace{\parindent}} l X c}
			\multicolumn{2}{@{\hspace{\parindent}}l}{ \normalsize \emph{\@problemtitle} \@problemshort} \\
			\normalsize \textbf{Input:} & \normalsize \@probleminput \\
			\normalsize \textbf{Question:} & \normalsize \@problemquestion
		\end{tabularx}
	}
	\par\addvspace{.5\baselineskip}
}

\title{Complexity of Liveness in Parameterized Systems}

\author{Peter Chini}{TU Braunschweig}{p.chini@tu-braunschweig.de}{}{}
\author{Roland Meyer}{TU Braunschweig}{roland.meyer@tu-braunschweig.de}{}{}
\author{Prakash Saivasan}{TU Braunschweig}{p.saivasan@tu-braunschweig.de}{}{}

\authorrunning{P. Chini, R. Meyer, and P. Saivasan}

\Copyright{Peter Chini, Roland Meyer, and Prakash Saivasan}

\ccsdesc[500]{Theory of computation~Formal languages and automata theory}
\ccsdesc[500]{Theory of computation~Problems, reductions and completeness}

\keywords{Liveness Verification, Fine-Grained Complexity, Parameterized Systems.}

\nolinenumbers

\begin{document}

\maketitle

\begin{abstract}
	We investigate the fine-grained complexity of liveness verification for leader contributor systems.
	These consist of a designated leader thread and an arbitrary number of identical contributor threads communicating via a shared memory.
	The liveness verification problem asks whether there is an infinite computation of the system in which the leader reaches a final state infinitely often.
	Like its reachability counterpart, the problem is known to be $\NP$-complete.
	Our results show that, even from a fine-grained point of view, the complexities differ only by a polynomial factor.
	
	Liveness verification decomposes into reachability and cycle detection.
	We present a fixed point iteration solving the latter in polynomial time.
	For reachability, we reconsider the two standard parameterizations.
	When parameterized by the number of states of the leader $\sizeL$ and the size of the data domain $\sizeD$, we show an $(\sizeL + \sizeD)^{\bigO(\sizeL + \sizeD)}$-time algorithm.
	It improves on a previous algorithm, thereby settling an open problem.
	When parameterized by the number of states of the contributor $\sizeC$, we reuse an $\bigOS(2^\sizeC)$-time algorithm.
	We show how to connect both algorithms with the cycle detection to obtain algorithms for liveness verification.
	The running times of the composed algorithms match those of reachability, proving that the fine-grained lower bounds for liveness verification are met.
\end{abstract}

\section{Introduction}
\label{Section:Introduction}

We study the fine-grained complexity of liveness verification for parameterized systems formulated in the leader contributor model. 
The model~\cite{Hague2011,Esparza2013} assumes a distinguished leader thread interacting (via a shared memory) with a finite but arbitrary number of indistinguishable contributor threads.
The liveness verification problem~\cite{Esparza2015} asks whether there is an infinite computation of the system in which the leader visits a set of final states infinitely often. 
Fine-grained complexity~\cite{Downey2013,Cygan2015} studies the impact of parameters
associated with an algorithmic problem on the problem's complexity like the influence of the contributor size on the complexity of liveness verification. 
The goal is to develop deterministic algorithms that are provably optimal.
We elaborate on the three ingredients of our study.

The leader contributor model has attracted considerable attention~\cite{Hague2011,Esparza2013,Esparza2015,Muscholl2015,Esparza2016,Fortin2017,Chini2018}.
From a modeling point of view, a variety of systems can be formulated as anonymous entities interacting with a central authority, examples being client-server applications, resource-management systems, and distributed protocols on wireless sensor networks.
From an algorithmic point of view, the model has led to positive surprises.
Hague~\cite{Hague2011} proved decidability of reachability even in a setting where the system components are pushdown automata. 
La Torre et al.~\cite{Muscholl2015} generalized the result to any class of components that satisfies mild assumptions, the most crucial of which being computability of downward closures.
As for the complexity, Esparza et al.~\cite{Esparza2013,Esparza2016} proved $\PSPACE$-completeness for Hague's model and $\NP$-completeness in the setting where the components are given by finite-state automata.
The liveness problem was first studied in~\cite{Esparza2015}.
Interestingly, liveness has the same complexity as reachability, it is $\NP$-complete for finite-state systems.
Fortin et al.~\cite{Fortin2017} generalized the study to LTL-definable properties and gave conditions for~$\NEXPTIME$-completeness.

Fine-grained complexity is a field within parameterized complexity~\cite{Downey2013,Cygan2015}. 
Parameterized complexity intends to explain the following gap between theory and practice that is observed throughout algorithmics.  
Despite a high worst-case complexity, tools may have an easy time solving a problem. 
Parameterized complexity argues that measuring the complexity of a problem in terms of the size of the input, typically denoted by $n$, is too rough.
One should consider further parameters $k$ that capture the shape of the input or the solution sought. 
Then the gap is due to the fact that tools implement an algorithm running in time $f(k)\cdot \mathit{poly}(n)$. 
Here, $f$ may be an exponential, but it only depends on the parameter, and that parameter is small in practice. 
Problems solvable by such an algorithm are called fixed-parameter tractable and belong to the complexity class $\FPT$. 
Fine-grained complexity is the study of the precise function $f$ that is needed, via upper and lower bound arguments. 

The fine-grained complexity of the reachability problem for the leader contributor model was studied in our previous work~\cite{Chini2018}. 
We assumed that the components are finite state and considered two parameterizations.
When parameterized by the size of the contributors~$\sizeC$, we showed that reachability can be solved in time~$\bigOS(2^{\sizeC})$.
The notation $\bigOS$ suppresses polynomial factors in the running time.
Interestingly, this is the best one can hope for.
An algorithm with a subexponential dependence on $\sizeC$, to be precise an algorithm running in time $2^{o(\sizeC)}$, would contradict the so-called exponential time hypothesis ($\ETH$).
The $\ETH$ \cite{Impagliazzo2001} is a standard hardness assumption in parametrized complexity that is used to derive relative lower bounds.
The second parameterization is by the size of the leader $\sizeL$ and the size of the data domain $\sizeD$.
We gave an algorithm running in time $(\sizeL\sizeD)^{\bigO(\sizeL\sizeD)}$.
Interestingly, the lower bound is only $2^{o((\sizeL+\sizeD) \cdot \log(\sizeL+\sizeD))}$.
Being away a quadratic factor in the exponent means a substantial gap for a deterministic algorithm.

In the present paper, we study the fine-grained complexity of the liveness verification problem.
We assume finite-state components and consider the same parameterization as for reachability. 
The surprise is in the parameterization by $\sizeL$ and $\sizeD$. 
We give an algorithm running in time $(\sizeL+\sizeD)^{\bigO(\sizeL+\sizeD)}$. This matches the lower bound and closes the gap for reachability.
When parameterized by the size of the contributors, we obtain an $\bigOS(2^\sizeC)$ algorithm. 

To explain the algorithms, note that a live computation decomposes into a prefix and an accepting cycle.
Finding prefixes is a matter of reachability.
We show how to combine reachability algorithms with a cycle detection to obtain algorithms that find live computations.
The resulting algorithms will run in time $\bigO(\reach \cdot \cycle)$ where $\reach$ denotes the running time of the invoked reachability algorithm and $\cycle$ that of the cycle detection.
This result allows for considering reachability and cycle detection separately.

Our first main contribution is an algorithm for reachability when $\sizeL$ and $\sizeD$ are given as parameters.
It runs in time $\leaderdomaintimeEmph$ and significantly improves upon the $(\sizeL \sizeD)^{\bigO(\sizeL \sizeD)}$-time algorithm from \cite{Chini2018}.
Moreover, it is optimal in the fine-grained sense.
It closes the gap between upper and lower bound.
The algorithm works over sketches of computations.
A sketch is valid if there is an actual computation corresponding to it.
In~\cite{Chini2018}, we performed a single validity check for each sketch.
Here, we show that valid sketches can be build up inductively from small sketches.
To this end, we interleave validity checks with compression phases.
Our algorithm is a dynamic programming on small sketches, exploiting the inductive approach.

Our second main result is an algorithm for detecting cycles.
We show that the problem is actually solvable in polynomial time.
Technically, we employ a characterization of cycles via (certain) SCC decompositions of the contributor automaton.
These decompositions can be computed by a fixed point iteration invoking Tarjan's algorithm \cite{Tarjan1972} in polynomial time.

Since $\cycle$ is polynomial, liveness has the same complexity as reachability also in the fine-grained sense.
With the above result, we obtain the mentioned algorithms for liveness by composing the reachability algorithms with the cycle detection.

\subparagraph*{Related Work.}
\label{Section:RelatedWork}

The parameterized complexity has also been studied for other verification problems.
Farzan and Madhusudan~\cite{Farzan2009} consider the problem of predicting atomicity violations. 
Depending on the synchronization, they obtain an efficient fine-grained algorithm resp. prove an $\FPT$-algorithm unlikely.
In \cite{Farzan2016}, the authors give an efficient (fine-grained) algorithm for the problem of checking TSO serializability.
In~\cite{Meyer2017}, we studied the fine-grained complexity of bounded context switching~\cite{Qadeer2005}, including lower bounds on the complexity.
In~\cite{Chini2018}, we gave a parameterized analysis of the bounded write-stage restriction, a generalization of bounded context switching~\cite{Saivasan2014}.
The problem turns out to be hard for different parameterizations, and has a large number of hard instances.
In a series of papers~\cite{Fernau2016,Fernau2015,Wareham2000}, Fernau et al. studied $\FPT$-algorithms for problems from automata theory.

Related to leader contributor systems are broadcast networks (ad-hoc networks) \cite{Singh2009,Delzanno2010}.
These consist of an arbitrary number of finite-state contributors that communicate via message passing.
There is no leader.
This has an impact on the complexity of safety \cite{Delzanno2012,Fournier2015} and liveness \cite{Saivasan2019,Bertrand2014} verification, which drops from $\NP$ (leader contributor systems) to $\P$.

More broadly, the verification of parameterized systems is an active field of research~\cite{Bloem2015}. 
Prominent approaches are well-structuredness arguments~\cite{Abdulla1993,Finkel2001} and cut-off results~\cite{German1992}.
Well-structuredness means the transition relation is monotonic wrt. a well-quasi ordering on the configurations, a combination that leads to surprising decidability results. 
A cut-off is a bound on the size of system instances such that correctness of the bounded instances entails correctness of all instances.  
Our algorithm uses different techniques.
We give a reduction from liveness to reachability combined with a polynomial-time cycle check.
Reductions from liveness to reachability or safety are recently gaining popularity in verification~\cite{Konnov2017,Padon2018,Hague2018}. 
For reachability, we then rely on techniques from parameterized complexity~\cite{Downey2013,Cygan2015}, namely identifying combinatorial objects to iterate over and dynamic programming.

\section{Leader Contributor Systems and the Liveness Problem}
\label{Section:Preliminaries}

We introduce leader contributor systems and the leader contributor liveness problem of interest following \cite{Hague2011,Esparza2013,Esparza2015}.
Moreover, we give a short introduction to fine-grained complexity.
For standard textbooks, we refer to \cite{Fomin2010,Cygan2015,Downey2013}.

\subparagraph*{Leader Contributor Systems.}
\label{Section:LeaderContributorSystems}

A \emph{leader contributor system} consists of a designated leader thread communicating with a number of identical contributor threads via a shared memory. 
Formally, the system is a tuple $\lcs = (\Domain, \initmem, P_L, P_C)$ where $D$ is the finite domain of the shared memory and $\initmem \in \Domain$ is the initial memory value.
The leader $P_L$ and the contributor $P_C$ are abstractions of concrete threads making visible the interaction with the memory.
They are defined as finite state automata over the alphabet $\Ops{\Domain} = \Setcon{\wt{a},\rd{a}}{a \in \Domain}$ of memory operations.
Here, $\wt{a}$ denotes a write of $a$ to the memory, $\rd{a}$ denotes a read of $a$.
The leader is given by the tuple $P_L = (\Ops{\Domain}, Q_L, q^0_L, \delta_L)$ where $Q_L$ is the set of states, $q^0_L \in Q_L$ is the initial state, and $\delta_L \subseteq Q_L \times (\Ops{\Domain} \cup \setcon{\varepsilon}) \times Q_L$ is the transition relation.
We extend the relation to words in $\Ops{\Domain}^*$ and usually write $q \XXrightarrow{w}{L} q'$ for $(q,w,q') \in \delta_L$.
The contributor is defined similarly, by $P_C = (\Ops{\Domain}, Q_C, q^0_C, \delta_C)$.

The possible interactions of a thread with the memory depend on the current memory value and the internal state of the thread.
To keep track of this information, we use \emph{configurations}.
These are tuples of the form $(q,a,\pc) \in \Conf^t = Q_L \times \Domain \times Q^t_C$.
Here, $\pc$ is a vector storing the current state of each contributor, and there are $t\in\Naturals$ contributors participating in the computation.
The number of participating contributors can be arbitrary, but will be fixed throughout the computation.
Therefore, the set of all configurations is given by $\Conf = \bigcup_{t \in \Naturals} \Conf^t$.
A configuration is called \emph{initial} if it is of the form $(q^0_L,\initmem,\pc^0)$ where $\pc^0(i) = q^0_C$ for each $i \in [1..t]$.
We use projections to access the components of a configuration.
Let $\proj{L}$ and $\proj{\Domain}$ denote the projections to the leader state resp. the memory content, $\proj{L}((q,a,\pc)) = q$ and $\proj{\Domain}((q,a,\pc)) = a$.
The map $\proj{C}$ projects a configuration to the set of contributor states present in $\pc$, $\proj{C}((q,a,\pc)) = \Setcon{\pc(i)}{ i \in [1..t]}$.

The current configuration of $\lcs$ may change due to an interaction with the memory or an internal transition.
We capture such changes by a labeled transition relation among configurations, $\rightarrow\ \subseteq \Conf \times (\Ops{\Domain} \cup \setcon{\varepsilon}) \times \Conf$.
It contains transitions induced by the leader and by the contributor.
We focus on the former.
If there is a write $q \XXrightarrow{\wt{b}}{L} q'$ of the leader, we get $(q,a,\pc) \xxrightarrow{\wt{b}} (q',b,\pc)$.
Similarly, a read $q \XXrightarrow{\rd{a}}{L} q'$ induces $(q,a,\pc) \xxrightarrow{\rd{a}} (q',a,\pc)$.
Note that the current memory value has to match the read symbol.
An internal transition $q \XXrightarrow{\varepsilon}{L} q'$ yields $(q,a,\pc) \xxrightarrow{\varepsilon} (q',a,\pc)$.
For the transitions induced by the contributors,
let $\pc(i) = p$ and $\pc' = \pc[i = p']$, meaning $\pc'(i) = p'$ and $\pc'$ coincides with $\pc$ in all other components. 
A transition $p \XXrightarrowLow{\wt{b} / \rd{a} / \varepsilon}{C} p'$ yields $(q,a,\pc) \xxrightarrowLow{\wt{b} / \rd{a} / \varepsilon} (q,b/a,\pc')$, like for the leader.
Note that transitions are only defined among configurations involving the same number of contributors.
It is convenient to assume that the leader never writes $\wt{a}$ and immediately reads $\rd{a}$ again.
In this case, we could replace the corresponding read transition by $\varepsilon$.

The transition relation $\rightarrow$ is generalized to words, denoted by $c \xxrightarrow{w} c'$ with $w \in \Ops{\Domain}^*$.
We call such a sequence a \emph{computation} of $\lcs$.
We also write $c \rightarrow^* c'$ if there is a word $w$ with $c \xxrightarrow{w} c'$, and $c \rightarrow^+ c'$ if $w$ has length at least~$1$.
An \emph{infinite computation} is a sequence $\sigma = c^0 \rightarrow c^1 \rightarrow \dots$ of infinitely many transitions.
We call it \emph{initialized} if $c^0$ is an initial configuration.
Since $\sigma$ involves infinitely many configurations but the set $Q_L$ is finite, there are states of the leader that occur infinitely often along the computation.
We denote the set of these states by $\Inf(\sigma) = \Setcon{q \in Q_L}{\exists^\infty \, i : q = \proj{L}(c^i)}$.

\subparagraph*{Leader Contributor Liveness.}
\label{Section:LeaderContributorLiveness}

The \emph{leader contributor liveness problem} is the task of deciding whether the leader satisfies a liveness specification while interacting with a number of contributors.
Formally, given a leader contributor system $\lcs = (\Domain, \initmem, P_L, P_C)$ and a set of final states $F \subseteq Q_L$ encoding the specification, the problem asks whether there is an initialized infinite computation $\sigma$ such that the leader visits $F$ infinitely often along $\sigma$.
Since $F$ is finite, this is equivalent to $\Inf(\sigma) \cap F \neq \emptyset$.
In this case, $\sigma$ is called a \emph{live computation}.
\begin{myproblem}
	\problemtitle{Leader Contributor Liveness}
	\problemshort{($\LCL$)}
	\probleminput{A leader contributor system $\lcs = (\Domain, \initmem, P_L, P_C)$ and final states $F \subseteq Q_L$.}
	\problemquestion{Is there an infinite initialized computation $\sigma$ such that $\Inf(\sigma) \cap F \neq \emptyset$?}
\end{myproblem}

\subparagraph*{Fine-Grained Complexity.}
\label{Section:FPTbasic}

The problem $\LCL$ is known to be $\NP$-complete \cite{Esparza2015}.
Despite its hardness, it may still admit efficient deterministic algorithms the running times of which depend exponentially only on certain parameters.
To find parameters that allow for the construction of such algorithms, one examines the \emph{parameterized complexity} of $\LCL$. 
Note that the name does not refer to parameterized systems.
It stems from measuring the complexity not only in the size of the input but also in the mentioned parameters.

Let $\Sigma$ be an alphabet.
Unlike in classical complexity theory where we consider problems over $\Sigma^*$, a \emph{parameterized problem} $P$ is a subset of $\Sigma^* \times \Naturals$.
Inputs to $P$ are pairs $(x,k)$ with the second component $k$ being referred to as the \emph{parameter}.
Problem $P$ is called \emph{fixed-parameter tractable} if it admits a deterministic algorithm deciding membership in $P$ for pairs $(x,k)$ in time $f(k) \cdot \abs{x}^{\bigO(1)}$.
Here, $f$ is a computable function that only depends on $k$.
Since $f$ usually dominates the polynomial, the running time of the algorithm is denoted by $\bigOS(f(k))$.

While finding an upper bound for the function $f$ amounts to coming up with an efficient algorithm, lower bounds on $f$ are obtained relative to hardness assumptions.
One of the standard assumptions is the \emph{exponential time hypothesis} ($\ETH$) \cite{Impagliazzo2001}.
It asserts that $\kSAT{3}$ cannot be solved in time $2^{o(n)}$ where $n$ is the number of variables in the input formula.
The lower bound is transported to the problem of interest via a reduction from $\kSAT{3}$.
Then, $f$ cannot drop below a certain bound unless $\ETH$ fails.
It is a task of \emph{fine-grained complexity} to find the \emph{optimal} function $f$, where upper and lower bound match.

We conduct fine-grained complexity analyses for two parameterizations of $\LCL$.
First, we consider $\LCL(\sizeL,\sizeD)$, the parameterization by the number of states in the leader $\sizeL$ and the size of the data domain~$\sizeD$.
We show an $(\sizeL + \sizeD)^{\bigO(\sizeL + \sizeD)}$-time algorithm, matching the lower bound for $\LCL$ from~\cite{Chini2018}.
The second parameterization $\LCL(\sizeC)$ is by the number of states of the contributor $\sizeC$.
We give an algorithm running in time $\bigOS(2^{\sizeC})$. 
It also matches the known lower bound \cite{Chini2018}.
Therefore, both algorithms are optimal in the fine-grained sense.
The parameterizations $\LCL(\sizeL)$ and $\LCL(\sizeD)$ are unlikely to be fixed-parameter tractable.
These problems are hard for $\W[1]$, a complexity class comprising intractable problems~\cite{Chini2018}.

\section{Dividing Liveness along Interfaces}
\label{Section:Interfaces}

A live computation naturally decomposes into a prefix and a cycle.
This means that solving $\LCL$ amounts to finding both, a prefix computation and a cyclic computation.
However, we need to guarantee that the computations can be linked.
The prefix should lead to a configuration that the cycle loops on.
Since there are infinitely many configurations, we introduce the finite domain of interfaces.
An interface abstracts a configuration to its leader state, memory value, and set of contributor states.
Hence, an interface can be seen as a summary of those configurations that are suitable for linking prefix and cycle.

Our algorithm to solve $\LCL$ works as follows.
We start a reachability algorithm for the leader contributor model on the final states that the live computation should visit.
After a modification, the algorithm outputs all interfaces witnessing prefixes to those states.
Let $\reach$ denote the running time of the reachability algorithm.
We show that the obtained set of interfaces will be of size at most $\reach$.
We iterate over the interfaces and pass each to a cycle detection which works over interfaces instead of configurations.
If a cycle was found, a live computation exists.
Let $\cycle$ be the time needed for a single cycle detection.
Then, the running time of the algorithm can be estimated as follows. 
\begin{theorem}
	\label{Theorem:TimeLCL}
	$\LCL$ can be solved in time $\bigO(\reachEmph \cdot \cycleEmph)$.
\end{theorem}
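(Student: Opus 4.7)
The plan is to reduce liveness to a reachability check followed by a cycle detection, using interfaces as the bridge between the two phases. Recall that an interface is the abstraction $(q,a,\proj{C}((q,a,\pc)))$ of a configuration $(q,a,\pc)$. The starting observation is that any live computation $\sigma$ splits into a prefix $c^0 \rightarrow^* c$ and an infinite continuation $c \rightarrow^+ c' \rightarrow^+ \dots$ in which the leader visits $F$ infinitely often. By pigeonholing on the finitely many interfaces that occur along the continuation and on the states in $\Inf(\sigma) \cap F$, we can assume the continuation contains a return $c \rightarrow^+ \tilde c$ to a configuration $\tilde c$ with the same interface as $c$ and with the leader traversing some state of $F$ in between. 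Thus $\LCL$ has a positive answer iff there is a reachable interface $I$ that admits such an $F$-traversing interface-cycle.

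Next I would exploit the two running time budgets separately. For the prefix phase, I would modify the reachability algorithm so that, rather than answering yes/no for a single target, it enumerates all interfaces $I$ reachable from the initial configuration. A reachability procedure that explores the state space interface by interface can be adapted this way without asymptotic overhead: each explored interface is simply emitted. Because the procedure runs in time $\reach$, the set of emitted interfaces has size $\bigO(\reach)$. For the cycle phase, I would feed each enumerated interface $I$ into the cycle detection procedure of the subsequent section, which decides in time $\cycle$ whether there is a nontrivial computation starting and ending at a configuration with interface $I$ whose leader visits $F$. Iterating over all $\bigO(\reach)$ candidates yields the $\bigO(\reachEmph \cdot \cycleEmph)$ bound.

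Correctness of the composition is the main obstacle. One direction is routine: from a live $\sigma$ we extract a prefix witnessed by some interface $I$ and a cycle on $I$ visiting $F$, and both phases of the algorithm will report them. The delicate direction is the converse, because the prefix reaching $I$ and the cycle on $I$ are computed independently and must be glued into a single computation of a common leader contributor system instance. Here the point is that interfaces only track the set of contributor states, not multiplicities, and the number of contributors is arbitrary. Hence we can instantiate the system with enough contributors to simultaneously realize the prefix and, in parallel, keep a private pool of contributors whose behaviour is fixed to execute the cycle once $I$ is reached; unused contributors remain in $q^0_C$. The resulting concatenation is an initialized infinite computation whose leader visits $F$ infinitely often, closing the equivalence and establishing the claimed time bound.
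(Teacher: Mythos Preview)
Your overall architecture—enumerate reachable interfaces via the reachability routine, then run cycle detection on each, bounding the number of interfaces by the reachability budget—matches the paper. The gap is in what you take a ``cycle'' to be. You pigeonhole on \emph{interfaces} along the infinite continuation and thus obtain only an interface-preserving segment $c \rightarrow^+ \tilde c$ with $I(c)=I(\tilde c)$, not a return to the same configuration. This is too weak for the converse direction: an interface cycle cannot obviously be iterated, because after one traversal the multiplicities of contributor states may have drifted (only the support $S$ is preserved), and your gluing paragraph only explains how to execute the cycle \emph{once}. You cannot add a fresh private pool for each iteration, since the number of contributors is fixed throughout a computation. The paper instead pigeonholes on full configurations—legitimate because $\Conf^t$ is finite for the fixed $t$ participating in $\sigma$—to obtain a genuine cycle $c \rightarrow^+ c$, and then uses a copycat argument (Lemma~\ref{Lemma:SplittingLemma}) to make it \emph{saturated}, meaning every contributor state seen along the cycle already lies in $\proj{C}(c)$. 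Saturated cycles are exactly what the $\CYC$ procedure of Section~\ref{Section:Cycles} tests for, and a saturated cycle on any configuration with interface $I$ can be simulated on every larger configuration with interface $I$; this is what makes the gluing (Lemma~\ref{Lemma:GlueComputations}) go through and yields an infinite computation.

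Two smaller points. Your phrase ``unused contributors remain in $q^0_C$'' would spoil the interface match whenever $q^0_C \notin S$; the paper's gluing instead pads by copycatting contributors already headed into $S$. And the paper places the $F$-condition on the interface itself (the leader component $q$ lies in $F$), not inside the cycle; this way the reachability algorithm is simply invoked with target set $F$, and $\CYC$ need not track final states at all.
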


The first step in proving Theorem \ref{Theorem:TimeLCL} is to decompose live computations into prefixes and cycles.
To be precise, we aim for a decomposition where the cycle is saturated in the sense that the initial configuration already contains all contributor states that will be encountered along the cycle.
Knowing these states in advance eases technical arguments when finding cycles in Section \ref{Section:Cycles}.
Formally, a cyclic computation $\tau = c \rightarrow^* c$ is called \emph{saturated} if for each configuration $c'$ in $\tau$, we have $\proj{C}(c') \subseteq \proj{C}(c)$.
We write $c \rightarrow^*_{\sat} c$ for a saturated cycle.
The following lemma yields the desired decomposition.
If not stated otherwise, proofs and details for the current section are provided in Appendix \ref{Section:AppendixInterfaces}.
\begin{lemma}
	\label{Lemma:SplittingLemma}
	There is an infinite initialized computation $\sigma$ with $\Inf(\sigma) \cap F \neq \emptyset$ if and only if there is a finite initialized computation $c^0 \rightarrow^* c \rightarrow^+_{\sat} c$ with $\proj{L}(c) \in F$.
\end{lemma}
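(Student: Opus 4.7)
My plan is to prove the two directions separately.

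The backward direction is straightforward. Given a finite initialized computation $c^0 \rightarrow^* c \rightarrow^+_{\sat} c$ with $\proj{L}(c) \in F$, I iterate the cycle to form the infinite initialized computation $c^0 \rightarrow^* c \rightarrow^+ c \rightarrow^+ c \rightarrow^+ \cdots$. The number of contributors is fixed throughout, and every iteration of the cycle starts and ends at $c$, so this is a valid infinite computation in which $\proj{L}(c) \in F$ occurs once per iteration, hence infinitely often.

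For the forward direction, let $\sigma = c^0 \rightarrow c^1 \rightarrow \cdots$ be an infinite initialized computation with $t$ contributors satisfying $\Inf(\sigma) \cap F \neq \emptyset$, and fix $q_f$ in this intersection. Because $\Conf^t$ is finite for a fixed number of contributors, pigeonhole yields a configuration $c^\star$ with $\proj{L}(c^\star) = q_f$ that occurs infinitely often on $\sigma$; any two such occurrences give a prefix $c^0 \rightarrow^* c^\star$ and a cycle $c^\star \rightarrow^+ c^\star$. If this cycle is already saturated, we are done. Otherwise, I enlarge the computation by adding fresh contributors, exploiting the fact that the number of contributors is not bounded by the problem. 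Let $U$ be the set of contributor states appearing in the cycle but missing from $\proj{C}(c^\star)$, and for each $p \in U$ fix an original contributor $i_p$ that visits $p$ during the cycle; this provides a valid contributor-automaton trajectory from $q^0_C$ to $p$ whose reads and writes are all matched by the behaviour of the existing computation.

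I build the enlarged computation as follows. From $c^0$, each fresh contributor (one per $p \in U$) shadows its chosen $i_p$: every transition of $i_p$ is immediately duplicated by the shadow. This duplication does not perturb the memory trace or the states of other contributors, because a duplicated write reproduces the same value, and a duplicated read still sees that value before any other action intervenes. Once $i_p$ first arrives at $p$, which happens within the first iteration of the cycle, the shadow also reaches $p$ and then idles for the remainder of the computation; idling is legal since a contributor is never required to take a transition at any step. At the end of this first (slightly inflated) cycle iteration, the leader, memory, and original contributors are back in the values of $c^\star$ while the shadows sit at the states in $U$, yielding a configuration $c^\dagger$ with $\proj{L}(c^\dagger) = q_f \in F$ and $\proj{C}(c^\dagger) = \proj{C}(c^\star) \cup U$. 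Replaying the original cycle once more from $c^\dagger$, with all shadows idling, gives the saturated cycle $c^\dagger \rightarrow^+_{\sat} c^\dagger$ required by the lemma. The main obstacle in the whole argument is this saturation step, and specifically the verification that the shadow duplications can always be scheduled without disturbing the existing memory trace; once that bookkeeping is in place, the two directions combine immediately into the stated equivalence.
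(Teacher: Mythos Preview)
Your proof is correct and follows essentially the same approach as the paper's: both directions match, with the backward direction by iteration and the forward direction by pigeonhole on $\Conf^t$ followed by the copycat construction (adding shadow contributors that mimic an existing contributor until reaching the desired state, then idle) and a second pass through the cycle to obtain saturation. Your writeup is somewhat more explicit about why the duplicated transitions do not perturb the memory trace, but the underlying argument is the same.
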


We would like to decompose $\LCL$ into finding prefix and cycle.
But we need to ensure that the found computations can be linked at an explicit configuration.
For avoiding the latter, we introduce interfaces.
An \emph{interface} is a triple $I = (S,q,a) \in \Powerset(Q_C) \times Q_L \times \Domain$ consisting of a set of contributor states $S$, a state of the leader $q$, and a memory value $a$.
A configuration $c$ \emph{matches} the interface $I$ if $\proj{C}(c) = S$, $\proj{L}(c) = q$, and $\proj{\Domain}(c) = a$.
We denote this by $I(c)$, interpreting $I$ as a predicate.
The set of interfaces is denoted by $\IF$.
The following lemma shows that the notion allows for decomposing $\LCL$.
We can search for prefixes and cycles separately.
The lemma provides the arguments needed to complete the proof of Theorem \ref{Theorem:TimeLCL}.
\begin{lemma}
	\label{Lemma:GlueComputations}
	Let $I \in \IF$.
	There is a computation $c^0 \rightarrow^* c \rightarrow^+_\sat c$ with $I(c)$ if and only if there are computations $d^0 \rightarrow^* d$ and $f \rightarrow^+_\sat f$ with $I(d) \wedge I(f)$.
\end{lemma}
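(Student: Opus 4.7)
The forward direction is immediate: given $c^0 \rightarrow^* c \rightarrow^+_\sat c$ with $I(c)$, I would take $d := c$ and $f := c$, using the prefix and the saturated cycle of the given computation as the two witnesses. All the substance lies in the backward direction.

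For the backward direction, suppose $d^0 \rightarrow^* d$ uses $t_1$ contributors with vector $\pc_d$ and $f \rightarrow^+_\sat f$ uses $t_2$ contributors with vector $\pc_f$, both matching $I = (S, q, a)$. My plan is to build a combined computation on $t_1 + t_2$ contributors in three stages. Stage~1 simulates $d^0 \rightarrow^* d$ on the first $t_1$ contributors, keeping the remaining $t_2$ idle at $q^0_C$. Stage~2 drives each extra $j$ from $q^0_C$ to its target $\pc_f(j)$ by having $j$ copy some contributor $i \in [1..t_1]$ with $\pc_d(i) = \pc_f(j)$; such an $i$ exists because $\proj{C}(d) = S = \proj{C}(f)$. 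The copies are woven into Stage~1: immediately after each transition of $i$, each of its copies performs the identical transition in sequence. Stage~3 then runs the saturated cycle $f \rightarrow^+_\sat f$ on the extras while the first $t_1$ contributors stay idle.

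After Stages~1 and~2, the resulting configuration $c$ has leader state $q$, memory $a$, and contributor vector $(\pc_d, \pc_f)$, so $\proj{C}(c) = S$ and $I(c)$ holds. Stage~3 returns the extras to $\pc_f$, the leader to $q$, and the memory to $a$, yielding $c \rightarrow^+ c$. The loop is saturated, since the idle contributors stay at states from $\pc_d$ which lie in $S$, and the active extras traverse only states in $S$ by saturation of the original cycle.

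The main technical obstacle is justifying that the interleaved copies in Stage~2 do not break the original prefix. The key observation is that a copied transition leaves the memory exactly where the original left it — a write overwrites with the same value, a read does not touch memory, and an internal move is silent — so every subsequent original transition of the leader and of the other contributors remains enabled. Since each copy starts at $q^0_C$ and mirrors every transition of its model $i$, it traces the same state sequence in the contributor automaton and ends precisely at $\pc_d(i) = \pc_f(j)$, as required.
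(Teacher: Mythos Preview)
Your proof is correct and follows essentially the same copycat-based strategy as the paper: add extra contributors that shadow existing ones during the prefix so that afterwards enough contributors sit in the right states to run the cycle, then run the cycle on those extras while the rest stay idle. The only cosmetic difference is that the paper builds the glued configuration $c$ with $\card_p(c) = \max(\card_p(d),\card_p(f))$ contributors per state, whereas you take all $t_1 + t_2$ contributors; both choices work for the same reason and your justification of why interleaved copy moves preserve enabledness is exactly the content of the copycat lemma the paper invokes.
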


In the following, we turn to our main contributions.
We present algorithms for reachability and cycle detection and obtain precise values for $\reach$ and $\cycle$.
Further, we modify the reachability algorithms to output interfaces.
Then we invoke Theorem~\ref{Theorem:TimeLCL} to derive algorithms for $\LCL$.
The first problem that we consider is finding prefixes.
\begin{myproblem}
	\problemtitle{Leader Contributor Reachability}
	\problemshort{($\LCR$)}
	\probleminput{A leader contributor system $\lcs = (\Domain, \initmem, P_L, P_C)$ and final states $F \subseteq Q_L$.}
	\problemquestion{Is there an initialized computation $c^0 \rightarrow^* c$ with $\proj{L}(c) \in F$?}
\end{myproblem}

The problem $\LCR$ is $\NP$-complete \cite{Esparza2013}.
Its complexity $\reach$ depends on the parameterization.
There are two standard parameterizations \cite{Chini2018,Meyer2019}:
$\LCR(\sizeL,\sizeD)$ and $\LCR(\sizeC)$.

For the parameterization by $\sizeL$ and $\sizeD$,
we present an algorithm solving $\LCR(\sizeL,\sizeD)$ in time $\leaderdomaintimeEmph$.
The algorithm solves an open problem \cite{Chini2018} by matching the known lower bound:
unless $\ETH$ fails, $\LCR$ cannot be solved in time $2^{o((\sizeL + \sizeD) \cdot \log(\sizeL + \sizeD))}$.
The algorithm and its modification for obtaining interfaces are presented in Section \ref{Section:ParameterizationLeaderDomain}.
\begin{theorem}
	\label{Theorem:LCRLeaderDomain}
	$\LCR(\sizeL,\sizeD)$ can be solved in time $\leaderdomaintimeEmph$.
\end{theorem}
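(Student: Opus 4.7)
The plan is to build a dynamic programming algorithm that operates on a small, finite combinatorial abstraction of computations called \emph{sketches}, using an inductive compose-and-compress strategy. A sketch records the minimum amount of information about a computation needed to continue it: the current leader state, the current memory value, and a bounded record of contributor activity relevant for future reads. The guiding design goal is that a sketch has size $\bigO(\sizeL + \sizeD)$ over an alphabet of polynomial size in $\sizeL + \sizeD$, so that the number of sketches is at most $(\sizeL+\sizeD)^{\bigO(\sizeL+\sizeD)}$. A sketch is \emph{valid} if it is induced by an actual initialized computation of $\lcs$ with some population of contributors.

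First, I would fix the sketch format (taking a suitable refinement of the one used in \cite{Chini2018}) together with a set of trivial base sketches corresponding to single leader transitions and single contributor write-read witnesses. Next, I would establish the central inductive lemma: if $s_1$ and $s_2$ are valid small sketches, then a canonical \emph{composition} $s_1 \cdot s_2$ is a valid but possibly longer sketch, and it admits a \emph{compression} back to a small sketch $s$ of size $\bigO(\sizeL + \sizeD)$ that preserves validity as well as the resulting leader state, memory value, and interface information. Dually, every valid small sketch can be obtained from base sketches by iterated composition and compression. The soundness direction is straightforward from the preservation properties of compression; the completeness direction requires a normal-form argument showing that any witnessing computation can be cut into $\bigO(\sizeL+\sizeD)$-sized segments whose compressed sketches are precisely those produced by the DP.

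With this inductive characterization in hand, the algorithm maintains a set $V$ of small sketches known to be valid, initialized with the base sketches. In each round, for every pair $(s_1, s_2) \in V \times V$, we compose them, compress the result to a small sketch $s$, and verify validity of $s$ using a one-shot check (on small sketches this runs in time polynomial in $\sizeL + \sizeD$, following~\cite{Chini2018}). If valid, $s$ is added to $V$. The fixed point is reached within $(\sizeL+\sizeD)^{\bigO(\sizeL+\sizeD)}$ rounds since $|V|$ is bounded by the total number of small sketches. Each round spends $|V|^2$ pair-work with polynomial per-pair cost, so the overall running time stays within $(\sizeL + \sizeD)^{\bigO(\sizeL + \sizeD)}$. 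Reachability of $F$ is answered positively iff $V$ contains a valid sketch whose terminal leader state lies in $F$.

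The hard part will be the completeness of the compose-and-compress rule: one must prove that every witnessing computation, no matter how long or how many contributors it uses, can be segmented into pieces that, after compression, appear in the DP. This requires a careful rewriting of computations into a normal form where contributor writes and reads are scheduled so that each segment interacts with a pool of at most $\bigO(\sizeL + \sizeD)$ contributor states, and where compression cannot erase information needed later. Once this structural lemma is in place, soundness, termination, and the running-time bound follow by direct counting, giving the claimed $(\sizeL+\sizeD)^{\bigO(\sizeL+\sizeD)}$ algorithm.
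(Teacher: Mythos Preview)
Your high-level plan---dynamic programming over compressed sketches with a compose-and-compress step---matches the paper's architecture, but you are missing the concrete combinatorial handle that drives the bound, and your proposed completeness argument points in the wrong direction. The paper's witness $(w,q,\sigma)$ records the leader's run $w\in(Q_L\times(\Domain\cup\{\bot\}))^*$ together with a map $\sigma$ marking positions where contributors provide a \emph{first write} of some memory value; validity is relative to a first-write sequence $\beta\in\Domain^{\leq\sizeD}$ of pairwise distinct values. A \emph{short} witness has no repeated leader state in $w$, so $|w|\leq\sizeL$. The recursion is \emph{not} over arbitrary pairs of valid sketches but strictly along the order (the number of first writes): a short $z$ of order $k{+}1$ is split as $z=x\otimes y$ with $\ord(x)=k$, $\ord(y)=1$, one checks $\LValid_\beta(x\times y)\wedge\CValid^{k+1}_\beta(x\times y)$ for the last first write $\beta_{k+1}$, and recurses on $x$ along $\beta_1\dots\beta_k$. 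The key technical lemma is that the shrink operator (erasing leader-state repetitions) preserves the regular language $\FullExpr(\cdot,\beta)$ of writes available to contributors, so compressing after each first write loses nothing needed for later contributor-validity checks; this works precisely because $\CValid^{i}$ depends only on \emph{which} earlier first writes have already been provided, not on their exact positions in the run.

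Your completeness idea---segmenting so that each piece ``interacts with a pool of at most $\bigO(\sizeL+\sizeD)$ contributor states''---is not what makes the bound hold and is unlikely to succeed: $\sizeC$ is unbounded in terms of $\sizeL,\sizeD$ and never enters the DP state space at all. The $(\sizeL+\sizeD)^{\bigO(\sizeL+\sizeD)}$ count comes from at most $\sizeD$ first writes, at most $\sizeL$ distinct leader states in a loop-free run, and $\sizeD^{\bigO(\sizeD)}$ many first-write sequences, giving a table of size $(\sizeL\sizeD)^{\bigO(\sizeL)}\cdot\sizeD^{\bigO(\sizeD)}$. Without identifying first-write sequences as the recursion axis, your generic pairwise saturation has no argument that compression after an \emph{arbitrary} composition preserves what is needed to certify later contributor validity (the paper needs the invariant $\FullExpr(x,\beta')=\FullExpr(\Shrink^*(x),\beta')$, which is tied to the first-write structure), nor a clear reason the sketch space stays within the claimed bound.
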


For $\LCR(\sizeC)$, we modify the reachability algorithm from \cite{Chini2018,Meyer2019} so that it outputs interfaces that witness prefixes.
We recall the result on the complexity of the algorithm.
\begin{theorem}[\cite{Chini2018,Meyer2019}]
	\label{Theorem:LCRContributor}
	$\LCR(\sizeC \,)$ can be solved in time $\bigO(\contributortimeReachEmph)$.
\end{theorem}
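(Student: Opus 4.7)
The algorithm rests on the observation that, since contributors are identical and their number is unbounded, only the \emph{set} $S \subseteq Q_C$ of contributor states actually occupied along a computation matters for reachability; there are at most $2^\sizeC$ such sets. The plan is to iterate over every candidate $S$ and, for each, run a polynomial-time consistency check that jointly (i) computes the set $W(S) \subseteq \Domain$ of memory values that contributors can actually write while staying within $S$, and (ii) decides whether the leader can drive itself to a state in $F$ using only reads in $W(S)$ together with values it writes itself.

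The combinatorial core is a fixed-point computation of $W(S)$. A value $b$ is added to $W(S)$ as soon as some contributor has a run from $q^0_C$, visiting only states of $S$, whose reads are all either supplied by the leader or already certified in $W(S)$, and which ends in a transition $\wt{b}$. The fixed point stabilizes in at most $\sizeD$ rounds, each round requiring a reachability check inside $P_C$ restricted to $S$ under an enlarged read alphabet. Once $W(S)$ is fixed, one performs reachability in the product of the leader automaton with the memory, retaining only read transitions whose symbols lie in $W(S)$ or are produced by the leader itself; this is polynomial in $\sizeL$ and $\sizeD$. Multiplying the $2^\sizeC$ outer iterations by the polynomial cost per iteration yields the claimed bound $\bigO(\contributortimeReachEmph)$.

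The step that must be argued carefully, and the one I would expect to be the main obstacle, is soundness and completeness of the subset abstraction. The classical copycat argument of~\cite{Esparza2013} handles it: if some contributor run reaches a state $s \in S$ using only reads in $W(S)$, then arbitrarily many independent copies of that run can be scheduled in parallel, each paused after a chosen prefix, so that the collective supply of writes can be synchronized with the demands of the leader's run and of the other contributors populating $S$. Conversely, every concrete computation projects to a self-consistent subset $S$ that passes the check, so exhaustively iterating over subsets is complete.

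For the interface-outputting modification promised in Section~\ref{Section:Interfaces}, I would, after fixing $S$ and $W(S)$, run the leader-memory reachability search exhaustively rather than merely deciding it, recording every pair $(q,a)$ with $q \in F$ that is reached; the triples $(S, q, a)$ so collected are precisely the interfaces in $\IF$ witnessing prefixes to $F$, since any contributor state in $S$ is by construction reachable and can be populated by a dedicated copycat. This additional bookkeeping is absorbed into the polynomial factor and preserves the overall bound.
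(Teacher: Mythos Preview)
Your overall plan---iterate over subsets $S\subseteq Q_C$ and run a polynomial check per subset---matches the algorithm the paper cites from~\cite{Chini2018,Meyer2019}, but the per-subset check you describe has a real gap. You propose to compute a set $W(S)$ of contributor-supplied values by a fixed point that admits reads ``supplied by the leader'', and only \emph{afterwards} test leader reachability with reads restricted to $W(S)$. The phrase ``supplied by the leader'' cannot be turned into a set independent of the specific leader run without breaking soundness: the leader is a single thread, so a value it writes on one branch is not available on another.

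Concretely, let the leader have two transitions out of $q^0_L$: one writes $z$ and reaches a dead-end state, the other reads $y$ and reaches $F$; and let the contributor need to read $z$ before it can write $y$. There is no computation reaching $F$ (the only leader run that ever writes $z$ gets stuck immediately after), yet your fixed point certifies $z$ as leader-supplied, hence puts $y$ into $W(S)$, hence declares $F$ reachable. The copycat argument you invoke is correct for the contributor side but says nothing about this temporal coupling with the unique leader.

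The cited algorithm avoids the problem by \emph{not} separating the contributor supply from the leader state. It maintains a single table $T$ with $T[S]\subseteq Q_L\times\Domain$, where $(q,a)\in T[S]$ iff some computation reaches leader state $q$ and memory value $a$ while the contributors have visited exactly the states in $S$ (this is recalled as Lemma~\ref{Lemma:CorrectnessReachability} in the appendix). The saturation runs over triples $(S,q,a)$; a contributor transition into a fresh state $p'\notin S$ carries the current triple over into $T[S\cup\{p'\}]$. This joint tracking is the missing idea, and it also yields the interfaces for free: they are exactly the triples $(S,q,a)$ with $(q,a)\in T[S]$ and $q\in F$, so your last paragraph, once grafted onto the correct table, is essentially what the paper does.
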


The second task to solve $\LCL$ is detecting cycles.
We formalize the problem.
It takes an interface and asks for a saturated cycle on a configuration that matches the interface.
\begin{myproblem}
	\problemtitle{Saturated Cycle}
	\problemshort{($\CYC$)}
	\probleminput{A leader contributor system $\lcs = (\Domain, \initmem, P_L, P_C)$ and an interface $I \in \IF$.}
	\problemquestion{Is there a computation $c \rightarrow^+_\sat c$ with $I(c)$?}
\end{myproblem}

We present an algorithm solving $\CYC$ in polynomial time.
Key to the algorithm is a fixed point iteration over certain subgraphs of the contributor.
Details are postponed to Section \ref{Section:Cycles}.
\begin{theorem}
	\label{Theorem:CyclePolyTime}
	$\CYC$ can be solved in time $\bigO(\fixpointtimeEmph)$.
\end{theorem}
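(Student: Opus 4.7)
The plan is to reduce $\CYC$ to a combinatorial property of restricted subgraphs of the contributor and leader automata and to compute this property by a decreasing fixed-point iteration that invokes Tarjan's algorithm in each round. For a subset $V \subseteq \Domain$ with $a \in V$, let $G_C(S, V)$ denote the restriction of $P_C$ to states in $S$ and to transitions whose label is either $\varepsilon$ or has its memory value in $V$, and define $G_L(V)$ analogously for $P_L$. A \emph{witness} for $I = (S, q, a)$ is a pair $(V, \tau)$ where $\tau$ is a non-empty cycle through $q$ in $G_L(V)$ and every value $v \in V$ is either equal to $a$, or written by an edge of $\tau$, or written by an edge that lies in a non-trivial SCC of $G_C(S, V)$.

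The first step is to prove the characterisation: a saturated cycle $c \rightarrow^+_\sat c$ with $I(c)$ exists if and only if a witness for $I$ exists. The forward direction takes $V$ to be the set of memory values appearing along the cycle and $\tau$ to be its leader projection; every contributor that moves traces a non-trivial closed walk through its starting state in $G_C(S, V)$, placing that state in a non-trivial SCC, and the writes of the cycle distribute over these SCCs and $\tau$. The backward direction is the main obstacle: given $(V, \tau)$ one must realise it as an honest cyclic computation. The construction runs $\tau$ step by step and, before every leader read of a value $v$ not currently in memory, lets a dedicated contributor starting in an SCC of $G_C(S, V)$ that contains a writer of $v$ traverse a closed walk producing $v$. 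Because such walks are repeatable arbitrarily often, enough writes are supplied to cover every read; the remaining contributors stay in their initial states, so $\proj{C}(c') \subseteq S$ is maintained throughout. The delicate point is scheduling the contributor walks with the leader steps so that each read is preceded by a matching write while ensuring that the overall computation closes back to $c$.

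I would then turn the characterisation into an algorithm. The set of $V$ that can be the first component of a witness is closed under unions, hence admits a unique maximum $V^\ast$ which can be computed by monotone shrinking. Starting from $V_0 = \Domain$, in round $i$ I compute the SCCs of $G_C(S, V_i)$ by Tarjan's algorithm, determine the set $W_i$ of values written either inside a non-trivial SCC of $G_C(S, V_i)$ or by some leader cycle through $q$ in $G_L(V_i)$, and update $V_{i+1} = (V_i \cap W_i) \cup \{a\}$. Since $V$ strictly shrinks before stabilising, the fixed point is reached within $\sizeD$ rounds, and the $\CYC$ instance is accepted iff $G_L(V^\ast)$ still contains a cycle through $q$.

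For the running time, each round invokes Tarjan on $G_C(S, V_i)$ at cost $\bigO(\sizeC^2)$ and performs, for each candidate value $v \in V_i$, a reachability query inside the leader graph $G_L(V_i)$ at cost $\bigO(\sizeL^2 \cdot \sizeD^2)$; summed over the $\bigO(\sizeD)$ candidates this is $\bigO(\sizeD \cdot (\sizeC^2 + \sizeL^2 \cdot \sizeD^2))$ per round, and $\bigO(\sizeD^2 \cdot (\sizeC^2 + \sizeL^2 \cdot \sizeD^2)) = \bigO(\fixpointtimeEmph)$ overall.
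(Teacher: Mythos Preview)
Your overall plan---reduce $\CYC$ to a fixed-point property of restricted graphs and compute the greatest fixed point by shrinking in at most $\sizeD$ Tarjan rounds---matches the paper's strategy, and the running-time accounting is right. But the characterisation you propose is wrong, and with it the algorithm gives false positives.

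The flaw is that $G_L(V)$ does not track the memory content. Your witness definition allows a value $v\in V$ to be justified solely by ``written by an edge of $\tau$'', yet the leader may need to \emph{read} $v$ on $\tau$ before it writes $v$, with no contributor able to supply it. Concretely, take $\Domain=\{a,v\}$ with $a\neq v$, a leader with exactly the transitions $q\xrightarrow{\rd{v}}q_1\xrightarrow{\wt{v}}q$, and a contributor state $p_0\in S$ with no outgoing transitions. Your iteration starts at $V_0=\{a,v\}$; the only leader cycle in $G_L(V_0)$ is $q\to q_1\to q$, which writes $v$, so $W_0=\{v\}$ and $V_1=(V_0\cap W_0)\cup\{a\}=\{a,v\}=V_0$. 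You accept, but no computation $c\to^+_{\sat}c$ with $I(c)=(S,q,a)$ exists: from $(q,a,\ldots)$ the leader's only move is $\rd{v}$, which is blocked since memory holds $a$.

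The paper avoids this by working with a leader automaton $P_{L'}$ on states $Q_L\times\Domain$ that carries the current memory value, restricts leader reads to values in $\Writes_C$ (the writes available from contributor SCCs), and allows the memory to silently change only to values in $\Writes_C$. Thus $\Writes_L$ is defined relative to $\Writes_C$, not relative to all of $V$; this asymmetry is exactly what rules out your counterexample and is the missing idea in your proposal. A second, smaller divergence is that the paper's contributor graph $\restrictionGraph{S}{\Gamma}$ keeps \emph{all} write edges and restricts only reads to $\Gamma$; your $G_C(S,V)$ also prunes write edges, which is not wrong per se but is not the invariant the correctness proof uses.

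Finally, you note that ``the delicate point is scheduling'' the contributor walks with the leader and then leave it open. That scheduling is precisely the content of the backward direction in the paper (a two-phase balanced computation that first feeds contributor-to-contributor reads, then interleaves the leader run with block moves); merely asserting repeatability of closed walks does not establish it. Without the memory-aware leader automaton, though, no scheduling argument can succeed on the instance above.
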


The theorem shows that $\cycle$ is polynomial.
Hence, by Theorem \ref{Theorem:TimeLCL}, we obtain that $\LCL$ can be solved in time $\bigOS(\reach)$.
This means that liveness verification and safety verification in the leader contributor model only differ by a polynomial factor.
Taking the precise values for $\reach$ into account, Theorem \ref{Theorem:TimeLCL} yields the following.
\begin{corollary}
	\label{Corollary:LCLLEaderDomainTime}
	$\LCL(\sizeL,\sizeD)$ can be solved in time $\leaderdomaintimeEmph$.
\end{corollary}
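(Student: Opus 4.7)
The plan is to chain together the three preceding theorems. By Theorem~\ref{Theorem:TimeLCL} the problem $\LCL$ can be solved in time $\bigO(\reachEmph \cdot \cycleEmph)$, so it suffices to supply concrete bounds for $\reach$ and $\cycle$ under the parameterization by $\sizeL$ and $\sizeD$ and then multiply them.

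For $\reach$, I would invoke Theorem~\ref{Theorem:LCRLeaderDomain}, yielding $\reach = (\sizeL+\sizeD)^{\bigO(\sizeL+\sizeD)}$. For $\cycle$, I would invoke Theorem~\ref{Theorem:CyclePolyTime}, yielding $\cycle = \sizeD^{2} \cdot (\sizeC^{2} + \sizeL^{2} \cdot \sizeD^{2})$, a polynomial in the input size. The total running time is then the product
\[
(\sizeL+\sizeD)^{\bigO(\sizeL+\sizeD)} \cdot \sizeD^{2} \cdot (\sizeC^{2} + \sizeL^{2} \cdot \sizeD^{2}).
\]
Since the second factor is polynomial in the input and the first is super-polynomial in $\sizeL+\sizeD$, the polynomial factor is absorbed into the FPT factor in the standard way, giving the claimed bound $(\sizeL+\sizeD)^{\bigO(\sizeL+\sizeD)}$.

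The only subtle point is that Theorem~\ref{Theorem:TimeLCL} is not a black-box reduction from $\LCL$ to the decision problem $\LCR$: it requires a reachability procedure that additionally enumerates the interfaces witnessing prefixes to the final states, within the stated $\reach$ bound. The main obstacle in making the present argument formal is therefore to check that the algorithm underlying Theorem~\ref{Theorem:LCRLeaderDomain} (developed in Section~\ref{Section:ParameterizationLeaderDomain}) can be modified to emit these interfaces without asymptotic overhead. Once that modification is in place, plugging the two time bounds into Theorem~\ref{Theorem:TimeLCL} and simplifying yields the corollary.
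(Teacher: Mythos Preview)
Your proposal is correct and matches the paper's own argument: combine Theorem~\ref{Theorem:TimeLCL} with the bounds from Theorems~\ref{Theorem:LCRLeaderDomain} and~\ref{Theorem:CyclePolyTime}, and absorb the polynomial cycle-detection cost into the $(\sizeL+\sizeD)^{\bigO(\sizeL+\sizeD)}$ factor. Your caveat about the interface-enumeration requirement is also exactly the point the paper addresses at the end of Section~\ref{Section:ParameterizationLeaderDomain}, so nothing is missing.
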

\begin{corollary}
	\label{Corollary:LCLContributorTime}
	$\LCL(\sizeC\,)$ can be solved in time $\bigO(\contributortimeEmph)$.
\end{corollary}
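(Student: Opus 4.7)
The plan is to instantiate the decomposition strategy of Theorem~\ref{Theorem:TimeLCL} with the reachability algorithm of Theorem~\ref{Theorem:LCRContributor} and the cycle detection of Theorem~\ref{Theorem:CyclePolyTime}, but to sharpen the accounting of how many interfaces are actually passed to the cycle detection. For the $(\sizeC)$ parameterization the set of all interfaces $\IF = \Powerset(Q_C) \times Q_L \times \Domain$ has cardinality at most $2^{\sizeC} \cdot \sizeL \cdot \sizeD$, which is asymptotically smaller than the running time $\reach$ that Theorem~\ref{Theorem:TimeLCL} would multiply by $\cycle$. Exploiting this tighter count, rather than the generic product, is what allows matching the stated bound.

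The algorithm proceeds in two phases. In the first phase, I invoke the reachability algorithm of Theorem~\ref{Theorem:LCRContributor}, modified as announced before Theorem~\ref{Theorem:TimeLCL} to output the set $\mathcal{I}$ of all interfaces $(S,q,a)$ with $q \in F$ that are witnessed by some initialized prefix. This costs $\bigO(\contributortimeReachEmph)$, and clearly $|\mathcal{I}| \leq |\IF| \leq 2^{\sizeC} \cdot \sizeL \cdot \sizeD$. In the second phase, I iterate over each $I \in \mathcal{I}$ and invoke the cycle-detection procedure of Theorem~\ref{Theorem:CyclePolyTime} at cost $\bigO(\fixpointtimeEmph)$ per call, returning \emph{yes} iff at least one call succeeds. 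Correctness is immediate from Lemmas~\ref{Lemma:SplittingLemma} and~\ref{Lemma:GlueComputations}: a live computation exists precisely when some $I \in \mathcal{I}$ admits a saturated cycle.

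It remains to estimate the total running time. Summing the two phases yields
\[
\bigO(\contributortimeReachEmph) \,+\, 2^{\sizeC} \cdot \sizeL \cdot \sizeD \cdot \bigO(\fixpointtimeEmph),
\]
which expands to $\bigO\bigl(2^{\sizeC} \cdot (\sizeL^{2} \cdot \sizeC^{4} \cdot \sizeD^{2} + \sizeL \cdot \sizeC^{2} \cdot \sizeD^{3} + \sizeL^{3} \cdot \sizeD^{5})\bigr)$. Factoring out $\sizeL \cdot \sizeD^{2}$ rewrites this as $\bigO\bigl(2^{\sizeC} \cdot \sizeL \cdot \sizeD^{2} \cdot (\sizeL \cdot \sizeC^{4} + \sizeD \cdot \sizeC^{2} + \sizeL^{2} \cdot \sizeD^{3})\bigr) = \bigO(\contributortimeEmph)$, as required. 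The only subtle point, which I expect to be the main obstacle to verify carefully, is that the modified reachability algorithm must produce the list $\mathcal{I}$ within its stated time budget, so that we do not pay an additional $|\mathcal{I}|$ factor merely to enumerate it; this is handled by the same interface-tracking modification that justified Theorem~\ref{Theorem:TimeLCL}, where each surviving partial computation is charged to a unique interface.
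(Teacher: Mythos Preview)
Your proposal is correct and follows essentially the same approach as the paper: invoke the $\LCR(\sizeC)$ algorithm to compute the table of reachable interfaces, bound the number of interfaces by $2^{\sizeC}\cdot\sizeL\cdot\sizeD$ rather than by $\reach$, and multiply by the cycle-detection cost from Theorem~\ref{Theorem:CyclePolyTime}. Your arithmetic and the factoring into $\bigO(\contributortimeEmph)$ match the paper's computation exactly.
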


For the latter result, we are actually more precise in determining the time complexity than stated in Theorem \ref{Theorem:TimeLCL}.
Both obtained algorithms are optimal.
They match the corresponding lower bounds for $\LCL$ that carry over from reachability \cite{Chini2018}.
Unless $\ETH$ fails, $\LCL$ cannot neither be solved in time $2^{o((\sizeL + \sizeD) \cdot \log(\sizeL + \sizeD))}$ nor in time $2^{o(\sizeC)}$.

\section{Reachability Parameterized by Leader and Domain}
\label{Section:ParameterizationLeaderDomain}

We present the algorithm for $\LCR(\sizeL,\sizeD)$.
It runs in time $\leaderdomaintimeEmph$ and therefore proves Theorem \ref{Theorem:LCRLeaderDomain}.
Moreover, with the results from Section \ref{Section:Interfaces} and \ref{Section:Cycles}, the algorithm can be utilized for solving $\LCL$ in time $\leaderdomaintimeEmph$.
Like in \cite{Chini2018}, the algorithm relies on a notion of witnesses.
These are sketches of computations.
A witness is valid if there is an actual computation following the sketch.
Validity can be checked in polynomial time.

The algorithm from \cite{Chini2018} iterates over all witnesses and tests validity for each.
Hence, the time complexity of the algorithm is proportional to $(\sizeL \sizeD)^{\bigO(\sizeL \sizeD)}$, the number of considered witnesses.
Key to our new algorithm is the fact that we can restrict to so-called short witnesses.
These are sketches of loop-free computations.
We show that validity of witnesses can be checked inductively from validity of short witnesses.
We exploit the inductivity by a dynamic programming.
It runs in time proportional to $\leaderdomaintimeEmph$, the number of short witnesses.
This yields the desired complexity as stated in Theorem \ref{Theorem:LCRLeaderDomain}.

\subsection{Witnesses and Validity}
\label{Section:Witnesses}

We introduce witnesses and recall the notion of validity.
Afterwards, we elaborate on the main idea of our new algorithm:
restricting to short witnesses for checking validity.

Intuitively, a witness is a compact way to represent computations of a leader contributor system.
From a computation, a witness only stores the actions of the leader and the positions where memory symbols were written by a contributor for the first time.
We call these positions \emph{first writes}.
From such a position on, we can assume an unbounded supply of the corresponding memory symbol.
There is always a copy of a contributor waiting to provide it.

Formally, a \emph{witness} is a triple $x = (w,q,\sigma)$.
The word $w = (q_1, a_1)  (q_2 , a_2) \ldots (q_n, a_n)$ represents the run of the leader.
It is a sequence from $(Q_L \times (\Domain \uplus \setcon{\bot}))^*$, containing leader states potentially combined with a memory value.
The state $q \in Q_L$ is the target of the leader run.
First-write positions are specified by $\sigma: [1..k] \rightarrow [1..n]$, a monotonically increasing map where $k \leq \sizeD$.
The number of first-write positions $k$ is called the \emph{order} of $x$.
We denote it by $\ord(x) = k$.
Moreover, we use $\Wit$ for the set of all witnesses.
A witness $x = (w,q,\sigma) \in \Wit$ is called \emph{initialized} if $w$ begins in the initial state $q^0_L$ of the leader automaton.

If a witness corresponds to an actual computation, we call it valid.
This means, the witness encodes a proper run of the leader and moreover, the first writes along the run can be provided by the contributors.
Since the definition of witnesses only specifies first-write positions but not values, we need the notion of first-write sequences.
The latter will allow for the definition of validity.
 
A \emph{first-write sequence} is a sequence of data values $\beta \in \Domain^{\leq \sizeD}$ that are all different.
Formally, $\beta_i\neq \beta_j$ for $i\neq j$. 
We use $\FW$ to denote the set of all those sequences.
Given a witness $x = (w,q,\sigma)$, we define its validity with respect to a first-write sequence $\beta$ of length $\ord(x)$.
For being valid, $x$ has to be \emph{leader valid along} $\beta$ and \emph{contributor valid along} $\beta$.
We make both notions more precise.
Details regarding this section including formal definitions are available in Appendix \ref{Section:AppendixParameterizationLeaderDomain}.
 
\subparagraph*{Leader Validity.}
The witness is \emph{leader valid along} $\beta$ if $w$ encodes a run of the leader that reaches state $q$.
Reading during the run is restricted to symbols from $\beta$:
the $\ell$-th symbol $\beta_\ell$ is available for reading once the run arrives at position $\sigma(\ell)$.
Formally, the encoding depends on the memory values $a_i$.
If $a_i \neq \bot$, the leader has a transition $q_i \XXrightarrow{\wt{a_i}}{L} q_{i+1}$.
If $a_i = \bot$, the leader either has an $\varepsilon$-transition or reads a symbol available at position $i$, from the set $S_\beta(i) = \Setcon{\beta_{\ell}}{\sigma(\ell) \leq i}$.
We use $\LValid_\beta(x)$ to indicate that $x$ is leader valid along $\beta$.
 
\subparagraph*{Contributor Validity.}
The witness is \emph{contributor valid along} $\beta$ if the contributors can provide the first writes for $w$ in the order indicated by~$\sigma$.
Let us focus on the $i$-th first write~$\beta_i$.
Providing $\beta_i$ is a question of reachability of the set $Q_i = \Setcon{p}{\exists p' : p \XXrightarrow{\wt{\beta_i}}{C} p' }$ in the contributor automaton.
More precise, we need a contributor that reaches $Q_i$ while reading only symbols available along $w$.
This means that reading is restricted to earlier first writes and symbols written by the leader during $w$ up to position $\sigma(i)$.

Let $\Expr(x, \beta_1\dots \beta_{i-1})$ be the language of available reads.
We say that $x$ is \emph{valid for the $i$-th first write of} $\beta$ if $Q_i$ is reachable by a contributor while reading is restricted to $\Expr(x, \beta_1\dots \beta_{i-1})$.
We use  $\CValid^i_\beta(x)$ to indicate this validity.
If $x$ is valid for all first writes, it is \emph{contributor valid along} $\beta$.
Formally, $\CValid_\beta(x) = \bigwedge_{i \in [1..\ord(x)]} \CValid^i_\beta(x)$.

With leader and contributor validity in place, we can define $x$ to be \emph{valid along} $\beta$ if $\LValid_\beta(x) \wedge \CValid_\beta(x)$.
Again, we use predicate notation.
We write $\Valid_\beta(x)$ if $x$ is valid along $\beta$.
Validity of a witness along a first-write sequence can be checked in polynomial time.
\begin{lemma}
	\label{Lemma:ValidityPolyTime}
	Let $x \in \Wit$ and $\beta \in \FW$.
	$\Valid_\beta(x)$ can be evaluated in polynomial time.
\end{lemma}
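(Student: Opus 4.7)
The plan is to check the two conjuncts $\LValid_\beta(x)$ and $\CValid_\beta(x)$ separately, and argue that each is polynomial in the size of $\lcs$ and of the witness.

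First I would handle leader validity by a single left-to-right scan of the word $w = (q_1,a_1)\dots(q_n,a_n)$. Concurrently I maintain the set $S_\beta(i)$ of currently available first-write symbols, which is easy to update incrementally: when $i$ passes a threshold $\sigma(\ell)$, I insert $\beta_\ell$ into $S_\beta(i)$. At each position I check the appropriate local condition against $\delta_L$: if $a_i \neq \bot$ I look up a write transition $q_i \XXrightarrow{\wt{a_i}}{L} q_{i+1}$; if $a_i = \bot$ I check whether $q_i \XXrightarrow{\varepsilon}{L} q_{i+1}$ or some $q_i \XXrightarrow{\rd{a}}{L} q_{i+1}$ with $a \in S_\beta(i)$ exists. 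Finally I verify that $q_{n+1} = q$ is the declared target. Each of the $n \leq |w|$ steps is a constant (or linear in $\sizeD$) membership and transition lookup, so this phase is clearly polynomial.

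Next, for contributor validity, I would iterate $i = 1, \dots, \ord(x)$ and, for each $i$, build the restricted contributor automaton obtained from $P_C$ by keeping only $\varepsilon$-transitions, write transitions, and read transitions $p \XXrightarrow{\rd{a}}{C} p'$ with $a \in \Expr(x, \beta_1\dots\beta_{i-1})$. The set $\Expr(x, \beta_1\dots\beta_{i-1})$ is computed by scanning the prefix of $w$ up to position $\sigma(i)$ and collecting earlier first writes $\beta_1,\dots,\beta_{i-1}$ together with leader-written symbols appearing in $w$ up to that position; this is again a linear scan. I then run a standard graph reachability (BFS/DFS) from $q^0_C$ in the resulting automaton and check whether any state of $Q_i = \Setcon{p}{\exists p' : p \XXrightarrow{\wt{\beta_i}}{C} p'}$ is reached, which certifies $\CValid^i_\beta(x)$. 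This is polynomial per $i$, and there are at most $\sizeD$ values of $i$, so the whole contributor-validity check is polynomial.

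Putting the two parts together gives $\Valid_\beta(x)$ in polynomial time. The only subtle point, and arguably the main obstacle, is making sure the set $\Expr(x, \beta_1\dots\beta_{i-1})$ is described exactly as in the definition so that the restricted reachability really does coincide with the quantifier ``a contributor reaches $Q_i$ while reading only symbols available along $w$ up to $\sigma(i)$''; once that correspondence is set up, the algorithmic bound follows because both $\LValid$ and each $\CValid^i$ reduce to a linear-time pass over $w$ plus a polynomial-time reachability query in a subgraph of $P_C$. A more refined analysis of these steps will, of course, be needed in Section~\ref{Section:ParameterizationLeaderDomain} when we want to plug concrete polynomial bounds into the dynamic programming, but for the mere polynomial-time claim in Lemma~\ref{Lemma:ValidityPolyTime} the above decomposition suffices.
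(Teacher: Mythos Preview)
Your treatment of leader validity is fine and matches the paper. The gap is in contributor validity: you treat $\Expr(x,\beta_1\dots\beta_{i-1})$ as a \emph{set} of data symbols and then do plain reachability in $P_C$ restricted to reads from that set. In the paper, however, $\Expr(x,\alpha)$ is a regular \emph{language}
\[
\Expr(x,\alpha)=\Gamma_1^*\{a_1,\varepsilon\}\Gamma_2^*\{a_2,\varepsilon\}\dots\Gamma_j^*,\qquad \Gamma_i=\Loop(q_i,S_\alpha(i))\cup S_\alpha(i),
\]
and $\CValid^i_\beta(x)$ asks whether $\Expr(x,\beta_1\dots\beta_{i-1})\cap h(\Trace_C(Q_i))\neq\emptyset$. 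The order of the contributor's reads is constrained by the leader's progress: a symbol that only becomes available at stage $\Gamma_j$ cannot be read before the leader has advanced there. Your flat-set reachability ignores this ordering and can produce false positives (e.g.\ if the leader writes $a$ then $b$, a contributor that must read $b$ before $a$ would be wrongly accepted). You also omit the computation of the sets $\Loop(q_i,S_\alpha(i))$, which are not just ``leader-written symbols appearing in $w$'' but all symbols writable on \emph{loops} of $P_L$ at $q_i$ with reads confined to $S_\alpha(i)$.

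The fix is what the paper does: build a polynomial-size NFA for $\Expr(x,\beta_1\dots\beta_{i-1})$ (first computing each $\Loop(q_i,S_\alpha(i))$ by a reachability/SCC argument in a restricted copy of $P_L$, then chaining the $\Gamma_i^*$ blocks), build an NFA for $h(\Trace_C(Q_i))$ from $P_C$ by projecting writes to $\varepsilon$, and test emptiness of the product. Each of these steps is polynomial, and there are at most $\sizeD$ values of $i$, so the overall bound still holds---but the argument has to go through language intersection, not subgraph reachability.
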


The algorithm from \cite{Chini2018} iterates over witnesses and invokes Lemma \ref{Lemma:ValidityPolyTime} to check validity.
The following lemma proves the correctness:
validity indicates the existence of a computation.
\begin{lemma}
	\label{Lemma:Witnesses}
	Let $q\in Q_L$. 
	There is an initialized computation $c^0 \rightarrow^* c$ with $\proj{L}(c) = q$ if and only if there is an initialized $x = (w, q, \sigma)\in \Wit$ and a $\beta \in \mathit{FW}$ so that $\Valid_\beta(x)$.
\end{lemma}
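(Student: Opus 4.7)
The plan is to prove the biconditional by extracting a valid witness from any reaching computation in one direction and realizing a valid witness as a concrete reaching computation by scheduling many contributor copies in the other. The forward direction is essentially bookkeeping, while the backward direction is the real content and requires a careful interleaving argument.

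For the forward direction, given an initialized computation $c^0 \rightarrow^* c$ with $\proj{L}(c) = q$, I would read off the leader's trace from the sequence of global transitions: position $i$ of $w$ carries the leader state $q_i$ together with the value $a_i$ if the leader writes at step $i$, and $\bot$ otherwise (leader read, $\varepsilon$-transition, or a contributor step). The first-write sequence $\beta$ is the ordered list of distinct values that contributors write for the first time along the run, and $\sigma(\ell)$ records the leader-position at which $\beta_\ell$ is produced. Leader validity is immediate: every successful leader read $\rd{a}$ saw a symbol that is either a leader write performed earlier in $w$ or one of the $\beta_\ell$ with $\sigma(\ell) \le i$, i.e.\ an element of $S_\beta(i)$. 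Contributor validity for index $i$ is witnessed by the very contributor that performed the first $\wt{\beta_i}$; by definition its earlier reads only saw values belonging to $\Expr(x,\beta_1\ldots\beta_{i-1})$.

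For the backward direction, I would construct a computation from a valid witness $(w,q,\sigma)$ and first-write sequence $\beta$. Contributor validity hands me, for each $i \in [1..\ord(x)]$, a contributor run $\rho_i$ that reaches $Q_i$ using only reads from $\Expr(x,\beta_1\ldots\beta_{i-1})$. I would then proceed by induction on $i$, maintaining the invariant that after the simulation reaches leader position $\sigma(i)$, all first writes $\beta_1,\ldots,\beta_i$ have been established in memory and unboundedly many contributors stand ready to replay each of them. Between $\sigma(i-1)$ and $\sigma(i)$, I play the leader's steps as prescribed by $w$; just before $\sigma(i)$ I run a fresh contributor through $\rho_i$ and its concluding $\wt{\beta_i}$, thereby extending the pool of available first-writers. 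Since the number of participating contributors is arbitrary, there is always a fresh copy for every role.

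The main obstacle is showing that the standing-contributor scheduling can be carried out consistently: whenever the leader, or an active $\rho_i$, demands a read of some symbol $a$, one must place $a$ into memory without perturbing the intended run. This works only because once $a$ has been written — by the leader in $w[1..\sigma(i)]$ or as some earlier $\beta_j$ — any number of fresh contributors may replay the prefix that produced it, so the available alphabet in memory at the point where $\rho_i$ executes matches $\Expr(x,\beta_1\ldots\beta_{i-1})$ exactly. Formalising this "copycat" bookkeeping, and checking that the read alphabet invariant is preserved across the inductive step, is the delicate part of the argument; the rest reduces to concatenating the $\rho_i$ segments with the leader's trace.
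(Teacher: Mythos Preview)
Your forward direction is essentially the paper's; the one slip is listing ``a leader write performed earlier in $w$'' as a possible source for a leader read and then calling it an element of $S_\beta(i)$. By definition $S_\beta(i)$ contains only first-write values, and under the standing assumption that the leader does not read back its own writes every leader read is indeed of such a value.

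The backward direction has the right skeleton but a genuine gap in the scheduling. You propose to play the leader along $w$ and then, ``just before $\sigma(i)$'', run a fresh contributor through $\rho_i$, relying on copycats because ``once $a$ has been written \ldots\ any number of fresh contributors may replay the prefix that produced it''. That is correct when $a$ is an earlier first write $\beta_j$, but it fails for the two other kinds of reads that $\Expr(x,\beta_1\ldots\beta_{i-1}) = \Gamma_1^*\{a_1,\varepsilon\}\Gamma_2^*\cdots\Gamma_{\sigma(i)}^*$ permits. First, the letter $a_j$ (when $a_j\neq\bot$) is a \emph{leader} write tied to position $j$ of $w$; no contributor can reproduce it, so if $\rho_i$ reads $a_j$ it must do so right after the leader passes position $j$, which forces $\rho_i$ to be interleaved with the leader's run rather than executed in one block at the end. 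Second, the sets $\Gamma_j$ contain $\Loop(q_j,S_\alpha(j))$: values the leader can write only on a detour loop at state $q_j$, a loop that is \emph{not recorded in $w$ at all}. To supply such a read you must splice additional leader cycles into the run at position $j$. Neither of these is covered by copycats, and running $\rho_i$ at position $\sigma(i)$ is too late for both.

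The paper's construction handles precisely these two points. It classifies every read of each $\rho_i$ as leader-write, contributor-write, or loop-write; for loop-writes it inserts the witnessing leader loop into the transition sequence at the correct position (extending $t$ to $t^*$); and it then advances the leader and all the $\rho_i$ together via index handles, so that leader-write reads are consumed the moment the leader produces them. Your copycat idea is exactly the right mechanism for the contributor-write case, and the paper uses it too (taking $\#_d$ copies per symbol), but it cannot replace the interleaving and the loop insertion that the other two cases require.
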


For obtaining a tractable algorithm, we would like to restrict to short witnesses when checking validity.
These are witnesses encoding a loop-free run of the leader.
The following two observations are crucial to our development.

Leader validity can be checked inductively on short witnesses.
A witness $x$ can be written as a product $x = x_1 \times x_2 \times \dots \times x_{k+1}$ of smaller witnesses.
Each $x_i$ encodes that part of the leader run of $x$ happening between two first-write positions $\sigma(i-1)$ and $\sigma(i)$.
The \emph{witness concatenation} $\times$ appends these runs.
Each $x_i$ can assumed to be a short witness.
There is no need for recording loops of the leader between first writes.
We can cut them out.

Assume $y = x_1 \times \dots \times x_i$ encodes a proper run $\rho$ of the leader that reads from the available first writes $\beta_1, \dots, \beta_{i-1}$.
Formally, $\LValid_{\beta_1 \dots \beta_{i-1}}(y)$.
Then, leader validity of $y \times x_{i+1}$ along $\beta_1 \dots \beta_i$ mainly depends on the newly added witness $x_{i+1}$.
The reason is that we prolong~$\rho$, a run of the leader that was already verified.
All that we have to remember from $\rho$ is where it ends.
This means that we can shrink $y$ to a short witness.
We consecutively cut out loops from the leader, denoted by $\Shrink^*$, until we obtain a loop free witness.
Formally, if $\LValid_{\beta_1 \dots \beta_{i-1}}(y)$ holds true, we have the equality
\begin{align*}
	\LValid_{\beta_1 \dots \beta_i}(y \times x_{i+1}) = \LValid_{\beta_1 \dots \beta_i}(\Shrink^*(y) \times x_{i+1}).
\end{align*}
Hence, checking leader validity can be restricted to (concatenations of) short witnesses.

Like leader validity, we can restrict contributor validity to short witnesses.
The main reason is that testing validity for the $i$-th first write only requires limited knowledge about earlier first writes.
As long as we guarantee that earlier first writes can be provided along a run of the leader, we do not have to keep track of their precise positions anymore.
This means that we can shrink the run when testing validity for the $i$-th first write.

Assume that $y = x_1 \times \dots \times x_i$ is known to be contributor valid.
Formally, $\CValid_{\beta_1 \dots \beta_{i-1}}(y)$ is true.
Note that the first writes considered in $y$ are $\beta_1, \dots, \beta_{i-1}$.
We want to check contributor validity of $y \times x_{i+1}$.
Since there is only one new first write that we add, namely~$\beta_i$, we have to evaluate $\CValid^i_{\beta_1 \dots \beta_i}(y \times x_{i+1})$.
Satisfying contributor validity means that $\beta_i$ can be provided along $y \times x_{i+1}$ assuming that $\beta_1, \dots, \beta_{i-1}$ were already provided.
In fact, it is not important where these earlier first writes appeared exactly.
We just need the fact that after $y$, they can assumed to be there.
This allows for shrinking $y$ and forgetting about the precise positions of the earlier first writes.
Formally, if $\CValid_{\beta_1 \dots \beta_{i-1}}(y)$, we have
\begin{align*}
	\CValid^i_{\beta_1 \dots \beta_i}(y \times x_{i+1}) = \CValid^i_{\beta_1 \dots \beta_i}(\Shrink^*(y) \times x_{i+1}).
\end{align*}

In the next section, we turn the above observations into a recursive definition of validity for short witnesses.
The recursion only involves short witnesses of lower order.
Since the number of these is bounded by $\leaderdomaintimeEmph$, we can employ a dynamic programming that checks validity of short witnesses in time proportional to their number.

\subsection{Algorithm and Correctness}
\label{Section:AlgorithmAndCorrectness}

Before we can formulate the recursion, we need to introduce short witnesses and a concatenation operator on the same.
A \emph{short witness} is a witness $z = (w,q,\sigma) \in \Wit$ where the leader states in $w = (q_1, a_1)\dots(q_n, a_n)$ are all distinct.
We use $\SWit$ to denote the set of all short witnesses.
Moreover, let $\Ord(k)$ denote the set of those short witnesses that are of order $k$.

Let $x = (w,q,\sigma) \in \Ord(i)$ and $y = (w',q',\sigma') \in \Ord(j)$ be two short witnesses.
Assume that the first state in $w'$ is $q$, meaning that $y$ starts with the target state of $x$.
Then, the \emph{short concatenation} of $x$ and $y$ is defined to be the short witness $x \otimes y = \Shrink^*(x \times y) \in \Ord(i+j)$.

The price to pay for the smaller number of short witnesses is a more expensive check for validity.
Rather than checking validity once for each short witness, we build them up by a recursion along the order, and check validity for each composition.
Let $z$ be a short witness.
If $\ord(z) = 0$, there are no first-write positions.
Only leader validity is important:
\begin{align*}
	\SValid_\varepsilon(z) =  \LValid_\varepsilon(z).
\end{align*}
For a short witness $z$ of order $k+1$, we define validity along $\beta = \beta_1 \dots \beta_{k+1} \in \FW$ by
\begin{align*}
	\SValid_\beta(z)  =  \bigvee_{\substack{x \in \Ord(k)\\  y \in \Ord(1)}}
	[z = x \otimes y] \wedge
	\LValid_\beta(x \times y ) \wedge
	\CValid^{k+1}_\beta(x \times y) \wedge
	\SValid_{\beta'}(x).
\end{align*}
Here $\beta' = \beta_1 \dots \beta_k$ is the prefix of $\beta$ where the last element is omitted.

The idea behind the recursion is to cut off the last first write $\beta_{k+1}$, check its validity, and recurse on the remaining part.
To this end, $z$ is decomposed into two short witnesses $x \in \Ord(k)$ and $y \in \Ord(1)$.
Intuitively, $x$ is the compression of a larger witness that is already known to be valid and $y$ is the short witness responsible for the last first write.
By our considerations above, we already know that it suffices to check validity for $\beta_{k+1}$ with $x$ instead of its expanded form.
These are the evaluations $\LValid_\beta(x \times y)$ and $\CValid^{k+1}_\beta(x \times y)$.
To guarantee validity along $\beta'$, we recurse on $\SValid_{\beta'}(x)$.

The following lemma shows the correctness of the recursion.
Using Lemma~\ref{Lemma:Witnesses}, we can work with short witnesses to discover computations in the given leader contributor system.
\begin{lemma}
	\label{Lemma:RestrictionToShortWitnesses}
	Let $q \in Q_L$ and $\beta \in \FW$.
	There is an $x = (w,q,\sigma) \in \Wit$ with $\Valid_\beta(x)$ if and only if there is an $z = (w',q,\sigma') \in \SWit$ with $\SValid_\beta(z)$.
	In this case, $\init(x) = \init(z) $.
\end{lemma}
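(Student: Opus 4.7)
The plan is to prove the equivalence by induction on $k=\ord(x)=\ord(z)=|\beta|$. Throughout I write $\beta'=\beta_1\dots\beta_k$ when $\beta=\beta_1\dots\beta_{k+1}$, and use that $\Shrink^*$ preserves the target state $q$, the initial leader state and every first-write label, so that $\init(x)=\init(\Shrink^*(x))$ and in the forward direction I may always take $z:=\Shrink^*(x)$.

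In the base case $k=0$ there are no first writes, so $\CValid_\beta$ is vacuous and both notions of validity collapse to $\LValid_\varepsilon$. For the direction $(\Rightarrow)$ I set $z=\Shrink^*(x)$; it is short, still satisfies $\LValid_\varepsilon$ (cutting a leader loop preserves reachability of $q$), and inherits target and initial state from $x$. For $(\Leftarrow)$ I use $x=z$, since every short witness is a witness.

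In the inductive step, for the forward direction let $x$ have order $k+1$ with $\Valid_\beta(x)$ and split $x=x^{pre}\times x^{post}$ at the last first-write position, so $\ord(x^{pre})=k$ and $\ord(x^{post})=1$. Both $\LValid_{\beta'}(x^{pre})$ and $\CValid_{\beta'}(x^{pre})$ are restrictions of the corresponding properties of $x$ to its prefix. By the induction hypothesis I obtain $x'=\Shrink^*(x^{pre})\in\Ord(k)$ with $\SValid_{\beta'}(x')$; setting $y=\Shrink^*(x^{post})\in\Ord(1)$ and $z=x'\otimes y$ I discharge the three conjuncts in the recursive definition of $\SValid_\beta(z)$ as follows. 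The conjuncts $\LValid_\beta(x'\times y)$ and $\CValid^{k+1}_\beta(x'\times y)$ follow from $\LValid_\beta(x^{pre}\times x^{post})=\LValid_\beta(x)$ and $\CValid^{k+1}_\beta(x)$ by invoking the leader- and contributor-shrinking identities of Section~\ref{Section:Witnesses}, whose side conditions are precisely $\LValid_{\beta'}(x^{pre})$ and $\CValid_{\beta'}(x^{pre})$; the third conjunct $\SValid_{\beta'}(x')$ is the induction hypothesis. For the reverse direction I unfold $\SValid_\beta(z)$ to obtain a decomposition $z=x'\otimes y$ satisfying the three clauses, apply the induction hypothesis to $x'$ to produce $x^{pre}\in\Wit$ with $\Valid_{\beta'}(x^{pre})$ and $\Shrink^*(x^{pre})=x'$, and set $\tilde{x}=x^{pre}\times y$. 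The two shrinking identities, read right-to-left, then convert the $\SValid$-clauses back into $\LValid_\beta(\tilde{x})$ and $\CValid^{k+1}_\beta(\tilde{x})$.

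The main obstacle will be the contributor-validity bookkeeping in the reverse direction: the recursion only explicitly certifies the last first write, deferring the earlier $\CValid^i_\beta(\tilde{x})$ with $i\leq k$ to $\SValid_{\beta'}(x')$. Transferring this certification back to $\tilde{x}=x^{pre}\times y$ requires the monotonicity observation $\Expr(x^{pre},\beta_1\dots\beta_{i-1})\subseteq\Expr(\tilde{x},\beta_1\dots\beta_{i-1})$ for each $i\leq k$, which says that appending the suffix $y$ cannot shrink the set of reads available to the contributor witnessing the $i$-th first write. Once this monotonicity is pinned down, both shrinking identities apply verbatim and the induction closes, with $\init(x)=\init(z)$ a direct consequence of $\Shrink^*$ preserving the starting state throughout.
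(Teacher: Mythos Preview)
Your induction hypothesis is not strong enough to supply the short witness in the specific form you need. In the forward direction you write ``by the induction hypothesis I obtain $x'=\Shrink^*(x^{pre})$'', and in the reverse direction you claim the hypothesis yields $x^{pre}$ with $\Shrink^*(x^{pre})=x'$. But the statement being proved is purely existential: it promises \emph{some} short witness with $\SValid$, not one that is literally the $\Shrink^*$ of anything. If instead you intend to prove the strengthened claim ``$z=\Shrink^*(x)$ always works'' (as your preamble suggests), then the inductive step must re-establish it, i.e.\ show that your candidate $z=\Shrink^*(x^{pre})\otimes\Shrink^*(x^{post})$ equals $\Shrink^*(x^{pre}\times x^{post})$. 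You do not check this, and it is false in general: when a state of $x^{post}$ already occurs in $x^{pre}$, shrinking the concatenation directly can cut across the boundary and produce a different short witness than first shrinking the factors and then concatenating.

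This matters because the two shrinking identities you quote from Section~\ref{Section:Witnesses} relate $y\times x_{i+1}$ to $\Shrink^*(y)\times x_{i+1}$: they let you replace a prefix by \emph{its} shrink, not by an arbitrary short witness handed back by the hypothesis. Without knowing that the hypothesis output is literally $\Shrink^*(x^{pre})$, you cannot invoke them to transport $\LValid_\beta$ and $\CValid^{k+1}_\beta$ from $x^{pre}\times x^{post}$ to $c\times y$. The paper closes this gap by carrying a semantic invariant through the induction instead of a syntactic one: the short witness produced additionally satisfies $\FullExpr(z,\beta)=\FullExpr(x,\beta)$. The transfer of $\CValid^{k+1}$ then goes through the decomposition $\Expr(c\times d,\beta')=\FullExpr(c,\beta')\cdot\Expr(d^{(k+1)},\beta')$, using the full-expression equality between $c$ and $x^{pre}$ in place of the unavailable relation $c=\Shrink^*(x^{pre})$. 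Establishing and propagating this invariant (invariance of $\FullExpr$ under $\Shrink^*$, its behaviour under products and blow-ups) is precisely the work your outline omits; your ``main obstacle'' paragraph correctly identifies the bookkeeping for $i\le k$, but the real difficulty is one step earlier, at $i=k+1$.
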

Note that in the lemma, $\init(x)$ refers to the first state of $w$.
Similarly for $z$.

It remains to give the algorithm.
For each first-write sequence $\beta$ and each short witness~$z$, we compute $\SValid_\beta(z)$ by a dynamic programming.
To this end, we maintain a table indexed by first-write sequences and short witnesses.
An entry for $\beta \in \FW$ and $z \in \SWit$ is computed as follows.
Let $\abs{\beta} = \ord(z) = k$.
We iterate over all short witnesses $x \in \Ord(k-1), y \in \Ord(1)$ and check whether $z = x \otimes y$ holds.
If so, we compute $\LValid_\beta(x \times y) \wedge \CValid^k_\beta(x \times y)$ and look up the value of $\SValid_{\beta'}(x)$ in the table.
Details on the precise complexity are presented in Appendix \ref{Section:AppendixParameterizationLeaderDomain}.
\begin{proposition}
	\label{Proposition:ValidWitnesses}
	The set of all valid short witnesses can be computed in time $\leaderdomaintimeEmph$.
\end{proposition}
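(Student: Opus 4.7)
The plan is to implement the recursive definition of $\SValid_\beta(z)$ as a bottom-up dynamic program, indexing a table by pairs $(\beta,z)\in\FW\times\SWit$ and filling entries in increasing order of $\ord(z)=|\beta|$. The base case $\ord(z)=0$ reduces to $\LValid_\varepsilon(z)$, which is polynomial-time checkable by Lemma~\ref{Lemma:ValidityPolyTime}. For $\ord(z)=k+1$, we iterate over all decompositions $z=x\otimes y$ with $x\in\Ord(k)$ and $y\in\Ord(1)$; for each such pair we evaluate $\LValid_\beta(x\times y)$ and $\CValid^{k+1}_\beta(x\times y)$ in polynomial time, and look up the already-computed entry $\SValid_{\beta'}(x)$ for the prefix $\beta'=\beta_1\dots\beta_k$. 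The crux of the time bound is therefore reduced to counting the number of table entries and the number of decompositions processed per entry.

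The main technical step is to bound $|\SWit|$ and $|\FW|$ carefully in terms of $\sizeL$ and $\sizeD$. A short witness $z=(w,q,\sigma)$ has the word $w=(q_1,a_1)\dots(q_n,a_n)$ with pairwise distinct leader states $q_i\in Q_L$, so $n\le\sizeL$ and the number of such words is bounded by $\sum_{n\le\sizeL}\binom{\sizeL}{n}n!\,(\sizeD+1)^n\le\sizeL!\,(\sizeD+1)^{\sizeL}$. The target state $q$ contributes a factor $\sizeL$, and the monotone map $\sigma\colon[1..k]\to[1..n]$ with $k\le\sizeD$ contributes at most $2^{\sizeL}$ choices. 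Altogether, $|\SWit|=(\sizeL+\sizeD)^{\bigO(\sizeL+\sizeD)}$. Similarly, $|\FW|\le\sum_{k\le\sizeD}\sizeD!/(\sizeD-k)!\le(\sizeD+1)!$, i.e. $\sizeD^{\bigO(\sizeD)}$. Consequently the table has $(\sizeL+\sizeD)^{\bigO(\sizeL+\sizeD)}$ cells.

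For each cell $(\beta,z)$ with $\ord(z)=k+1$, the number of candidate decompositions $(x,y)$ is bounded by $|\Ord(k)|\cdot|\Ord(1)|\le|\SWit|^2$. The check $z\stackrel{?}{=}x\otimes y$ reduces to computing the short concatenation, i.e. performing $\Shrink^*$ on $x\times y$ and comparing, which is polynomial in the encoding length $\bigO(\sizeL+\sizeD)$. The validity predicates $\LValid_\beta$ and $\CValid^{k+1}_\beta$ on $x\times y$ are polynomial-time by Lemma~\ref{Lemma:ValidityPolyTime}, and the recursive value $\SValid_{\beta'}(x)$ is just a lookup. So each cell costs $|\SWit|\cdot\mathit{poly}(\sizeL+\sizeD)$.

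Multiplying cells by per-cell work yields a total running time of $|\FW|\cdot|\SWit|\cdot|\SWit|\cdot\mathit{poly}(\sizeL+\sizeD)$, which still lies in $(\sizeL+\sizeD)^{\bigO(\sizeL+\sizeD)}$ since the polynomial factors are absorbed into the superexponential base. The main obstacle I anticipate is keeping the counting of short witnesses tight: a naive accounting that treats $\sigma$ and $w$ independently can inflate the exponent, so care is needed to argue that the combined $\sizeL!\cdot(\sizeD+1)^{\sizeL}\cdot 2^{\sizeL}\cdot\sizeL$ term indeed telescopes into $(\sizeL+\sizeD)^{\bigO(\sizeL+\sizeD)}$ via the bound $n!\le n^n$ and $(\sizeD+1)^{\sizeL}\le(\sizeL+\sizeD)^{\sizeL+\sizeD}$. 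Everything else is a straightforward dynamic programming wrapped around Lemma~\ref{Lemma:ValidityPolyTime} and the recursive equation for $\SValid_\beta$.
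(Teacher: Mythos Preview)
Your proposal is correct and follows essentially the same dynamic-programming approach as the paper: tabulate $\SValid_\beta(z)$ over pairs $(\beta,z)$, fill entries in increasing order, and bound the total work by counting short witnesses and first-write sequences, with per-entry cost dominated by the iteration over decompositions $x\otimes y$ and the polynomial-time validity checks from Lemma~\ref{Lemma:ValidityPolyTime}. One small slip: since $\sigma$ is only non-decreasing (the $\Shrink$ operator can collapse several first-write positions onto the same index), the number of such maps $[1..k]\to[1..n]$ is $\binom{n+k-1}{k}\le\sizeL^{\sizeD}$ rather than $2^{\sizeL}$, but this is still absorbed by $(\sizeL+\sizeD)^{\bigO(\sizeL+\sizeD)}$, so your conclusion stands.
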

It is left to explain how interfaces can be obtained from the algorithm.
From a valid short witness, target state and last memory value can be read off.
Contributor states can be obtained by synchronizing the contributor along the witness.
This takes polynomial time.
Details can be found in Appendix \ref{Section:AppendixParameterizationLeaderDomain}.

\section{Finding Cycles in Polynomial Time}
\label{Section:Cycles}

We give an efficient algorithm solving $\CYC$ in time $\bigO(\fixpointtime)$.
This proves Theorem~\ref{Theorem:CyclePolyTime}.
The algorithm relies on a characterization of cycles in terms of stable SCC decompositions.
These are decompositions of the contributor automaton into strongly connected subgraphs that are stable in the sense that they write exactly the symbols they intend to read.
With a fixed point iteration, we show how to find stable SCC decompositions in the mentioned time.

Our algorithm is technically simple.
It relies on a fixed point iteration calling Tarjan's algorithm \cite{Tarjan1972} to obtain SCC decompositions.
Hence, the algorithm is easy to implement and shows that stable SCC decompositions are the ideal structure for detecting cycles.
Moreover, we can modify the algorithm to detect cycles where the leader necessarily makes a move.

We also discovered that cycles can be detected by a non-trivial polynomial-time reduction to the problem of finding cycles in dynamic graphs.
Although the latter can be solved in polynomial time \cite{Kosaraju1988}, the obtained algorithm for $\CYC$ does not admit an efficient polynomial-time complexity.
The reason is that the algorithm in \cite{Kosaraju1988} repeatedly solves linear programs that grow large due to the reduction. 
Compared to this method, our algorithm is more efficient and technically simpler due to being tailored to the actual problem.

\subsection{From Saturated Cycles to Stable SCC decompositions}
\label{Section:SatCycleToSCC}

We characterize cycles in terms of stable SCC decompositions.
These are decompositions of the contributor automaton that can provide themselves with all the symbols that a cycle along this structure may read.
For the definition, we generalize properties of a fixed cycle to the fact that a saturated cycle exists.
We link the latter with an alphabet $\Gamma$, a variable for the set of reads in a saturated cycle.
Then we define stable SCC decompositions depending on $\Gamma$.
Hence, the search for a cycle amounts to finding a $\Gamma$ with a stable SCC decomposition.

Throughout the section, we fix an interface $I = (S,q,a)$ and a saturated cycle $\tau = c \rightarrow^+_\sat c$ with $I(c)$.
We assume that the set $\Writes(\tau) = \Setcon{b \in \Domain}{ d \xxrightarrow{\wt{b}} d' \in \tau }$ is non-empty, $\tau$ contains at least one write.
If $\tau$ contains only reads, then either a contributor or the leader run in an $\rd{a}$-loop, a cycle which is easy to detect.
We generalize two properties of $\tau$.

\subparagraph*{Property 1: Strongly connectedness.}
Considering the saturated cycle $\tau$, we can observe how the current state of a particular contributor $P$ changes over time.
Assume $P$ starts in a state $p$ and visits a state $p'$ during $\tau$.
Since it runs along the cycle, the contributor will eventually move from $p'$ back to $p$ again.
This means that in the contributor automaton, there is a path from $p$ to $p'$ and vice versa.
Phrased differently, $p$ and $p'$ are strongly connected.

To make this notion more precise, we define a subgraph of the contributor automaton.
Intuitively, it is the restriction of $P_C$ to the states and transitions visited along $\tau$.
Rather than defining it for a single computation $\tau$, we generalize to a set of \emph{enabled reads} $\Gamma \subseteq \Domain$.
The directed graph $\restrictionGraph{S}{\Gamma} = (S,E(\Gamma))$ has as vertices the contributor states $S$ and as edges the set $E(\Gamma)$.
The latter are transitions of $P_C$ between states in $S$ that are either reads enabled by $\Gamma$ or writes of arbitrary symbols.
Formally, we have
\begin{align*}
	(p,p') \in E(\Gamma) \text{~if~} p \XXrightarrow{\rd{b}}{C} p' \text{~with~} b \in \Gamma \text{~or~} p \XXrightarrow{\wt{b}}{C} p' \text{~with~} b \in \Domain.
\end{align*}

For the cycle $\tau = c \rightarrow^+_\sat c$, the induced graph is $\restrictionGraph{S}{\Gamma}$ where $\Gamma = \Writes(\tau)$.
With the graph in place, we can define our notion of strongly connected states.
\begin{definition}
	\label{Definition:StronglyConnected}
	Let $p,p' \in S$ be two states and $\Gamma \subseteq \Domain$.
	We say that $p$ and $p'$ are \emph{strongly $\Gamma$-connected} if $p$ and $p'$ are strongly connected in the graph $\restrictionGraph{S}{\Gamma}$.
\end{definition}
Like the classical notion, the above definition generalizes to sets.
We say that a set $V \subseteq S$ is \emph{strongly $\Gamma$-connected} if each two states in $V$ are strongly $\Gamma$-connected.

The saturated cycle $\tau$ runs along the SCC decomposition of its induced graph $\restrictionGraph{S}{\Gamma}$.
Following a particular contributor $P$ in $\tau$, we collect the visited states in a set $S_P \subseteq S$.
Then, $S_P$ is strongly $\Gamma$-connected and thus contained in an inclusion maximal strongly connected set, an SCC of $\restrictionGraph{S}{\Gamma}$.
Hence, the contributors in $\tau$ stay within SCCs of the graph.
We associate with $\tau$ the SCC decomposition.
Again, we generalize to a given alphabet.

Let $\Gamma \subseteq \Domain$ and $V \subseteq S$ strongly $\Gamma$-connected.
We call $V$ a \emph{strongly $\Gamma$-connected component} ($\Gamma$-SCC) if it is inclusion maximal.
The latter means that for each $V \subseteq V'$ with $V'$ strongly $\Gamma$-connected, we already have $V = V'$.
We consider the unique partition of $S$ into $\Gamma$-SCCs.
Note that by a partition, we mean a collection $(S_1, \dots, S_\ell)$ of pairwise disjoint subsets of $S$ such that $S = \bigcup_{i\in[1..\ell]} S_i$.
The order of a partition is not important for our purpose.
\begin{definition}
	\label{Definition:SCCdecomposition}
	The partition of $S$ into $\Gamma$-SCCs is called \emph{$\Gamma$-SCC decomposition} of $S$.
\end{definition}
We denote the $\Gamma$-SCC decomposition by $\decomp{S}{\Gamma}$.
It consists of the vertices of the SCC decomposition of $\restrictionGraph{S}{\Gamma}$.
Hence, we can obtain it from an application of Tarjan's algorithm~\cite{Tarjan1972}, a fact that becomes important when computing $\decomp{S}{\Gamma}$ in Section \ref{Section:ComputingSCC}.

\subparagraph*{Property 2: Stability.}
Let $\decomp{S}{\Gamma} = (S_1, \dots, S_\ell)$ be the $\Gamma$-SCC decomposition associated with the saturated cycle $\tau$.
The writes in $\tau$ can be linked with the $S_i$.
If a write occurs between states $p,p' \in S_i$, we associate it with the set $S_i$.
The writes of the leader all occur on a cyclic computation $q \rightarrow^*_L q$.
The point of assigning writes to sets is the following.
Writes that belong to a set can occur on a cycle through a set of the decomposition.

We generalize from $\tau$ to a given alphabet $\Gamma\subseteq \Domain$. 
Let $\decomp{S}{\Gamma} = (S_1, \dots, S_\ell)$ be the $\Gamma$-SCC decomposition of $S$. 
The \emph{writes of the decomposition} is the set of all symbols that occur as writes either between the states of $S_i$ or in a cycle $q \rightarrow^*_L q$ on the leader while preserving the memory content $a$.
Formally, we define the writes to be the union $\Writes(S_1, \dots, S_\ell) = \Writes_C(S_1, \dots, S_\ell) \cup \Writes_L(S_1, \dots, S_\ell)$ where
\begin{align*}
	\Writes_C(S_1, \dots, S_\ell) &= \Setcon{b}{p \XXrightarrow{\wt{b}}{C} p' \text{ with } p,p' \in S_i} \text{ and } \\
	\Writes_L(S_1, \dots, S_\ell) &= \Setcon{b}{\exists u,v : (q,a) \XXrightarrow{u.!b.v}{L'} (q,a)}.
\end{align*}
Here, $\rightarrow_{L'}$ denotes the transition relation of the automaton $P_{L'}$, a restriction of the leader~$P_L$ to reads within $\Writes_C(S_1, \dots, S_\ell)$. 
The automaton also keeps track of the memory content.  
We define $P_{L'} = (\Ops{\Domain}, Q_L \times \Domain, (q^0_L,\initmem), \delta_{L'})$ with the transitions
\begin{align*}
	(s,b) &\XXrightarrow{\wt{b'}}{L'} (s',b') &\text{ if }& s \XXrightarrow{\wt{b'}}{L} s', \\
	(s,b) &\XXrightarrow{\rd{b}}{L'} (s',b) &\text{ if }& s \XXrightarrow{\rd{b}}{L} s' \text{ and } b \in \Writes_C(S_1, \dots, S_\ell), \\
	(s,b) &\XXrightarrow{\varepsilon}{L'} (s,b') &\text{ if }& b' \in \Writes_C(S_1, \dots, S_\ell).
\end{align*}
The last transitions change the memory content due to a write of a contributor.

The following lemma states that writes behave monotonically.
This fact will become important in Section \ref{Section:ComputingSCC}.
We provide a proof in Appendix \ref{Section:AppendixCycles}.
\begin{lemma}
	\label{Lemma:Monotonicity}
	Let $\Gamma \subseteq \Gamma' \subseteq \Domain$.
	We have $\Writes(\decomp{S}{\Gamma}) \subseteq \Writes(\decomp{S}{\Gamma'})$.
\end{lemma}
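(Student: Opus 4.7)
The plan is to derive the inclusion from a refinement property of SCC decompositions under edge addition, and then lift it to both write components.

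First I would observe the edge-monotonicity of the restriction graph. Since $\Gamma \subseteq \Gamma'$, every edge in $E(\Gamma)$ is also an edge of $E(\Gamma')$: the write edges are insensitive to $\Gamma$, and read edges labelled with $b \in \Gamma$ are still enabled by $\Gamma'$. Hence $\restrictionGraph{S}{\Gamma}$ is a spanning subgraph of $\restrictionGraph{S}{\Gamma'}$ on the same vertex set $S$. A standard SCC argument then gives the \emph{refinement property}: if $p,p'$ are strongly $\Gamma$-connected, the two directed paths witnessing this in $\restrictionGraph{S}{\Gamma}$ are still paths in $\restrictionGraph{S}{\Gamma'}$, so $p,p'$ are also strongly $\Gamma'$-connected. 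Consequently, every $\Gamma$-SCC is contained in a unique $\Gamma'$-SCC, and the partition $\decomp{S}{\Gamma}$ refines $\decomp{S}{\Gamma'}$.

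Next I would show monotonicity of the contributor writes. If $b \in \Writes_C(\decomp{S}{\Gamma})$, there is a transition $p \XXrightarrow{\wt{b}}{C} p'$ with $p,p'$ lying in the same $\Gamma$-SCC. By the refinement property, $p$ and $p'$ then lie in the same $\Gamma'$-SCC, so $b \in \Writes_C(\decomp{S}{\Gamma'})$. This yields $\Writes_C(\decomp{S}{\Gamma}) \subseteq \Writes_C(\decomp{S}{\Gamma'})$.

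Now I would lift this to the leader writes. Unfolding the definition of $P_{L'}$, its only $\Gamma$-dependence is in the set of enabled read transitions $(s,b) \XXrightarrow{\rd{b}}{L'} (s',b)$, which is controlled by $\Writes_C(\decomp{S}{\Gamma})$. By the previous step, the transition relation of $P_{L'}$ built from $\Writes_C(\decomp{S}{\Gamma})$ is a subset of the one built from $\Writes_C(\decomp{S}{\Gamma'})$. Hence any witnessing cycle $(q,a) \XXrightarrow{u.!b.v}{L'} (q,a)$ for $b \in \Writes_L(\decomp{S}{\Gamma})$ is also a cycle in the larger automaton, giving $\Writes_L(\decomp{S}{\Gamma}) \subseteq \Writes_L(\decomp{S}{\Gamma'})$. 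Taking the union of the two inclusions finishes the proof.

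The only subtle point, and the one I would double-check, is the refinement property: one must be sure that the $\Gamma$-SCCs are genuinely contained in $\Gamma'$-SCCs (rather than possibly being merged with others) so that the containment $p,p' \in S_i$ transfers to a single component of $\decomp{S}{\Gamma'}$. This is immediate from inclusion-maximality of SCCs under a subgraph relation, but it is the step where a careful appeal to the definition of $\Gamma$-SCC is required.
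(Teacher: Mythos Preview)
Your approach matches the paper's: split into $\Writes_C$ and $\Writes_L$, use edge-monotonicity of $\restrictionGraph{S}{\cdot}$ to transfer strong connectedness for the former, then lift to $P_{L'}$ via the inclusion of $\Writes_C$ for the latter. One minor slip: the $\Gamma$-dependence of $P_{L'}$ is not only in the read transitions but also in the $\varepsilon$-transitions $(s,b) \XXrightarrow{\varepsilon}{L'} (s,b')$, which require $b' \in \Writes_C(\decomp{S}{\Gamma})$; since these are equally controlled by $\Writes_C$, your transition-inclusion argument goes through unchanged once you account for them.
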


During the cycle $\tau$, reads are always preceded by corresponding writes.
Hence, the writes of the $\Gamma$-SCC decomposition, where $\Gamma = \Writes(\tau)$, provide all symbols needed for reading.
In fact, we have $\Writes(\decomp{S}{\Gamma}) \supseteq \Gamma$.
The following definition generalizes this property.
\begin{definition}
	\label{Definition:StableSCCDecomp}
	Let $\Gamma \subseteq \Domain$. 
	The $\Gamma$-SCC decomposition $\decomp{S}{\Gamma}$ of $S$ is called \emph{stable} if it provides $\Gamma$ as its writes, meaning $\Writes(\decomp{S}{\Gamma}) = \Gamma$.
\end{definition}
Note that the definition asks for equality instead of inclusion.
The reason is that we can express stability as a fixed point of a suitable operator. 
This will be essential in Section \ref{Section:ComputingSCC}.

\subparagraph{Characterization.}
The following proposition characterizes the existence of saturated cycles via stable SCC decompositions.
It is a major step towards the polynomial-time algorithm.
\begin{proposition}
	\label{Proposition:CharacterizationStableSCCDecomp}
	There is a saturated cycle $\tau = c \rightarrow^+_\sat c$ with $I(c)$ if and only if there exists a non-empty subset $\Gamma \subseteq \Domain$ such that $\decomp{S}{\Gamma}$ is stable.
\end{proposition}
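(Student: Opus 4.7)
The plan is to prove both directions using the operator $f(\Gamma) = \Writes(\decomp{S}{\Gamma})$, whose fixed points are exactly the stable SCC decompositions of Definition \ref{Definition:StableSCCDecomp}, together with Lemma \ref{Lemma:Monotonicity}.

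For the forward direction, I would take the natural candidate $\Gamma_0 = \Writes(\tau)$, which is non-empty by hypothesis on $\tau$, and iterate $f$ to reach a stable superset. The critical step is to show $\Gamma_0 \subseteq f(\Gamma_0)$. For every contributor transition $p \XXrightarrow{\wt{b}}{C} p'$ appearing in $\tau$, saturation forces $p, p' \in S$ and cyclicity makes the contributor traverse both states and return to its starting state using only reads from $\Writes(\tau) = \Gamma_0$; hence $p, p'$ are strongly $\Gamma_0$-connected and $b \in \Writes_C(\decomp{S}{\Gamma_0})$. For every leader write $b$ in $\tau$, I would lift the leader's trajectory $q \rightarrow^* q$ extracted from $\tau$ to a cycle $(q,a) \rightarrow^*_{L'} (q,a)$ in $P_{L'}$: each intervening contributor write of some $b'$ is mirrored by an $\varepsilon$-transition of $L'$, admissible because $b' \in \Writes_C(\decomp{S}{\Gamma_0})$ by the previous subclaim. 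This places $b$ into $\Writes_L(\decomp{S}{\Gamma_0})$. With $\Gamma_0 \subseteq f(\Gamma_0)$ in hand, Lemma \ref{Lemma:Monotonicity} yields an ascending chain $\Gamma_0 \subseteq \Gamma_1 \subseteq \cdots$ in the finite lattice $\Powerset(\Domain)$, stabilizing at a non-empty fixed point $\Gamma^*$ with $f(\Gamma^*) = \Gamma^*$, which is the desired stable $\Gamma$.

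For the reverse direction, I would construct a saturated cycle from a non-empty stable $\Gamma$ with $\decomp{S}{\Gamma} = (S_1, \dots, S_\ell)$. Stability $\Writes(\decomp{S}{\Gamma}) = \Gamma$ supplies exactly the building blocks needed: within each $S_i$, a closed walk through its states exercising the transitions responsible for $\Writes_C$, and a leader cycle $(q,a) \rightarrow^*_{L'} (q,a)$ exercising the writes of $\Writes_L$. I would interleave enough iterations of these walks, with contributors dispatched to match $\proj{C}(c) = S$, into a computation $c \rightarrow^+ c$; saturation is automatic because every visited contributor state lies in some $S_i \subseteq S$. The main obstacle will be the scheduling: $\Writes_L$-writes rely on the leader's cycle in $L'$, whose $\varepsilon$-transitions must be discharged by concurrent contributor writes, while contributor reads may demand symbols produced only by the leader. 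The key observation is that stability furnishes precisely the bootstrap required — the contributor walks alone install every symbol of $\Writes_C$ in memory, enabling one full pass of the leader cycle in $L'$ to be realized as a real computation step, which in turn supplies the $\Writes_L$-values needed for later contributor reads. A round-robin interleaving of sufficiently many iterations of the component walks then produces the saturated cycle.
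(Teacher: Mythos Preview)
Your forward direction is correct and matches the paper's argument: start from $\Gamma_0=\Writes(\tau)$, establish $\Gamma_0\subseteq f(\Gamma_0)$, and iterate via Lemma~\ref{Lemma:Monotonicity} to a fixed point. Your justification of $\Gamma_0\subseteq f(\Gamma_0)$ is in fact more explicit than the paper's, which only refers back to the informal discussion preceding Definition~\ref{Definition:StableSCCDecomp}.

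In the reverse direction you have the right building blocks---closed walks inside each $S_i$, the leader cycle $(q,a)\rightarrow^*_{L'}(q,a)$, and the bootstrap observation that $\Writes_C$ alone suffices to realize the $\varepsilon$-steps of $L'$---but there is a genuine gap at the end. You aim straight for a computation $c\rightarrow^+ c$, claiming that ``a round-robin interleaving of sufficiently many iterations of the component walks'' produces it. The difficulty you are glossing over is that $c\rightarrow^+ c$ requires every individual contributor to return to its exact entry in the vector $\pc$, not merely to land in the right SCC; your interleaving sketch gives no mechanism for this, and in general a contributor whose closed walk has length coprime to the number of scheduling rounds will not be back at its start.

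The paper's proof avoids this problem by \emph{not} aiming for a cycle directly. It carefully chooses $c$ (with $t{+}1$ copies of each cycle state, where $t$ is the length of the leader run) and constructs a two-phase schedule yielding only a \emph{balanced} computation $c\rightarrow^+ d$, meaning $d$ is a permutation of $c$: same leader state, same memory, same multiset of contributor states. Since the schedule depends only on the multiset, it can be replayed from $d$ to reach another permutation, and finiteness of permutations then forces a repetition $c\rightarrow^* e\rightarrow^+_{\sat} e$. This permutation-and-iterate trick is the missing idea in your proposal; once you add it, the rest of your outline goes through.
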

\begin{proof}
	Assume the existence of a saturated cycle $\tau$.
	Our candidate set is $\Gamma = \Writes(\tau)$.
	We already argued above that $\Writes(\decomp{S}{\Gamma}) \supseteq \Gamma$.
	If equality holds, $\decomp{S}{\Gamma}$ is stable and $\Gamma$ is the set we are looking for.
	Otherwise, we have $\Writes(\decomp{S}{\Gamma}) \supsetneq \Gamma$.
	
	In the latter case, we consider $\Gamma' = \Writes(\decomp{S}{\Gamma})$ instead of $\Gamma$.
	Since $\Gamma' \supseteq \Gamma$, we can apply Lemma \ref{Lemma:Monotonicity} and obtain that $\Writes(\decomp{S}{\Gamma'})$ contains $\Gamma'$.
	
	Iterating this process yields a sequence of sets $(\Gamma_i)_i$ that is strictly increasing, $\Gamma_i \subsetneq \Gamma_{i+1}$, and that satisfies $\Writes(\decomp{S}{\Gamma_i}) \supseteq \Gamma_i$.
	The sequence is finite since $\Gamma_i \subseteq \Domain$ for all $i$.
	Hence, there is a last set $\Gamma_d$ which necessarily fulfills $\Writes(\decomp{S}{\Gamma_d}) = \Gamma_d$.
	
	For the other direction, we need to construct a saturated cycle from a set $\Gamma$ with stable SCC decomposition.
	Idea and formal proof are given in Appendix \ref{Section:AppendixCycles}.
\end{proof}

\subsection{Computing Stable SCC decompositions}
\label{Section:ComputingSCC}

The search for a saturated cycle reduces to finding an alphabet $\Gamma$ with a stable SCC decomposition.
Following the definition of stability, we can express $\Gamma$ as a fixed point that can be computed by a Kleene iteration \cite{Winskel1993} in polynomial time.
We define the suitable operator.
It acts on the powerset lattice $\Powerset(\Domain)$ and for a given set $X$, it computes the writes of the $X$-SCC decomposition.
Formally, it is defined by 
\begin{align*}
	\WritesSCC(X) = \Writes(\decomp{S}{X}).
\end{align*}
The operator is monotone and can be evaluated in polynomial time.
\begin{lemma}
	\label{Lemma:OperatorProperties}
	For $X \subseteq X'$ subsets of $\Domain$, we have $\WritesSCC(X) \subseteq \WritesSCC(X')$.
	Moreover, $\WritesSCC(X)$ can be computed in time $\bigO(\evaltimeEmph)$.
\end{lemma}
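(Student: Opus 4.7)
The plan is to handle the two claims separately and lean on Lemma \ref{Lemma:Monotonicity} for the monotonicity part.

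For monotonicity, I would simply unfold the definition $\WritesSCC(X) = \Writes(\decomp{S}{X})$ and invoke Lemma \ref{Lemma:Monotonicity}: given $X \subseteq X'$, the inclusion $\Writes(\decomp{S}{X}) \subseteq \Writes(\decomp{S}{X'})$ is exactly what that lemma provides, so $\WritesSCC$ is monotone.

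For the time bound, the plan is to describe the evaluation of $\WritesSCC(X)$ in stages and then sum up. First, construct $\restrictionGraph{S}{X}$ by a single scan over the contributor's transition relation, keeping every write and every read whose symbol lies in $X$; this scan has cost $\bigO(\sizeC^2 \cdot \sizeD)$. Second, apply Tarjan's algorithm \cite{Tarjan1972} on $\restrictionGraph{S}{X}$, whose graph has $\sizeC$ vertices and at most $\sizeC^2$ edges, to obtain the decomposition $\decomp{S}{X} = (S_1,\dots,S_\ell)$ in time $\bigO(\sizeC^2)$. Third, compute $\Writes_C(S_1, \dots, S_\ell)$ by another scan of the contributor write transitions, recording a symbol $b$ whenever some $\wt{b}$-edge has both endpoints inside the same $S_i$.

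Once $\Writes_C$ is in hand, I would build the automaton $P_{L'}$ on the state space $Q_L \times \Domain$ by instantiating the three transition rules from the definition. Its transition count is bounded by $\bigO(\sizeL^{2} \cdot \sizeD^{2})$: each of the $\bigO(\sizeL^{2} \cdot \sizeD)$ leader transitions yields at most $\sizeD$ copies under the memory component, and the newly introduced $\varepsilon$-transitions add at most $\bigO(\sizeL \cdot \sizeD^{2})$ more. A second invocation of Tarjan's algorithm on $P_{L'}$ then computes its SCC decomposition in time $\bigO(\sizeL^{2} \cdot \sizeD^{2})$. I would then identify the SCC $C$ containing $(q, a)$: by the definition of $\Writes_L(S_1, \dots, S_\ell)$, a symbol $b$ belongs to this set precisely when some $\wt{b}$-transition of $P_{L'}$ has both endpoints in $C$, which is detectable by one final pass over the edges. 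Summing the stages yields a total running time of $\bigO(\sizeC^{2} \cdot \sizeD + \sizeL^{2} \cdot \sizeD^{2})$, comfortably inside the claimed bound $\bigO(\evaltime)$.

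The only delicate point, and the one I would spell out carefully, is that the SCC-based test really matches the definition of $\Writes_L$: a symbol $b$ appears on a cycle $(q,a) \rightarrow^* (q,a)$ in $P_{L'}$ through a $\wt{b}$-transition if and only if some $\wt{b}$-edge of $P_{L'}$ has its source and target in the same SCC as $(q,a)$. This is the standard equivalence between cycles through an edge and both endpoints lying in a common SCC; once recorded, the rest is a routine accounting exercise.
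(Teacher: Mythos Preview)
Your proof is correct. The monotonicity argument and the computation of $\restrictionGraph{S}{X}$, its SCC decomposition, $\Writes_C$, and the construction of $P_{L'}$ all match the paper's approach essentially line by line. The one genuine difference is in how you extract $\Writes_L$: the paper iterates over every $b \in \Domain$ and, for each, decides non-emptiness of $\Ops{\Domain}^* \cdot \wt{b} \cdot \Ops{\Domain}^* \cap \langu(P_{L'}(q,a))$ via a product automaton, spending $\bigO(\sizeL^2 \cdot \sizeD^2)$ per symbol and hence $\bigO(\sizeL^2 \cdot \sizeD^3)$ overall. Your route---a single Tarjan pass on $P_{L'}$ followed by one scan of its $\wt{b}$-edges, checking whether both endpoints lie in the SCC of $(q,a)$---costs only $\bigO(\sizeL^2 \cdot \sizeD^2)$ and is both conceptually cleaner and a factor~$\sizeD$ faster. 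Both fit within the stated bound $\bigO(\evaltime)$, so the difference is immaterial for the lemma, but your version is the tighter one. The equivalence you single out as the delicate point (an edge lies on some cycle through $(q,a)$ iff both its endpoints share the SCC of $(q,a)$) is exactly the right justification and is standard.
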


Monotonicity follows from Lemma \ref{Lemma:Monotonicity}.
For the evaluation, let $X$ be given.
We apply Tarjan's algorithm on $\restrictionGraph{S}{X}$ to compute the $X$-SCC decomposition $\decomp{S}{X}$.
This takes linear time.
It is left to compute the writes $\Writes(\decomp{S}{X})$.
For details on the computation and the precise complexity we refer to Appendix \ref{Section:AppendixCycles}.

The following lemma states that the non-trivial fixed points of the operator $\WritesSCC$ are precisely the sets with a stable SCC decomposition.
Hence, searching for a cycle reduces to searching for a fixed point.
\begin{lemma}
	\label{Lemma:FixedPoint}
	For $\Gamma \neq \emptyset$ we have, $\Gamma = \WritesSCC(\Gamma)$ if and only if $\decomp{S}{\Gamma}$ is stable.
\end{lemma}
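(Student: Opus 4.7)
The statement is a direct consequence of unfolding the two definitions in play, so my plan is to present it as a one-line verification rather than a genuine argument. The operator was defined by $\WritesSCC(\Gamma) = \Writes(\decomp{S}{\Gamma})$, and stability of $\decomp{S}{\Gamma}$ was defined by the equality $\Writes(\decomp{S}{\Gamma}) = \Gamma$. Substituting the first identity into the second gives precisely $\Gamma = \WritesSCC(\Gamma)$, so the two conditions are literally the same equation written with different notation.

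Concretely I would structure the proof as follows. First, assume $\Gamma = \WritesSCC(\Gamma)$. Expanding the definition of $\WritesSCC$ yields $\Gamma = \Writes(\decomp{S}{\Gamma})$, which is the defining condition of stability, so $\decomp{S}{\Gamma}$ is stable. Second, assume $\decomp{S}{\Gamma}$ is stable. Then by definition $\Writes(\decomp{S}{\Gamma}) = \Gamma$, and again by the definition of the operator this is $\WritesSCC(\Gamma) = \Gamma$. Both implications hold without using the hypothesis $\Gamma \neq \emptyset$.

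The reason the lemma carries the assumption $\Gamma \neq \emptyset$ is not that it is needed for this equivalence, but that its intended use in the fixed point iteration is to detect saturated cycles via Proposition \ref{Proposition:CharacterizationStableSCCDecomp}, and that characterization excludes the trivial case where $\tau$ contains no writes (which was handled separately by spotting an $\rd{a}$-loop on the leader or a contributor). So in the write-up I would mention the non-emptiness only as a side remark, to make clear that no hidden condition is being used in the equivalence itself.

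There is no real obstacle here: both directions are purely syntactic. The only thing one has to be careful about is resisting the temptation to rederive the monotonicity or the explicit computation of $\Writes(\decomp{S}{\Gamma})$, since those belong to Lemma \ref{Lemma:OperatorProperties} and are not needed for the equivalence. This lemma is best kept short precisely because its value is conceptual, recasting the search for a suitable $\Gamma$ as a fixed point problem for a monotone operator on the finite lattice $\Powerset(\Domain)$, which then enables a Kleene iteration to compute $\Gamma$ in polynomial time.
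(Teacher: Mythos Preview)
Your proposal is correct and matches the paper's own treatment: the paper simply states that correctness follows immediately from the definition of stability, which is exactly the one-line definitional unfolding you describe. Your remark that the hypothesis $\Gamma \neq \emptyset$ is not used in the equivalence itself but only in the subsequent application via Proposition~\ref{Proposition:CharacterizationStableSCCDecomp} is also accurate.
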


Correctness immediately follows from the definition of stability.
For finding a suitable set $\Gamma$, we employ a Kleene iteration to compute the greatest fixed point of $\WritesSCC$.
It starts from $\Gamma = \Domain$, the top element of the lattice.
At each step, it evaluates $\WritesSCC(\Gamma)$ by invoking Lemma~\ref{Lemma:OperatorProperties}.
This takes time $\bigO(\evaltime)$.
Termination is after at most~$\sizeD$ steps since at least one element is removed from the set $\Gamma$ each iteration.
Hence, the time to compute the greatest fixed point of $\WritesSCC$ is $\bigO(\fixpointtime)$.

\section{Conclusion}
\label{Section:Conclusion}

We studied the fine-grained complexity of $\LCL$, the liveness verification problem for leader contributor systems.
To this end, we first decomposed $\LCL$ into the reachability problem $\LCR$ and the cycle detection $\CYC$.
We focused on the complexity of $\LCR$.
While an optimal $\bigOS(2^\sizeC)$-time algorithm for $\LCR(\sizeC)$ was already known, we presented an algorithm solving $\LCR(\sizeL,\sizeD)$ in time $\leaderdomaintimeEmph$.
The algorithm is optimal in the fine-grained sense and therefore solves an open problem.
It is a dynamic programming based on a notion of valid short witnesses.
Moreover, we showed how to modify both algorithms for $\LCR$ so that they are compatible with a cycle detection and can be used in algorithms solving $\LCL$.

Further, we determined the complexity of $\CYC$.
We presented an efficient fixed point iteration running in time $\bigO(\fixpointtime)$.
It is based on a notion of stable SCC decompositions and invokes Tarjan's algorithm to find them.
The result shows that $\LCL$ and $\LCR$ admit the same fine-grained complexity.

\subparagraph*{Acknowledgments.}
We thank Arnaud Sangnier for helpful discussions.

\bibliographystyle{plain}
\bibliography{content/cite}

\newpage
\appendix
\section*{Appendix}
\label{Section:Appendix}

\section{Proofs of Section \ref{Section:Interfaces}}
\label{Section:AppendixInterfaces}

We provide proofs and details for Section \ref{Section:Interfaces}.

\subsection*{Proof of Lemma \ref{Lemma:SplittingLemma}}
Given a computation $c^0 \rightarrow^* c \rightarrow^+_{\sat} c$ such that $\proj{L}(c) \in F$, we can iterate the cyclic part to obtain a computation that visits $F$ infinitely often.
For the other direction, let $\sigma$ be an infinite initialized computation with $\Inf(\sigma) \cap F \neq \emptyset$.
Then, $\sigma$ visits infinitely many configurations involving a state from $F$.
These constitute an infinite sequence over the finite set $\Conf^t$. 
Hence, there is a repeating configuration $c$ and we get \mbox{$c^0 \rightarrow^* c \rightarrow^+ c$ with $\proj{C}(c) \in F$.}

It is left to show that we can assume a saturated cycle.
We use an idea going back to the \emph{copycat lemma} \cite{Esparza2013}.
Suppose $c \rightarrow^+ c$ is not saturated. 
Then there is a state $p \in Q_C$ which does not occur in $c$ but is encountered in a configuration $c'$ on the cycle. 
Let $P$ denote the contributor that visits $p$ in $c'$.
We add a new contributor $P^\cop$ to the computation that mimics the behavior of $P$.
Each time $P$ takes a transition, $P^\cop$ copycats it immediately.
Once $P^\cop$ reaches $p$, it does not move any further and stays in $p$.
We apply the procedure for each new state occurring in the cycle.
After having iterated through the cycle, we have collected all these states and there is a contributor staying in each of them.
Now we can run the cycle without discovering new states. 
This yields $d^0 \rightarrow^* d \rightarrow^+_\sat d$ with $\proj{L}(d) \in F$, as required.

\subsection*{Proof of Lemma \ref{Lemma:GlueComputations}}
Before we give the proof, we introduce a notion for counting contributor states in a configuration.
Let $c = (q,a,\pc) \in \Conf^t$ with $t \in \Naturals$ be any configuration and $p \in Q_C$ a contributor state.
The \emph{cardinality} $\card_p(c)$ denotes the number of contributors in configuration $c$ the current state of which is $p$.
Formally, we define
\begin{align*}
	\card_p(c) = \abs{\Setcon{i \in [1..t]}{\pc(i) = p}}.
\end{align*}
We proceed with the proof of Lemma \ref{Lemma:GlueComputations}.
\begin{proof}
	If we are given a computation of the form $c^0 \rightarrow^* c \rightarrow^+_\sat c$ with $I(c)$, we split it into the prefix $c^0 \rightarrow^* c$ and the cycle $c \rightarrow^+_\sat c$.
	The interface $I$ is clearly matched.
	
	For the other direction, let computations $d^0 \rightarrow^* d$ and $f \rightarrow^+_\sat f$ with $I(d) \wedge I(f)$ be given.
	We construct a composed computation $c^0 \rightarrow^* c \rightarrow^+_\sat c$ with $I(c)$ as desired.
	
	Let $c$ be a configuration that contains for each state $p$ the maximal amount of contributors of $d$ and $f$ that are currently in $p$.
	Memory and leader state are identical to $d$ and $f$.
	Formally we have, $\card_p(c) = \max(\card_p(d),\card_p(f))$ for each state $p \in Q_C$.
	Moreover, $\proj{L}(c) = \proj{L}(d)$ and $\proj{\Domain}(c) = \proj{\Domain}(d)$.
	This implies $I(c)$.
	
	In the following, we show that a live computation involving $c$ can be obtained by the given prefix and cycle.
	By the copycat lemma, we can enrich the computation $d^0 \rightarrow^* d$ by contributors such that we get $c^0 \rightarrow^* c$.
	In fact, if we have that \mbox{$\max(\card_p(d),\card_p(f)) = \card_p(d)$}, we do not have to add contributors for state $p$.
	If $\max(\card_p(d),\card_p(f)) > \card_p(d)$, we add contributors for the difference $t = \max(\card_p(d),\card_p(f)) - \card_p(d)$.
	Let $P$ be any contributor in $d$ currently in state $p$.
	Then, we add $t$ copies $P^\cop_1, \dots, P^\cop_t$ of $P$ to $d$.
	Since the behavior of the leader and the memory do not change, we get the prefix $c^0 \rightarrow^* c$.
	
	The cycle $f \rightarrow^+_\sat f$ can be simulated on the larger configuration $c$.
	Intuitively, the contributors that do not participate in the cycle, can be ignored.
	Hence, we obtain the desired cycle $c \rightarrow^+_\sat c$.
	Note that it is saturated.
	This completes the proof.
\end{proof}

\subsection*{Proof of Theorem \ref{Theorem:TimeLCL}}
We assume that we have already modified the reachability algorithm so that it computes all interfaces that witness a prefix computation.
Moreover, this is possible in time $\reach$ and there are at most $\reach$ such interfaces.
We prove this assumption to be correct when considering corresponding reachability algorithms.

We first show the correctness of the algorithm.
Each interface $I$ that we iterate over witnesses the existence of a prefix computation $d^0 \rightarrow^* d$ with $I(d)$.
If $I$ is a positive instance of the cycle detection, we get a saturated cycle $f \rightarrow^+_\sat f$ which satisfies $I(f)$.
By Lemma \ref{Lemma:GlueComputations}, we then get a computation of the form $c^0 \rightarrow^* c \rightarrow^+_\sat c$ with $I(c)$.
Hence, by Lemma \ref{Lemma:SplittingLemma}, we obtain a live computation.

On the other hand, let a live computation be given.
By Lemma \ref{Lemma:SplittingLemma} we can assume it to be of the shape $c^0 \rightarrow^* c \rightarrow^+_\sat c$.
We let $I = (S,q,a)$ be the interface induced by $c$.
Formally, $S = \proj{C}(c)$, $q = \proj{L}(c)$, and $a = \proj{\Domain}(c)$.
Since $I$ witnesses the prefix $c^0 \rightarrow^* c$, the algorithm iterates over $I$ and passes it to the cycle detection.
Since the cycle $c \rightarrow^+_\sat c$ satisfies $I(c)$, the cycle detection accepts interface $I$ and the algorithm returns \emph{yes}.

The complexity of the algorithm can be estimated as follows.
We compute all interfaces witnessing prefix computations by a call to the modified reachability algorithm.
This takes time $\reach$.
Since we assume that there are at most $\reach$ such interfaces, iterating over them and passing each to the cycle detection takes time \mbox{$\reach \cdot \cycle$.}
Summing up, we get the running time of the algorithm:
\begin{align*}
	\reach + \reach \cdot \cycle = \bigO(\reach \cdot \cycle).
\end{align*}

\subsection*{Liveness Parameterized by Contributors}

We elaborate on the algorithm for $\LCL(\sizeC)$.
To this end, we show that the reachability algorithm for $\LCR(\sizeC)$ from \cite{Chini2018,Meyer2019} can be used to obtain the required interfaces.
We prove the correctness of this approach.
Finally, we discuss the complexity of the derived algorithm for $\LCL(\sizeC)$ in more detail.

\subparagraph*{Obtaining the Interfaces.}
We recall the fine-grained algorithm for $\LCR(\sizeC)$ presented in~\cite{Chini2018,Meyer2019}.
Given a leader contributor system $\lcs$ and final states $F \subseteq Q_L$, it decides in time $\bigO(\contributortimeReach)$ whether there is an initialized computation $c^0 \rightarrow^* c$ of $\lcs$ with $\pi_L(c) \in F$.
To this end, it computes a table $T$ with an entry $T[S] \subseteq Q_L \times \Domain$ for each $S \subseteq Q_C$.
The entry $T[S]$ contains all pairs $(q,a)$ that can be reached via a computation where the contributors discover the states depicted in the set $S$.

To formalize, we need the concept of incrementing computations.
These never delete states of the contributors.
A computation $\rho = c^0 \rightarrow c^1 \rightarrow \dots \rightarrow c^n$ is called \emph{incrementing} if $\proj{C}(c_i) \subseteq \proj{C}(c_{i+1})$ for each $i$.
We also write $c^0 \rightarrow_{\inc} c^n$.
The following lemma shows that the algorithm computes the interfaces for all incrementing prefixes.
\begin{lemma}
	\label{Lemma:ReachabilityContributorCorrectness}
	Let $I = (S,q,a) \in \IF$ be an interface.
	Then, there is an initialized computation $c^0 \rightarrow^*_\inc c$ with $I(c)$ if and only if $(q,a) \in T[S]$.
\end{lemma}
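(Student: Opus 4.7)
The plan is to prove both directions by induction, exploiting the inductive structure of the table $T$ as computed by the algorithm of \cite{Chini2018,Meyer2019}. First, I would make the update rule of $T$ explicit: starting from the base entry $T[\{q^0_C\}] \ni (q^0_L, \initmem)$, entries are produced by three kinds of rules --- leader transitions that update $(q,a)$ while keeping $S$ fixed, transitions of contributors whose current state already lies in $S$ (possibly updating $a$ through a write), and the introduction of a fresh contributor which enlarges $S' \subsetneq S$ to $S$ by tracing a path from $q^0_C$ through the newly added states, reading only symbols made available during the already-reconstructed history. With the rules in place, both implications become inductions over either the length of the computation or the order in which entries enter $T$.

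For the $(\Rightarrow)$ direction, given $c^0 \rightarrow^*_\inc c$ with $I(c)$, I would induct on the length of the computation. The base case is immediate. For the step, I inspect the final transition: a leader step or a contributor step whose acting copy is in a state already in $\proj{C}(c)$ matches the corresponding rule of $T$. The interesting subcase is when the step discovers a genuinely new contributor state $p$. Here I would apply a copycat-style rescheduling as in the proof of Lemma~\ref{Lemma:SplittingLemma}: the whole trajectory of that fresh copy from $q^0_C$ to $p$ is deferred to a block at the end of the prefix, producing a strictly shorter incrementing computation reaching $I' = (S \setminus \{p\}, q, a)$ (or reaching an appropriate intermediate interface if the fresh copy has visited several new states). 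The induction hypothesis then yields $(q,a) \in T[S \setminus \{p\}]$, and a single fresh-contributor update places $(q,a)$ into $T[S]$.

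For the $(\Leftarrow)$ direction, I would induct on the number of update steps until $(q,a)$ enters $T[S]$. Leader and existing-contributor rules translate directly into extensions of the inductively guaranteed incrementing computation. For the fresh-contributor rule producing $(q,a) \in T[S]$ from $(q,a) \in T[S']$ with $S' \subsetneq S$, I take the inductive computation $c^0 \rightarrow^*_\inc c'$ witnessing the $T[S']$ entry and append the path of a new contributor launched from $q^0_C$, interleaving its reads with the write events that supply the required symbols. Because $\proj{C}(c) \supseteq \proj{C}(c')$ holds by construction, the extended computation is still incrementing and matches $I$.

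The main obstacle will be the fresh-contributor case: the table records the set $S$ and the endpoint $(q,a)$ but not the order in which writes occurred along the underlying computation, whereas reconstructing an incrementing run requires that each read performed by the newly introduced contributor be preceded by a matching write. The cleanest remedy is to strengthen the inductive invariant of the update rule so that, whenever fresh introduction is applied, the path in the contributor automaton is only allowed to read symbols that previously appeared as the $a$-component of some entry visited along the current chain of updates; this matches precisely the availability bookkeeping that underlies the correctness of the algorithm in \cite{Chini2018,Meyer2019}. Modulo that strengthening, the two inductions are a straightforward unfolding of the update rules rather than a new technical contribution.
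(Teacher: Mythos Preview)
Your approach is quite different from the paper's, and more laborious than necessary. The paper does not re-derive the semantics of $T$ by induction over its update rules. Instead it invokes as a black box the correctness statement from \cite{Meyer2019} (restated as Lemma~\ref{Lemma:CorrectnessReachability}): $(q,a)\in T[S]$ iff there is an \emph{arbitrary} initialized computation $\rho=c^0\rightarrow^*c$ with $\proj{L}(c)=q$, $\proj{\Domain}(c)=a$, and $\States_C(\rho)=S$. It then proves a single bridging lemma (Lemma~\ref{Lemma:IncrementingComputation}): via copycat, any such $\rho$ can be turned into an incrementing computation ending in a configuration $d$ with $\proj{C}(d)=\States_C(\rho)$, and conversely any incrementing $\rho$ already satisfies $\proj{C}(d)=\States_C(\rho)$. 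The two lemmas compose directly to give Lemma~\ref{Lemma:ReachabilityContributorCorrectness}; no induction on table updates, and no explicit description of those updates, is required.

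Your direct induction could in principle be completed, but as written the $(\Rightarrow)$ case has a concrete gap. When the last step introduces a fresh state $p$, you propose to ``defer'' the full trajectory of the discovering copy to a block at the end and apply the IH to the remaining, shorter prefix reaching $(S\setminus\{p\},q,a)$. Two things go wrong. First, deferring is reordering, not shortening, so your induction measure does not decrease as stated. Second, excising or postponing that copy's steps can break the remaining computation: the copy may have performed writes that the leader or other contributors read earlier, and its last step may itself be a write $\wt{a}$, so the shortened prefix need not end with memory value $a$ at all. Repairing this (e.g.\ by additionally inserting copycats to re-supply the removed writes and tracking the altered memory value) is possible, but at that point you are essentially reproving the correctness of the algorithm from \cite{Meyer2019}, which the paper deliberately avoids by citing it.
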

For proving the lemma, we first restate a result from \cite{Meyer2019} showing correctness of the reachability algorithm.
To this end, we introduce the notion of states of a computation.
Let \mbox{$\rho = c^0 \rightarrow c^1 \rightarrow \dots \rightarrow c^n$} be a computation.
The \emph{states} of $\rho$ is the set of contributor states appearing along the computation. 
These are captured in
\begin{align*}
	\States_C(\rho) = \bigcup_{i \in [1..n]} \proj{C}(c^i). 
\end{align*}
Now we can restate the result.
It shows correctness of the algorithm for $\LCR(\sizeC)$.
\begin{lemma}[\cite{Meyer2019}]
	\label{Lemma:CorrectnessReachability}
	Let $q \in Q_L$, $a \in \Domain$, and $S \subseteq Q_C$.
	There is an initialized computation $\rho = c^0 \rightarrow^* c$ with $\proj{L}(c) = q$, $\proj{\Domain}(c) = a$, and $S = \States_C(\rho)$ if and only if $(q,a) \in T[S]$.
\end{lemma}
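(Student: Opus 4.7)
The plan is to establish both directions of the equivalence by inductive arguments that track how the table $T$ is populated during the algorithm's execution. The key conceptual invariant is that, at termination, $T[S]$ contains exactly those pairs $(q,a)$ admitting a witnessing initialized computation whose set of encountered contributor states is precisely $S$. Since \cite{Meyer2019} provides the underlying algorithm, I would treat its update rules as specified by the behavior one expects from the description: entries propagate under leader transitions, under contributor transitions among already-discovered states, and under contributor transitions that discover a new state (enlarging $S$).

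For the forward direction, I would induct on the length of $\rho$. In the base case, $\rho$ consists only of the initial configuration $c^0$, so $\States_C(\rho) = \{q^0_C\}$, $\proj{L}(c^0) = q^0_L$, and $\proj{\Domain}(c^0) = \initmem$; the initialization step of the algorithm places $(q^0_L, \initmem)$ into $T[\{q^0_C\}]$. For the inductive step, consider extending a computation $\rho$ by one transition $c \rightarrow c'$. By induction hypothesis, the pair for $c$ lies in $T[\States_C(\rho)]$. Classifying the transition as a leader step, a contributor step between states in $\States_C(\rho)$, or a contributor step introducing a fresh state $p' \notin \States_C(\rho)$, the corresponding update rule of the algorithm deposits the pair for $c'$ into $T[\States_C(\rho) \cup \proj{C}(c')]$, which matches $\States_C(\rho')$.

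For the backward direction, I would induct on the iteration of the algorithm at which the entry is written into $T[S]$. The base case corresponds to the initialization and is immediate. For the inductive step, every new entry in $T[S]$ was produced from a previously inserted entry $(q'',a'') \in T[S'']$ with $S'' \subseteq S$ via a transition justifying the update; by induction we obtain a computation $\rho''$ realizing the earlier entry, and extending $\rho''$ by that transition yields the required witness $\rho$ with $\States_C(\rho) = S$. The copycat principle invoked in the proof of Lemma \ref{Lemma:SplittingLemma} lets us summon additional contributors already parked in discovered states without disturbing leader or memory, so every transition demanding a contributor in some $p \in S$ can always be performed.

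The main obstacle is the tension between the set-based abstraction $T[S]$ and the multiset of contributors actually present along a real computation. To reconcile these, the proof must arrange already in the inductive step that whenever the algorithm commits a contributor to move out of some state $p$, enough copies have been spawned earlier to satisfy any subsequent demand in $p$. Formally, this means that when reconstructing a computation from $T[S]$, one instantiates, for each $p \in S$, a sufficient stock of identical contributors that are first driven to $p$ along a witnessing path (itself extracted recursively from the table) and then held there until needed. Once this accounting is in place, both implications reduce to routine inductions on computation length and algorithm iteration, respectively, and the equivalence follows.
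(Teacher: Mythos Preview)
The paper does not prove this lemma at all: it is restated verbatim from \cite{Meyer2019} and used as a black box to derive Lemma~\ref{Lemma:ReachabilityContributorCorrectness}. There is therefore no proof in the present paper to compare your proposal against.

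That said, your sketch is a reasonable outline of how such a correctness argument typically goes, and the two inductions (on computation length for soundness, on insertion order for completeness) are the standard shape. One caution: you are reconstructing the algorithm's update rules from what you ``expect'' them to be rather than from an actual specification, so the proof remains hypothetical until those rules are pinned down. In particular, the copycat bookkeeping you describe in the last paragraph is exactly the subtle point, and whether it goes through cleanly depends on how the algorithm in \cite{Meyer2019} handles the discovery of new contributor states versus transitions among already-discovered ones; without that specification your completeness direction is a plan rather than a proof.
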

Note that Lemma \ref{Lemma:ReachabilityContributorCorrectness} is slightly different.
It explicitly asks for an incrementing computation $\rho = c^0 \rightarrow^*_\inc c$ such that $c$ matches a given interface $I = (S,q,a)$.
To bridge the gap, we show that plain computations can always be mimicked by incrementing ones.
\begin{lemma}
	\label{Lemma:IncrementingComputation}
	There is an initialized computation $\rho = c^0 \rightarrow^* c$ if and only if there is an initialized incrementing computation $\rho^\inc = d^0 \rightarrow^*_\inc d$ with
	\begin{align*}
		\proj{L}(d) = \proj{L}(c), \proj{\Domain}(d) = \proj{\Domain}(c), \text{ and } \proj{C}(d) = \States_C(\rho).
	\end{align*}
\end{lemma}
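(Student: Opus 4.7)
The direction ($\Leftarrow$) is immediate: an incrementing computation is also an ordinary computation, so one takes $\rho := \rho^\inc$ and $c := d$. For the direction ($\Rightarrow$), my plan is to reuse the copycat argument that already underlies the proof of Lemma~\ref{Lemma:SplittingLemma}: given $\rho$, I will pad it with additional copycat contributors, one per state ever visited, each of whom eventually parks itself in that state and then never moves again.

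Concretely, write $\rho = c^0 \rightarrow c^1 \rightarrow \dots \rightarrow c^n$. For each $p \in \States_C(\rho)$ I pick the earliest index $t_p$ with $p \in \proj{C}(c^{t_p})$ and fix a contributor $P_p$ that is in $p$ at step $t_p$. I then introduce a fresh copycat $C_p$ which performs exactly the same sequence of transitions that $P_p$ performed during $[0, t_p]$, and afterwards idles in $p$ forever. In the constructed $\rho^\inc$, each transition of $C_p$ is inserted directly after the corresponding transition of $P_p$ in $\rho$. Such an insertion is always legal: right after $P_p$ writes $\wt{b}$, the memory contains $b$ and $C_p$ may repeat the same write; right after $P_p$ reads $\rd{a}$, the memory still contains $a$ and $C_p$ may repeat the read; internal transitions are harmless. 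Crucially, no copycat step disturbs the leader or the memory.

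Checking the three output conditions is then routine. The original contributors end in the states of $c$ while each copycat $C_p$ ends in $p$, giving $\proj{C}(d) = \proj{C}(c) \cup \States_C(\rho) = \States_C(\rho)$ (the inclusion $\proj{C}(c) \subseteq \States_C(\rho)$ is obvious). Since no copycat step affects leader or memory, we also obtain $\proj{L}(d) = \proj{L}(c)$ and $\proj{\Domain}(d) = \proj{\Domain}(c)$.

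The main obstacle will be verifying that $\rho^\inc$ really is \emph{incrementing}, including at the intermediate copycat steps. The key observation is that whenever any contributor (original or copycat) leaves a state $s$ at some step of $\rho^\inc$, then $s$ must have appeared in $\rho$ at a strictly earlier step, so $s \in \States_C(\rho)$ and $t_s$ is strictly less than the current original time. Hence the dedicated copycat $C_s$ has already finished its mimicry by then and is sitting in $s$, which preserves $s$ in the configuration after the transition. The corner case $s = q^0_C$ is covered because $C_{q^0_C}$ is defined with an empty mimic sequence and simply stays in $q^0_C$ for all time.
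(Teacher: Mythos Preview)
Your proof is correct and follows the same copycat approach as the paper: add fresh contributors that mimic an original contributor until reaching a target state and then idle there forever. The only cosmetic difference is that you add one copycat for \emph{every} state in $\States_C(\rho)$ (choosing the earliest appearance), whereas the paper only adds copycats for states that actually get deleted along $\rho$; your choice is slightly more uniform and lets you give the explicit ``$t_s < $ current time'' argument for incrementingness that the paper leaves implicit.
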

\begin{proof}
	If an incrementing computation $\rho^\inc = d^0 \rightarrow^*_\inc d$ is given, we set $\rho = \rho^\inc$.
	The requirements on the projections are met.
	In particular, we have $\proj{C}(d) = \States_C(\rho)$ by the fact that $\rho$ is incrementing.
	
	For the other direction, let a computation $\rho = c^0 \rightarrow^* c$ be given.
	Assume, $\rho$ is not incrementing.
	Otherwise, we are done.
	There are configurations $c^i$ and $c^{i+1}$ in $\rho$ such that $\proj{C}(c^{i+1})$ does not contain $\proj{C}(c^i)$.
	This means, there is a state $p \in \proj{C}(c^i) \setminus \proj{C}(c^{i+1})$.
	This state gets lost by the transition $c^i \rightarrow c^{i+1}$, there is only one contributor $P$ with current state $p$ which does a transition to another state.
	
	We apply the copycat lemma to get an additional contributor $P^\cop$ that mimics $P$.
	It copies every move of $P$.
	Once $P^\cop$ reaches state $p$, it keeps staying in the state.
	With the new contributor, the state does not get deleted and is preserved throughout the computation.
	
	We introduce such an additional contributor for each state $p$ that is deleted along $\rho$.
	Hence, we obtain an incrementing computation $\rho^\inc = d^0 \rightarrow^*_\inc d$ with $\proj{C}(d) = \States_C(\rho)$.
	Leader and memory act the same way as before.
	We get $\proj{L}(d) = \proj{L}(c)$ and $\proj{\Domain}(d) = \proj{\Domain}(c)$.
\end{proof}
We combine Lemma \ref{Lemma:CorrectnessReachability} and Lemma \ref{Lemma:IncrementingComputation} to prove Lemma \ref{Lemma:ReachabilityContributorCorrectness}.
\begin{proof}
	Assume there is an initialized computation $\rho = c^0 \rightarrow^*_\inc c$ with $I(c)$.
	Then, we get that $\proj{L}(c) = q$, $\proj{\Domain}(c) = a$ and $\proj{C}(c) = S$.
	Since $\rho$ is incrementing, we get that $\proj{C}(c) = \States_C(\rho)$.
	Hence, by Lemma \ref{Lemma:CorrectnessReachability} we get that $(q,a) \in T[S]$.
	
	For the other direction, let $(q,a) \in T[S]$.
	By Lemma \ref{Lemma:CorrectnessReachability} we get a computation $\rho = c^0 \rightarrow^* c$ with $\proj{L}(c) = q$, $\proj{\Domain}(c) = a$, and $S = \States_C(\rho)$.
	Invoking Lemma \ref{Lemma:IncrementingComputation}, we obtain an incrementing computation $\rho^\inc = d^0 \rightarrow^*_\inc d$ with $I(d)$.
	This completes the proof.
\end{proof}
\subparagraph*{Correctness of the Approach.}
For applying Theorem \ref{Theorem:TimeLCL}, we need to show that the interfaces extracted from the reachability algorithm are indeed all interfaces that witness a prefix.
To this end, we show that restricting to incrementing prefixes is sound and complete.
\begin{lemma}
	\label{Lemma:IncrementingPrefix}
	Let $q \in Q_L$ and $a \in \Domain$.
	There is a finite initialized computation $c^0 \rightarrow^* c \rightarrow^+_\sat c$ with $\proj{L}(c) = q$ and $\proj{\Domain}(c) = a$ if and only if there is a finite initialized computation $d^0 \rightarrow_\inc^* d \rightarrow^+_\sat d$ with $\proj{L}(d) = q$ and $\proj{\Domain}(d) = a$.
\end{lemma}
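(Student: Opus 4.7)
\medskip

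The backward direction is immediate: every incrementing computation is in particular a computation, so any witness $d^0 \rightarrow^*_\inc d \rightarrow^+_\sat d$ already proves the left-hand side with $c=d$. So the real content is the forward direction, and the plan is to adapt the copycat construction already used in the proofs of Lemma~\ref{Lemma:SplittingLemma} and Lemma~\ref{Lemma:IncrementingComputation} to a setting where a saturated cycle must be preserved along with the prefix.

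Concretely, start from a given $\rho_{\mathit{pre}} = c^0 \rightarrow^* c$ followed by $\rho_{\mathit{cyc}} = c \rightarrow^+_\sat c$. Walk through $\rho_{\mathit{pre}}$ and, for each transition $c^i \rightarrow c^{i+1}$ and each state $p \in \proj{C}(c^i) \setminus \proj{C}(c^{i+1})$ that is about to disappear, introduce a fresh copycat contributor $P^\cop$ that shadows every move of the contributor performing this transition until reaching $p$, and then freezes there for the rest of time. Doing this for every deletion yields a modified prefix $\rho_{\mathit{pre}}^\inc = d^0 \rightarrow^*_\inc d$ whose leader and memory behaviour are identical to those of $\rho_{\mathit{pre}}$, so $\proj{L}(d) = q$ and $\proj{\Domain}(d) = a$, while $\proj{C}(d) = \proj{C}(c) \cup F$, where $F$ is the finite set of frozen states collected along the way.

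Next, I want to run the original cycle $\rho_{\mathit{cyc}}$ on top of $d$ to obtain $d \rightarrow^+_\sat d$. The frozen copycats simply do not participate; every configuration $d'$ on the cycle then satisfies $\proj{C}(d') = \proj{C}(c') \cup F$, where $c'$ is the corresponding configuration of $\rho_{\mathit{cyc}}$. Saturation of the original cycle gives $\proj{C}(c') \subseteq \proj{C}(c)$, so $\proj{C}(d') \subseteq \proj{C}(c) \cup F = \proj{C}(d)$, which is exactly what is needed for $d \rightarrow^+_\sat d$ to be saturated. Concatenating $\rho_{\mathit{pre}}^\inc$ with this copy of the cycle produces the required $d^0 \rightarrow^*_\inc d \rightarrow^+_\sat d$.

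The only delicate point is making sure that the copycat construction is legitimate, i.e.\ that the added contributors only execute transitions that the contributor automaton actually allows and only read values that are available in the memory at the moment they act. This is exactly the content of the copycat lemma recalled from \cite{Esparza2013}: each $P^\cop$ shadows a real contributor $P$ step by step, so every move of $P^\cop$ is a copy of a move of $P$ on the same memory, and freezing at $p$ is just choosing to stop taking transitions. Once this is observed, the construction simply packages Lemma~\ref{Lemma:IncrementingComputation} together with the observation that a saturated cycle remains saturated when extra frozen contributors are added to its base configuration.
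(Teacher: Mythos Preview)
Your proof is correct and follows essentially the same approach as the paper's: apply the copycat construction (Lemma~\ref{Lemma:IncrementingComputation}) to make the prefix incrementing, then simulate the original saturated cycle on the enlarged configuration with the added contributors left idle. The paper phrases the cycle simulation via the cardinality inequality $\card_p(d) \geq \card_p(c)$ rather than your explicit ``frozen set'' $F$, but the argument and the saturation check $\proj{C}(d') \subseteq \proj{C}(d)$ are the same.
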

\begin{proof}
	One direction is trivial.
	For the other direction, let a computation $c^0 \rightarrow^* c \rightarrow^+_\sat c$ with $\proj{L}(c) = q$ and $\proj{\Domain}(c) = a$ be given.
	Let the prefix $c^0 \rightarrow^* c$ be denoted by $\rho$.
	By Lemma~\ref{Lemma:IncrementingComputation} there is an incrementing computation $\rho^\inc : d^0 \rightarrow^* d$ such that $\proj{L}(d) = q$ and $\proj{\Domain}(d) = a$.
	Moreover, following the proof of Lemma~\ref{Lemma:IncrementingComputation}, we observe that the computation $\rho^\inc$ is obtained from $\rho$ by only adding contributors.
	This means that the cardinality in each state grows, we get that $\card_p(d) \geq \card_p(c)$ for all $p \in Q_C$.
	
	Now we simulate the cycle $c \rightarrow^+_\sat c$ on the larger configuration $d$.
	Leader and memory act as before.
	Whenever there is a contributor in a certain state $p$ acting in $c \rightarrow^+_\sat c$, we can provide it also from $d$ since $\card_p(d) \geq \card_p(c)$.
	Hence, we get a cycle $d \rightarrow^+_\sat d$.
	Note that saturatedness is preserved since $\proj{C}(d) \supseteq \proj{C}(c)$.
\end{proof}
Let interface $I = (S,q,a)$ witness the existence of a prefix $c^0 \rightarrow^* c$ which is part of a live computation $c^0 \rightarrow c \rightarrow^+_\sat c$. 
By Lemma \ref{Lemma:IncrementingPrefix}, there is a live computation \mbox{$d^0 \rightarrow^*_\inc d \rightarrow^+_\sat d$} with incrementing prefix.
Moreover, the incrementing prefix is witnessed by an interface \mbox{$I' = (S',q,a)$} with $I'(d)$.
Hence, we can consider $I'$ instead of $I$.
This means that the interfaces obtained from the reachability algorithm, namely the interfaces witnessing incrementing prefixes actually suffice.
With these interfaces we can already witness all prefixes.

\subparagraph*{Complexity of the Algorithm.}
Like stated in the proof of Theorem \ref{Theorem:TimeLCL}, the algorithm for $\LCL(\sizeC)$ first calls the reachability algorithm for $\LCR(\sizeC)$.
According to Theorem \ref{Theorem:LCRContributor}, this takes time $\bigO(\contributortimeReach)$.
The algorithm computes the table $T$ which contains all interfaces.
Then, we iterate over all interfaces $(S,q,a)$ with $(q,a) \in T[S]$ and $q \in F$.
Each of these interfaces is passed as an input to $\CYC$.
The algorithm stops if a cycle is found.

We iterate over at most $2^\sizeC \cdot \sizeL \cdot \sizeD$ many interfaces.
Since a single invocation of $\CYC$ takes time $\bigO(\fixpointtime)$, the time needed for the complete iteration is $\bigO(2^\sizeC \cdot \sizeL \cdot \sizeD^2 \cdot (\sizeC^2 \cdot \sizeD + \sizeL^2 \cdot \sizeD^3))$.
Adding up the time complexities, we obtain the result depicted in Corollary \ref{Corollary:LCLContributorTime}.

\section{Proofs of Section \ref{Section:ParameterizationLeaderDomain}}
\label{Section:AppendixParameterizationLeaderDomain}

We provide proofs and details for Section \ref{Section:ParameterizationLeaderDomain}.

\subparagraph{Leader Validity.}
The leader should visit the sequence of states in $w$ and reach the target state $q$ while reading the values in $\beta$ at the positions indicated by $\sigma$.
Formally, $x = (w, q, \sigma)$ is \emph{valid for the leader wrt. $\beta$} if $\abs{\beta} = \ord(x)$ and for all $a_i$ the following holds.
If $a_i\neq \bot$, the leader has a transition $(q_i ,\wt{a_i}, q_{i+1}) \in \delta_L$.
If $a_i=\bot$, we have one of the following: $q_i = q_{i+1}$ or $(q_i, \varepsilon,q_{i+1}) \in \delta_L$ or $(q_i,\rd{b},q_{i+1}) \in \delta_L$. 
(Notice here that we slightly vary in our definition from the main section i.e. we add an additional condition that $q_i = q_{i+1}$. 
This is not a necessary addition but only so that the proofs can be greatly simplified.)
Here, $b$ is a value in $\beta$ written before position $i$.
Formally, $b$ is from the set $S_\beta(i) = \Setcon{\beta_{\ell}}{\sigma(\ell) \leq i}$.
Note that $q_{n+1} = q$.
We use the predicate $\LValid_{\beta}(x) = \TRUE$ \mbox{to denote that $x$ is valid for the leader wrt. $\beta$.}

\subparagraph{Contributor Validity.}
It is the contributors' task to provide the first writes along $w$ in the order indicated by~$\sigma$.
Let $\alpha$ be a first-write sequence of length $t$ with $t <\ord(x)$.
Assume the first writes in $\alpha$ were already provided and there is a $(t+1)$-st first write that has to be provided next.
To define the expected behavior of the contributors, we make explicit the writes they can rely on.
These stem from the leader and from fellow contributors.
For the leader, given a $q\in Q_L$ and a set $\Gamma \subseteq \Domain$ of values available to the leader due to first writes of $\alpha$, we define $\Loop(q,\Gamma) = \Setcon{b}{q \xrightarrow{u.!b.v}_L q \wedge u.!b.v \in (\rd{\Gamma}\ \cup\ \wt{D})^*}$.
This set contains all memory values that the leader may write in a loop at state $q$ while reading $\Gamma$.
The values that can be written by the contributors at a certain position are given by $S_\alpha(i)$.
With this, we obtain the regular language of writes available to the contributors:
\begin{align*}
\Expr(x, \alpha) = \Gamma_1^* \setcon{a_1,\varepsilon} \Gamma_2^* \setcon{a_2,\varepsilon} \dots \Gamma_j^*, \; \text{where} \; \Gamma_i = \Loop(q_i,S_{\alpha}(i)) \cup S_{\alpha}(i).
\end{align*}
Here, $j = \sigma(t+1)$ is the index of the $(t+1)$-st first write.
Moreover, \mbox{we interpret $a_i = \bot$ as $\varepsilon$.}

The witness $x$ is valid for the contributors wrt. $\beta$ if $\abs{\beta} = \ord(x)$ and if each value $\beta_i$ can be written by a contributor. 
To be precise, before writing the value, the contributor is only allowed to read from $\Expr(x, \beta_1\dots \beta_{i-1})$. 
To make this formal, fix $i \in [1..\ord(x)]$ and let $Q_i\subseteq Q_C$ be the contributor states that can produce the first write, $Q_i = \Setcon{p}{\exists p' : p \xrightarrow{\wt{\beta_i}}_C p' }$.
The set $\Trace_C(Q_i) = \Setcon{w}{\exists p \in Q_i: q^0_C \xrightarrow{w}_C p}$ contains the transition sequences that lead to~$Q_i$.
Let $h:\Ops{\Domain} \rightarrow \Domain \cup \setcon{\varepsilon}$ be the homomorphism that only preserves reads, $h(\wt{b}) = \varepsilon$ and  $h(\rd{b}) = b$ for each $b\in D$. 
Then the witness $x$ is \emph{valid for the $i$-th first write of $\beta$} if
\begin{align*}
\Expr(x, \beta_1\ldots\beta_{i-1}) \cap h(\Trace_C(Q_i) ) \neq \emptyset.
\end{align*}
We use $\CValid^i_\beta(x) = \TRUE$ to indicate non-emptiness of the intersection.
If $x$ is valid for all first writes, we call $x$ \emph{valid for the contributors wrt. $\beta$}.
Formally, the conjunction $\CValid_\beta(x) = \bigwedge_{i \in [1..\ord(x)]} \CValid^i_\beta(x)$ has to evaluate to $\TRUE$.

\subsection*{Proof of Lemma \ref{Lemma:ValidityPolyTime}}
Validity with respect to the leader is simple to verify:
the witness describes a run of the leader the existence of which can be checked in polynomial time.
For validity with respect to the contributors one needs to test whether the intersection 
$\Expr((w,q,\sigma),\beta_1 \dots \beta_{i-1}) \cap h(\Trace_C(Q_i))$
is non-empty for each first write.
Clearly this can be done in polynomial time.

\subsection*{Proof of Lemma \ref{Lemma:Witnesses}}

Here we need to prove that there is a computation of the form $ c_0 \rightarrow^* c$ with $\proj{L}(c) = q $ iff there is a witness $z =  (w, q, \sigma)$ and a first write sequence $\beta$ such that $\LValid_\beta(z) = \TRUE$ and $\CValid_\beta(z) = \TRUE$.

We first prove the easy direction where we assume the computation of the form $ c_0 \rightarrow^* c$ with $\proj{L}(c) = q $ and prove the existence of the witness. Let the sequence of transitions that appear in the assumed computation be $\tau_1 \dots \tau_n$, notice that there can be both transitions of leader and contributor in the same. Firstly mark all the transitions that belong to the leader (say with a color red). We will construct later the required witness string from these marked transitions. Now for each $d \in \Domain$, perform the following. Mark each of the contributor transition of the form $\tau_i = p \xrightarrow{\wt{d}}p' $ with a color say yellow. Now retain the very first transition marked yellow and delete rest of them. Complete the process for each of the memory values $d \in \Domain$, if there are no contributor write transitions corresponding to a memory value, we continue with the next one. Finally delete all the other contributor transitions that are not marked, let the resulting sequence be $\pi =  \tau_{i_1} \dots \tau_{i_j}$. Further let the sequence of transitions marked yellow be $\tau_{i''_1} \dots \tau_{i''_k} $ and the sequence marked red be $\tau_{i'_1} \dots \tau_{i'_{j-k}} $. Notice that the sequence of transitions marked yellow will automatically provide us with the first write sequence, let the sequence be $\beta = d_1 \dots d_k$ [i.e. the sequence of memory values that appear in $\tau_{i''_1} \dots \tau_{i''_k}$, in that order]. 

Now, let $\sigma : [1..k] \mapsto [1..j-k]$ be given by $\forall \ell \in [1..k], \sigma(\ell) = i''_\ell -\ell $, i.e. it  simply maps each first write to the number of leader transitions that occurs before it. 

To construct the witness string, let $\tau_{i'_\ell} = (q_{\ell},a_{\ell},q_{{\ell+1}})$. The required witness string is given by $w = (q_1,x_1)\dots (q_{{j-k}},x_{j-k})$, where $x_i = d$ if $a_i = \wt{d}$ for some $d \in \Domain$ and $x_i = \bot$ otherwise.
Clearly $x=(w,q,\sigma)$ is the required witness, it is easy to check that $\LValid_\beta(x) = \TRUE$ and $\CValid_\beta(x) = \TRUE$ for the same.

For the other direction, we assume that there is a valid witness $x= (w, q, \sigma)$ with respect to  a first write sequence $\beta$ and show that there is a computation of the form $ c_0 \rightarrow^* c$ with $\proj{L}(c) = q $.
Let $w = (q_1,a_1)\cdots(q_n,a_n)$ and let $\beta = b_1 \dots  b_k$. Since the witness is given to be valid, we have that $\LValid_\beta(x) = \TRUE$ and $\CValid_\beta(x) = \TRUE$.

Since $\LValid_\beta(x) = \TRUE$, there is a  valid sequence of transitions $t =  \tau_1 \dots \tau_n$ such that $\tau_i = (q_i,\wt{a_i},q_{i+1})$ if $a_i \neq \bot$, otherwise $\tau_i = (q_i,x,q_{i+1})$, where $x = \rd{d}$ for some $d \in \Domain$ or $x = \epsilon$.

Further since  $\CValid_\beta(x) = \TRUE$, we have for each $i \in [1..k]$, we have that $\Expr(x, \beta[1..i-1]) \cap h(\Trace_C(Q_i) ) \neq \emptyset $. We recall that 	$\Expr(x, \alpha) = \Gamma_1^* \setcon{a_1,\varepsilon} \Gamma_2^* \setcon{a_2,\varepsilon} \dots \Gamma_j^*$, where $ \Gamma_i = \Loop(q_i,S_{\alpha}(i)) \cup S_{\alpha}(i)$. Now for each $i$, let $\gamma^i$ be the witness string in 
$\Expr(x, \beta[1..i-1]) \cap h(\Trace_C(Q_i) )$, these are the reads that the contributor will ever perform (here, $\beta[1..i] = b_1\ldots b_{i}$). Let $\tau(\gamma^i)$ be sequence of transitions in the contributor that generates such a witness string. We let the function $\pi$ to be a monotonic function that maps each letter occurring in the witness string $\gamma^i$ to the position in the expression i.e. $\forall j \in [1..|\gamma^i|]$, $\pi(i,j) = \ell$ if $\gamma^i[j] \in \Gamma_\ell \cup \{a_\ell \} $, clearly $\pi(i,j) < \sigma(i) $. Intuitively this corresponds to the positions where the contributor reads the required symbol. We will also classify the type of the symbols that occur in each $\gamma^i$ as being $\ld, \ct, \lp$ corresponding to whether they are read of a leader write/ contributor write or a write due to a loop.

We let $\lambda$ to be the function defined as $\lambda (i,j) = \ld $ if $\pi (i,j) = \ell$ and $\gamma^i[j] = a_\ell$, $\lambda (i,j) = \lp $ if $\pi (i,j) = \ell$ and $\gamma^i[j] \in  \Loop(q_\ell,S_{\beta[1..i-1]}(\ell))$ and $\lambda (i,j) = \ct $ otherwise. For each $\lambda (i,j) = \lp $, let $\Loop(i,j)$ be the sequence of transitions that forms a loop and produces $\gamma^i[j]$ i.e. it is the sequence of transitions that witnesses a run of the form $ q_\ell \xxrightarrow{u.\wt{\gamma^i[j]}.v}_L q_\ell $ such that $ u.\wt{\gamma^i[j]}.v \in (\rd{S_{\beta[1..i-1]}(\ell)}\ \cup\ \wt{D})^*$.

We now show how to extend our sequence of leader transitions $t$ to $t^*$. For each $i \in [1..k]$ and for each $j \in [1..|\gamma^i|]$, if $\lambda(i,j) = \lp$, then we insert in position before the transition corresponding to $\ell = \pi(i,j)$ in $t$ (i.e. the $\ell^{th}$ transition in $t$) the sequence of transitions $\Loop(i,j)$. We do this based on the order of $i$ (i.e. we first for it for $i=1$, then for $i = 2$ and so on). 
We will assume that the newly added transitions are colored blue and the original ones white, we will need these colors later to specify the invariant that we will maintain when constructing the run. Notice that $t^*$ can include transitions that reads a value from the memory.
For any $d \in \Domain$, let $\#_d(t^*)$ represent the number of transitions in $t^*$ that read the value $d$ from memory. Similarly, let $\#_d(\gamma^i)$ represent the number of times $d$ occurs as a contributor read in $\gamma^i$ (i.e. $\#_d(\gamma^i) = | \{ j \mid \gamma^i[j ]= d \wedge \lambda(i,j) =\ct \}| $ ). Let $\#_d = \#_d(t^*) + \#(\gamma^1) + \cdots \#(\gamma^k)$, this will be the number of contributors we will need for each  $d \in \Domain$ that contributor can write.

We now show how to construct the required run in the leader contributor system, for this we start with a configuration consisting of the leader initial state and corresponding to each $d \in \{b_1 \cdots b_k \}$, we have $\#_d$ many contributors in the initial contributor state. We will refer to these set of contributors collectively as $[d]$. 
The intention is to move them collectively [i.e. they make similar moves simultaneously till they reach a state from where they can produce the letter $d$ ].
To construct the required run, we have one handle into each of $t^*, \gamma^1, \dots , \gamma^k $ that stores the index into these string, let these set of handles be $\idx = (\idx_{t^*},\idx_{\gamma^1},\dots,\idx_{\gamma^k })$. 
We will sometimes omit the subscript when it is clear from the context.
These handles store the position in the respective strings to indicate the position up to which the string has been processed, initially they are set to the first location in the string. The run that we construct will have the property that for each $d \in \Domain$, there are at-least $\#_d(\idx) =  \#_d(t^*[\idx..]) + \#(\gamma^1[\idx.. ]) + \cdots \#(\gamma^k[\idx ..])$ many contributors (including the ones that are yet to reach a state from where $d$ can be written) that can still produce $d$ (here $\alpha[j ..]$ indicates the str $\alpha$ starting from $j$). 
Further we also maintain the invariant that for any $\idx$, if the number of white symbols in $t^*[1.. \idx]  = \ell$ (denoted $\white(t^*(\idx)) = \ell$), then  for each $j$ such that $\sigma(j) \leq \ell$, $\idx_{\sigma^j} = |\sigma^j|$ i.e. at the positions of first writes the corresponding contributors are available. This follows from the fact that $\pi(i,j) \leq \sigma(i)$. Finally we also maintain the invariant that the leader process is always in the target state of the transition $t^*(\idx)$.

We are now ready to construct the required run inductively. For the base case, we start with the initial configuration with $\#_d$ many contributors (for each $d \in \Domain$) in their initial state . Let $\rho$ be the run inductively constructed and let $\idx$ be the current index up to which we have processed. 
Firstly for each $i \in [1..k]$, we make any possible internal moves of $\tau(\gamma^i)$ starting from the last transition that was executed in this sequence (recall this is the sequence of moves that generated the witness string $\gamma^i$). Suppose for any $i \in [1..j]$, if $\lambda(i, \idx_{\gamma^i}) =\ct$ and $\gamma^i(\idx) = b_j $ for some $j \leq i$, then clearly $\sigma(j) \leq \white(t^*(\idx))$ [For any contributor read, the first write is always before]. 
From this and our invariant, we have that there are $\#_{b_j}(\idx)$ many contributor in the state that can produce $b_j$, we can send one contributor to write the required value to memory. Following this, we move all the contributors in $[b_j]$ to execute the corresponding transition in $\tau(\gamma^i)$, we also increment $\idx_{\alpha^i}$ (notice that this would ensure that our invariant is not violated). 
Suppose for some $i \in [1..k]$, we have that $\lambda(i,\idx) = \lp$ and $\pi(i,\idx) = \white(t^*(\idx))-1$, then clearly there is a loop sequence $\Loop(i,\white(t^*(\idx)))$ that is present. We execute such a sequence till the loop writes the required symbol onto shared memory, move the set of contributors $[b_i]$ to execute the corresponding read transition. We then execute the rest of transitions in the loop. Notice that executing the loop may require reading contributors, however existence of  contributors that can provide such symbols is ensured by our invariant. Finally we update the $\idx$ by moving $\idx_{t^*}$ to position at end of the loop and by incrementing $\idx_{\gamma^i}$. It is easy to see that even in this case the invariant is maintained. Also notice that we added loops so that the loop required by $\gamma^i$ is found earlier to $\gamma^j$ when $i < j$. Hence we can process each $\sigma^i$ completely before proceeding to the next one.

Finally we process the leader. If the current transition $t^*[\idx]$ is a read of the contributor, then we move one contributor to write the corresponding value to memory and make the leader move. We also update the $\idx$ by incrementing $\idx_{t^*}$. Otherwise we make the leader move and update the $\idx$. If the move of the leader was a write of value to shared memory, for each $i$ such that $\pi(i,\idx) = \white(t^*(\idx))$ and $\lambda(i,\idx) = \ld$, we execute the corresponding transition from $\tau(\gamma^i)$ which reads the value written by the leader and update $\idx$ appropriately. It is easy to see that such a run is the required valid run in the system.

\subparagraph*{Witness Concatenation.}
The \emph{witness concatenation} $(w_1,q_1,\sigma_1) \times  (w_2,q_2,\sigma_2) = (w_1.w_2, q_2, \sigma)$ concatenates the sequences of leader-memory pairs.
Note that this may repeat states.
The target state is the one of the second witness. 
The map $\sigma$ is given by $\sigma:[1..i+j] \mapsto [1.. |w_1| + |w_2|]$ with $\sigma(\ell) = \sigma_1(\ell)$ for all $\ell\leq i$ and $\sigma(\ell) = \sigma_2(\ell - i) + |w_1|$ for all $\ell \in [i+1 .. i+j]$.

\subparagraph*{Shrink Operator.}
Given a witness $(w, q, \sigma)$, the function $\Shrink$ removes the first repetition of states in $w$, if any.  
Let $w=(q_1,a_1)\ldots (q_n,a_n)$ and let $x$ be the least index such that $q_x = q_y$ for some $y \neq x$.
Fix the minimal of these $y$.
Then $\Shrink(w,q,\sigma)= (w',q,\sigma')$, where $w' = (q_1,a_1)\ldots (q_{x-1},a_{x-1})(q_y,a_y) \ldots (q_n,a_n)$. 
Moreover, $\sigma'(\ell) = \sigma(\ell)$ if $\sigma(\ell) < x$, $\sigma'(\ell) = x$ if \mbox{$x \leq\sigma(\ell) \leq y$} and $\sigma'(\ell) = \sigma(\ell) -y+x$ otherwise.
If the input is a short witness, $\Shrink$ is the identity.
We use $\Shrink^*$ for the repeated \mbox{application of $\Shrink$ until a fixed point is reached.}

\subsection*{Proof of Lemma \ref{Lemma:RestrictionToShortWitnesses}}

Before we turn to the proof of Lemma \ref{Lemma:RestrictionToShortWitnesses}, we prove some auxiliary statements that significantly simplify the proof.
The first lemma states that leader validity of a witness is preserved under repeatedly applying the shrinking operator.
\begin{lemma}
	\label{Lemma:LeaderValidShrink}
	Let $\beta$ be a first-write sequence and $x \in \Wit$ a witness with $\LValid_\beta(x) = \TRUE$. 
	Then, we have that $\LValid_\beta(\Shrink^*(x)) = \TRUE$.
\end{lemma}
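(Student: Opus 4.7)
The plan is to reduce the statement to a single shrinking step. Since $\Shrink^*$ is obtained by iterating $\Shrink$ until a fixed point is reached, a straightforward induction on the number of iterations reduces the claim to showing $\LValid_\beta(x) \Rightarrow \LValid_\beta(\Shrink(x))$ for an arbitrary witness $x$. If $x$ is already short, $\Shrink(x) = x$ and there is nothing to do, so we may assume $w = (q_1, a_1) \ldots (q_n, a_n)$ contains a repetition, with $x$ the least repeated index and $y$ minimal such that $q_x = q_y$. Then $\Shrink(x) = (w', q, \sigma')$ removes the loop from position $x$ through $y - 1$.

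First I would verify that $w'$ still encodes a proper run of the leader ending in $q$. On the prefix $(q_1, a_1) \ldots (q_{x-1}, a_{x-1})$ the sequence of required transitions is identical to the one in $w$, so these transitions are realisable in $\delta_L$. The only potentially new transition is the one joining the prefix to the retained suffix: in $w'$ it goes from $q_{x-1}$ to the state at position $x$, which by construction is $q_y = q_x$. Hence this transition is exactly the one already used in $w$ between positions $x-1$ and $x$, and hence valid. The suffix $(q_y, a_y) \ldots (q_n, a_n)$ of $w'$ is a verbatim copy of the corresponding suffix of $w$, so all its transitions remain valid as well, and the target state $q_{n+1} = q$ is still reached.

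The main subtlety, and the part I expect to require care, is that the read transitions in $w'$ must still draw their symbols from the appropriate sets $S_\beta(\cdot)$, which have shifted due to the re-indexing encoded in $\sigma'$. For this I would show the following invariant: for every position $i'$ of $w'$, letting $i$ be the corresponding position in $w$ (that is, $i = i'$ if $i' < x$ and $i = i' + y - x$ if $i' \geq x$), we have $S_{\beta}^{w}(i) \subseteq S_{\beta}^{w'}(i')$, where $S^w_\beta$ and $S^{w'}_\beta$ denote the availability sets computed with respect to $\sigma$ and $\sigma'$. This is a case analysis on where $\sigma(\ell)$ lies relative to the cut interval $[x, y]$, using the three clauses in the definition of $\sigma'$: entries with $\sigma(\ell) < x$ are unchanged, entries with $\sigma(\ell) \in [x, y]$ collapse to position $x$ (hence become available no later than in $w$), and entries with $\sigma(\ell) > y$ shift down by $y - x$ (hence again become available no later than in $w$). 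With the invariant in place, any read $\rd{b}$ at position $i'$ of $w'$ that comes from a read at position $i$ of $w$ satisfies $b \in S_\beta^{w}(i) \subseteq S_\beta^{w'}(i')$, so the read remains leader-valid.

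Combining the three verifications yields $\LValid_\beta(\Shrink(x)) = \TRUE$, completing the single-step argument and hence, by induction, the lemma.
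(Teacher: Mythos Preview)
Your proposal is correct and follows essentially the same approach as the paper: reduce to a single application of $\Shrink$ by induction, verify that the non-read transitions (writes, $\varepsilon$, and the join at the cut point) carry over verbatim using $q_x = q_y$, and then handle read transitions by a case analysis on where $\sigma(\ell)$ falls relative to the excised interval, establishing that the availability sets can only grow (or stay equal) after shrinking. The paper's proof is organised around exactly the same three cases for $\sigma(\ell)$ and proves the same inclusion $S_\beta^w(i) \subseteq S_\beta^{w'}(i')$ (in fact equality except on the collapsed interval). One minor point: you reuse the symbol $x$ for both the witness and the cut index, which is a notational clash worth cleaning up; the paper uses $r$ and $t$ for the indices.
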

\begin{proof}
	We show that $\LValid_\beta(\Shrink(x)) = \TRUE$.
	Then, the above statement follows by induction.
	To this end, assume $x$ is given by $(w,q,\sigma)$ with $w = (q_1,a_1) \dots (q_n,a_n)$.
	If $\Shrink(x) = x$, there is nothing to show.
	Otherwise, there are indices $r < t$ such that $\Shrink(x) = (w',q,\sigma')$ where $w' = (q_1,a_1) \dots (q_{r-1},a_{r-1}) . (q_t,a_t) \dots (q_n,a_n)$.
	The map $\sigma'$ of the witness is defined by $\sigma'(\ell) = \sigma(\ell)$ if $\sigma(\ell) < r$, $\sigma'(\ell) = r$ if $r \leq \sigma(\ell) \leq t$, and $\sigma'(\ell) = \sigma(\ell) - t + r$ for $\sigma(\ell) > t$.
	
	For proving leader validity, let $a_i$ be a symbol in $w'$.
	Since $a_i$ also occurs in $w$ and $\LValid_\beta(x) = \TRUE$, we get one of the following.
	(1) There is a write transition $q_i \xrightarrow{\wt{a_i}}_L q_{i+1}$, (2) $q_i = q_{i+1}$, (3) there is an $\varepsilon$-transition $q_i \xrightarrow{\varepsilon}_L q_{i+1}$, or (4) there is a read transition $q_i \xrightarrow{\rd{b}}_L q_{i+1}$ with $b \in S_\beta(i) = \Setcon{\beta_\ell}{\sigma(\ell) \leq i}$.
	
	For Cases (1) and (3), note that write and $\varepsilon$-transitions carry over from $x$ to $\Shrink(x)$.
	The only subtlety occurs when $i = r-1$.
	Validity of $x$ guarantees a transition $q_{r-1} \xrightarrow{\wt{a_{r-1}} / \varepsilon}_L q_r$.
	But $q_r = q_t$.
	Hence, we have the needed transition for $\Shrink(x)$.	
	
	In Case (2), we get that $q_i = q_{i+1}$. Since the operator $\Shrink$ cuts out the first occurrence of a repeating state, Case (2) can only happen when $i \geq t$.
	Then, the equality of states is also true in $\Shrink(x)$.
	
	In the last case, we have to show that the read transition carries over to $\Shrink(x)$.
	Essentially, we need to prove that the index shift that occurs when passing from $w$ to $w'$ is consistent with the sets $S_\beta(i)$ and $S'_\beta(i) = \Setcon{\beta_\ell}{\sigma'(\ell) \leq i}$.
	This means that the read symbol $b$ has to lie in the corresponding set $S'_\beta(i')$.
	To this end, we make precise the relations among the sets $S_\beta(j)$ and $S'_\beta(j)$ for each index $j \in [1..n]$.
	
	If $j \in [1..r-1]$, we immediately obtain
	\begin{align*}
		S'_\beta(j) = \Setcon{\beta_\ell}{\sigma'(\ell) \leq j} = \Setcon{\beta_\ell}{\sigma(\ell) \leq j} = S_\beta(j)
	\end{align*}
	from the definition of $\sigma'$.
	Hence, the sets are equal for indices strictly smaller than $r$.
	
	For $j \in [r..t]$, first note that $S_\beta(j) \subseteq S_\beta(t)$ since these sets grow monotonically.
	The latter set can be written as
	\begin{align*}
		S_\beta(t) = \Setcon{\beta_\ell}{\sigma(\ell) \leq t} = 
		\Setcon{\beta_\ell}{\sigma(\ell) < r \; \text{or} \; r \leq \sigma(\ell) \leq t} = 
		\Setcon{\beta_\ell}{\sigma'(\ell) < r \; \text{or} \; \sigma'(\ell) = r}.
	\end{align*}
	The last equivalence is due to the definition of $\sigma'$.
	Since $S'_\beta(r) = \Setcon{\beta_\ell}{\sigma'(\ell) \leq r}$ is equivalent to the last set occurring in the above equations, we obtain that $S_\beta(t) = S'_\beta(r)$ and hence, $S_\beta(j) \subseteq S'_\beta(r)$ for each $j \in [r..t]$.
	
	In the last case, $j$ is an index in $[t+1..n]$.
	Consider the following transformation steps:
	\begin{align*}
		S_\beta(j) &= \Setcon{\beta_\ell}{\sigma(\ell) \leq j} \\
		&= \Setcon{\beta_\ell}{\sigma(\ell) \leq t \; \text{or} \; t < \sigma(\ell) \leq j} \\
		&= S_\beta(t) \cup \Setcon{\beta_\ell}{t < \sigma(\ell) \leq j} \\
		&= S'_\beta(r) \cup \Setcon{\beta_\ell}{t < \sigma(\ell) \leq j}.
	\end{align*}
	Note that in the last step we used that $S_\beta(t) = S'_\beta(r)$.
	Now we find an equivalent description for the latter set in the union.
	For an index $\ell$ with $\sigma(\ell) > t$, we get by definition that $\sigma'(\ell) = \sigma(\ell) - t + r$.
	Hence, we have that $t < \sigma(\ell) \leq j$ if and only if $r < \sigma'(\ell) \leq j - t + r$.
	We can derive the following:
	\begin{align*}
		S'_\beta(r) \cup \Setcon{\beta_\ell}{t < \sigma(\ell) \leq j} &= S'_\beta(r) \cup \Setcon{\beta_\ell}{r < \sigma'(\ell) \leq j - t + r} = S'_\beta(j-t+r).
	\end{align*}
	Hence, $S_\beta(j) = S'_\beta(j - t + r)$.
	
	Assume, from Case (4) we get a transition $q_i \xrightarrow{\rd{b}}_L q_{i+1}$ with $b \in S_\beta(i)$.
	If $i \in [1..r-1]$, we obtain by the above discussion that $b \in S_\beta(i) = S'_\beta(i)$.
	If $i = t$, we obtain that $b \in S_\beta(t) = S'_\beta(r)$.
	In the last case, $i \in [t+1..n]$, we get that $b \in S_\beta(i) = S'_\beta(i-t+r)$.
	This proves leader validity of $\Shrink(x)$ and completes the proof.
\end{proof}
The following lemma extends the results from Lemma \ref{Lemma:LeaderValidShrink}.
It shows that shrinking operator , leader validity, and witness concatenation behave well with respect to each other.
Moreover, it provides a way to \emph{replace} a witness in a concatenation as long as leader validity is guaranteed.
\begin{lemma}
	\label{Lemma:LeaderValidConcatWellBehaved}
	Let $x = (w,q,\sigma)$ be a witness of order $k$ and $y$ a witness of order $p$ with $\init(y) = q$.
	Moreover, let $\beta = \beta_1 \dots \beta_{k+p}$ be a first-write sequence and $\LValid_\beta(x \times y) = \TRUE$.
	\begin{enumerate}[a)]
		\item \label{Lemma:LeaderValidConcatWellBehaved:PartA} We have $\LValid_\beta(x \times \Shrink^*(y)) = \TRUE$.
		\item \label{Lemma:LeaderValidConcatWellBehaved:PartB} Let $x' = (w',q,\sigma')$ be a witness of order $k$ and let $\beta' = \beta_1 \dots \beta_k$ be the prefix of $\beta$ of length $k$.
		If $\LValid_{\beta'}(x') = \TRUE$, then $\LValid_{\beta}(x' \times y) = \TRUE$.
	\end{enumerate}
\end{lemma}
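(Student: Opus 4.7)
The plan is to prove both parts by a careful case analysis on transitions that mirrors the argument used in Lemma~\ref{Lemma:LeaderValidShrink}. In both cases the underlying observation is that the relevant operation (shrinking $y$, or replacing $x$ by $x'$) only changes the leader-state sequence locally, while the read-availability sets $S_\beta$ restricted to the $y$-part depend only on which first-writes of $\beta$ have been consumed by the $x$-part, not on the precise layout inside $x$.

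For part (a), writing $y = (w_y, q_y, \sigma_y)$ and letting $r < t$ be the first pair of repeating positions in $w_y$, I would observe that within $x \times y$ these correspond to positions $|w|+r$ and $|w|+t$, whose leader-states agree. Passing from $x \times y$ to $x \times \Shrink(y)$ amounts to deleting the intermediate positions, which is exactly the local surgery treated in Lemma~\ref{Lemma:LeaderValidShrink}. The case analysis (write, $\varepsilon$, state-repetition, read) carries over verbatim once indices are shifted by $|w|$; for the read case I would reuse the monotonicity of $S_\beta$ and the index-shift argument from Lemma~\ref{Lemma:LeaderValidShrink} to conclude that each read symbol remains available after the cut. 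Iterating then yields $\LValid_\beta(x \times \Shrink^*(y)) = \TRUE$.

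For part (b), the plan is to check each transition in $x' \times y$ separately. Transitions inside the $x'$-part are valid by the assumption $\LValid_{\beta'}(x') = \TRUE$, using that $S_{\beta'}(i) = S_\beta(i)$ for $i \leq |w'|$ because $\sigma'$ indexes only the first-writes of $\beta'$, which is a prefix of $\beta$. For transitions inside the $y$-part, the key point is that $x$ and $x'$ both have order $k$, so $\sigma(\ell) \leq |w|$ and $\sigma'(\ell) \leq |w'|$ hold for every $\ell \in [1..k]$; hence the full set $\{\beta_1, \dots, \beta_k\}$ is available before the $y$-part begins in both concatenations. Consequently the read-availability sets inside the $y$-part of $x' \times y$ coincide, up to the index shift $|w'|-|w|$, with those inside the $y$-part of $x \times y$, so every transition valid in the latter is also valid in the former.

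The main obstacle I anticipate is not conceptual but notational: rigorously tracking the index shifts between $x \times y$, $x \times \Shrink(y)$ and $x' \times y$, and verifying that the sets $S_\beta$ restricted to the $y$-part are genuinely invariant under these operations. This is essentially the same bookkeeping performed in Lemma~\ref{Lemma:LeaderValidShrink}, and I expect that most of the written proof will consist of carrying it out cleanly for both parts.
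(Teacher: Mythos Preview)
Your proposal is correct and follows essentially the same approach as the paper's proof: for part~(a) the paper also reduces to a single application of $\Shrink$ on $y$, observes that the cut occurs at positions $r<t$ inside the $y$-segment of $x\times y$, and then reuses verbatim the three-range relations $S_\beta(j)=S'_\beta(j)$, $S_\beta(j)\subseteq S'_\beta(r)$, $S_\beta(j)=S'_\beta(j-t+r)$ from Lemma~\ref{Lemma:LeaderValidShrink} together with the four-case transition check. For part~(b) the paper likewise splits into the $x'$-segment (handled via $\LValid_{\beta'}(x')$ and the identity $S^{x'}_{\beta'}(i)=S^{x'\times y}_\beta(i)$) and the $y$-segment (handled via $\LValid_\beta(x\times y)$ and the identity $S^{x\times y}_\beta(i+|w|)=S^{x'\times y}_\beta(i+|w'|)$), which is exactly the index-shift invariance you describe.
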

\begin{proof}
	We first prove Part \ref{Lemma:LeaderValidConcatWellBehaved:PartA}).
	To this end, we fix some notation that is used throughout the proof.
	Let $w$, the word of witness $x$ be given by $w = (q_1,a_1) \dots (q_m,a_m)$.
	Let $y$ be the tuple $(v,p,\tau)$ where \mbox{$v = (q_{m+1},a_{m+1}) \dots (q_n,a_n)$} and $q_{k+1} = q$.
	Then, for the concatenation we get $x \times y = (w.v,p,\sigma.\tau)$.
	The map $\sigma.\tau$ maps the first writes as depicted in the definition of the concatenation:
	$\sigma.\tau(\ell) = \sigma(\ell)$ for $\ell \in [1..k]$ and $\sigma.\tau(\ell) = \tau(\ell - k) + m$ for $\ell \in [k+1 .. k+p]$.
	
	When applying the shrink operator to $y$, we get that $\Shrink(y) = (v',p,\tau')$.
	Assume that $\Shrink(y) \neq y$, otherwise there is nothing to prove.
	Then, there are indices $r < t$ such that $q_r = q_t$ and $v' = (q_{m+1},a_{m+1}) \dots (q_{r-1},a_{r-1}) . (q_t,a_t) \dots (q_n,a_n)$.
	A concatenation with $x$ therefore yields $x \times \Shrink(y) = (w.v',p,\sigma.\tau')$ with word 
	\begin{align*}
		w.v' = (q_1,a_1) \dots (q_m,a_m) \dots (q_{r-1},a_{r-1}) . (q_t,a_t) \dots (q_n,a_n) .
	\end{align*}
	and map $\sigma.\tau'$, defined similarly to $\sigma.\tau$.
	
	Now the reasoning is similar to Lemma \ref{Lemma:LeaderValidShrink}.
	We obtain the following relation among the sets $S_\beta(j) = \Setcon{\beta_\ell}{\sigma.\tau(\ell) \leq j}$ and $S'_\beta(j) = \Setcon{\beta_\ell}{\sigma.\tau'(\ell) \leq j}$.
	For $j \in [1..r-1]$, we have that $S_\beta(j) = S'_\beta(j)$.
	For $j \in [r..t]$, we get $S_\beta(j) \subseteq S'_\beta(r)$, and $S_\beta(t) = S'_\beta(r)$.
	Finally, if $j \in [t+1 .. n]$, we obtain $S_\beta(j) = S'_\beta(j - t + r)$.
	
	For leader validity, fix a symbol $a_i$ in $w.v'$.
	Since $\LValid_\beta(x \times y)$, there are four cases.
	(1) There is a write transition  $q_i \xrightarrow{\wt{a_i}}_L q_{i+1}$.
	This transition immediately carries over to the witness $x \times \Shrink(y)$.
	(2) The states $q_i$ and $q_{i+1}$ are equal.
	The equality of states is also true in $x \times \Shrink(y)$.
	(3) There is an $\varepsilon$-transition $q_i \xrightarrow{\varepsilon}_L q_{i+1}$ which also carries over.
	(4) There is a read transition $q_i \xrightarrow{\rd{b}}_L q_{i+1}$ with $b \in S_\beta(i)$.
	By the above considerations, $b$ also lies in the suitable set of first writes of the witness $x \times \Shrink(y)$.
	
	For the proof of Part \ref{Lemma:LeaderValidConcatWellBehaved:PartB}), we adjust the above notation.
	The witness $x = (w,q,\sigma)$ is given via the word $w = (q_1,a_1) \dots (q_m,a_m)$.
	Let $y = (v,p,\tau)$ with word $v = (s_1,b_1) \dots (s_n,b_n)$ and $x' = (w',q,\sigma')$ with $w' = (p_1,c_1) \dots (p_t,c_t)$.
	We consider the two concatenations $x \times y = (w.v,p,\sigma.\tau)$ and $x' \times y = (w'.v,p,\sigma'.\tau)$ with words
	\begin{align*}
		w.v &= (q_1,a_1) \dots (q_m,a_m) . (s_1,b_1) \dots (s_n,b_n), \\
		w'.v &= (p_1,c_1) \dots (p_t,c_t) . (s_1,b_1) \dots (s_n,b_n),
	\end{align*}
	and maps
	\begin{align*}
		\sigma.\tau(\ell) = 
		\left\lbrace
		\begin{aligned}
			\sigma(\ell), \; &\text{if} \; \ell \in [1..k], \\
			\tau(\ell), \; &\text{if} \; \ell \in [k+1..k+p],
		\end{aligned}
		\right.
		\; \;
		\sigma'.\tau(\ell) = 
		\left\lbrace
		\begin{aligned}
			\sigma'(\ell), \; &\text{if} \; \ell \in [1..k], \\
			\tau(\ell), \; &\text{if} \; \ell \in [k+1..k+p].
		\end{aligned}
		\right.
	\end{align*}
	To prove leader validity of $x' \times y$, pick a symbol in the word $w'.v$.
	Assume it is $c_i$ for an $i \in [1..t]$.
	By the assumption $\LValid_{\beta'}(x') = \TRUE$, we get that either there is a transition $p_i \xrightarrow{\wt{c_i} / \varepsilon}_L p_{i+1}$ or $p_i = p_{i+1}$ or there is a read transition $p_i \xrightarrow{\rd{b}} p_{i+1}$ for an $b \in S^{x'}_{\beta'}(i) = \Setcon{\beta_\ell \in \beta'}{\sigma'(\ell) \leq i}$.
	The first two cases immediately carry over to $x' \times y$.
	In the latter case, we need to show that $b$ lies in the correct set $S^{x' \times y}_\beta(i) = \Setcon{\beta_\ell}{\sigma'.\tau(\ell) \leq i}$.
	Recall that $i \leq t$ and that $\sigma'.\tau(\ell) \leq t$ if and only if $\sigma'.\tau(\ell) = \sigma'(\ell)$ by definition.
	But this means that $S^{x'}_{\beta'}(i) = S^{x' \times y}_\beta(i)$.
	Note that in the discussion, we also cover the special case $p_{t+1} = q = s_1$.
	
	Assume the picked symbol is $b_i$ for an $i \in [1..n]$.
	Since $\LValid_\beta(x \times y) = \TRUE$, we either get a transition $s_i \xrightarrow{\wt{b_i} / \rd{b} / \varepsilon}_L s_{i+1}$ or $s_i = s_{i+1}$ where $b \in S^{x \times y}_\beta(i + m) = \Setcon{\beta_\ell}{\sigma.\tau(\ell) \leq i+m}$.
	Note the index $i+m$ in the set of first writes.
	The simple cases carry over to $x' \times y$.
	In the case of a read transition, consider the following.
	\begin{align*}
		S^{x \times y}_\beta(i+m) = \Setcon{\beta_\ell}{\sigma.\tau(\ell) \leq i+m} = \setcon{\beta_1,\dots,\beta_k} \cup \Setcon{\beta_\ell}{m < \sigma.\tau(\ell) \leq i+m}.
	\end{align*}
	The last equation holds by the definition of $\sigma.\tau$.
	Moreover, we have that $\sigma.\tau(\ell) = \tau(\ell - k) + m$ if and only if $\sigma.\tau(\ell) > m$.
	And similarly, $\sigma'.\tau(\ell) = \tau(\ell - k) + t$ if and only if $\sigma'.\tau(\ell) > t$.
	Hence, we get the following chain of equalities.
	\begin{align*}
		S^{x \times y}_\beta(i+m)
		&= \setcon{\beta_1, \dots, \beta_k} \cup \Setcon{\beta_\ell}{m < \tau(\ell - k) + m \leq i + m} \\
		&= \setcon{\beta_1, \dots, \beta_k} \cup \Setcon{\beta_\ell}{t < \tau(\ell - k) + t \leq i + t} \\
		&= \setcon{\beta_1, \dots, \beta_k} \cup \Setcon{\beta_\ell}{t < \sigma'.\tau(\ell) \leq i + t} \\
		&= \Setcon{\beta_\ell}{\sigma'.\tau(\ell) \leq i+t} \\
		&= S^{x' \times y}_\beta(i+t).
	\end{align*}
	This shows that $b$ lies in the correct set $S^{x' \times y}_\beta(i+t)$ and completes the proof.
\end{proof}
The previous results can be used to show that short validity always implies leader validity.
\begin{lemma}
	\label{Lemma:ShortValidImpliesLeaderValid}
	Let $z$ be a short witness of order $k$ and $\beta = \beta_1 \dots \beta_k$ a fist-write sequence.
	If $\SValid_\beta(z) = \TRUE$, then we have that $\LValid_\beta(z) = \TRUE$.
\end{lemma}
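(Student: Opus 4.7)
The plan is a short case split on the order $k$ of $z$, using the recursive definition of $\SValid$ together with Lemma~\ref{Lemma:LeaderValidShrink}. No genuine induction is needed; the recursion only has to be unfolded one step.

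First I would handle the base case $k = 0$. Here $\beta = \varepsilon$ and the definition gives $\SValid_\varepsilon(z) = \LValid_\varepsilon(z)$ directly, so the conclusion is immediate.

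For the step $k \ge 1$, I would unfold the defining disjunction of $\SValid_\beta(z)$. Assuming $\SValid_\beta(z) = \TRUE$, some disjunct must evaluate to $\TRUE$, so there exist short witnesses $x \in \Ord(k-1)$ and $y \in \Ord(1)$ such that
\begin{align*}
z \;=\; x \otimes y, \qquad \LValid_\beta(x \times y) \;=\; \TRUE, \qquad \CValid^{k}_\beta(x \times y) \;=\; \TRUE, \qquad \SValid_{\beta'}(x) \;=\; \TRUE,
\end{align*}
where $\beta' = \beta_1 \dots \beta_{k-1}$. By definition of the short concatenation, $z = x \otimes y = \Shrink^*(x \times y)$. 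Applying Lemma~\ref{Lemma:LeaderValidShrink} to the witness $x \times y$ and the sequence $\beta$ yields $\LValid_\beta(\Shrink^*(x \times y)) = \TRUE$, which is precisely $\LValid_\beta(z) = \TRUE$.

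The only delicate point is making sure the appeal to Lemma~\ref{Lemma:LeaderValidShrink} is legitimate here: that lemma is stated for arbitrary witnesses (not only short ones) and for the same first-write sequence against which validity is being checked, so it applies verbatim to $x \times y$ and $\beta$. Everything else is a direct unfolding of the definitions of $\SValid$ and $\otimes$, so I do not expect any real obstacle beyond keeping the notation for the decomposition $z = x \otimes y$ consistent with the labelling of first writes in $\beta$.
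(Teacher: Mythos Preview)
Your proposal is correct and matches the paper's own proof essentially line for line: a case split on the order, with the base case immediate from the definition $\SValid_\varepsilon(z) = \LValid_\varepsilon(z)$, and the non-trivial case handled by extracting the conjunct $\LValid_\beta(x \times y)$ from the recursive definition of $\SValid_\beta(z)$ and then applying Lemma~\ref{Lemma:LeaderValidShrink} to $z = \Shrink^*(x \times y)$. The only cosmetic difference is that you list the unused conjuncts $\CValid^{k}_\beta(x \times y)$ and $\SValid_{\beta'}(x)$ as well, whereas the paper omits them.
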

\begin{proof}
	We prove the lemma by a case distinction.
	If $\ord(z) = 0$, we get by the definition of short validity that $\beta = \varepsilon$ and $\LValid_\varepsilon(z) = \SValid_\varepsilon(z) = \TRUE$.
	
	If $\ord(z) = k+ 1 > 0$ for a $k < \sizeD$ then, by the recursive definition of short validity, there are witnesses $x \in \Ord(k)$ and $y \in \Ord(1)$ such that $z = x \otimes y$ and $\LValid_\beta(x \times y) = \TRUE$.
	Since $z = \Shrink^*(x \times y)$, we get $\LValid_\beta(z) = \TRUE$ by an application of Lemma \ref{Lemma:LeaderValidShrink}.
\end{proof}
We use regular languages of the form $\Expr(x,\alpha)$ to make visible the writes that contributors can rely on when providing a next first write.
If all first writes of a sequence were already provided, the language slightly changes due to the availability of all first writes.
In this case, we speak of \emph{full expressions}.
The definition is as follows:

Let $x = (w,q,\sigma)$ be a witness with $w = (q_1,a_1) \dots (q_n, a_n)$ and $\beta$ a first-write sequence with $\abs{\beta} = \ord(x)$.
The \emph{full expression} of $x$ with respect to $\beta$ is the regular language
\begin{align*}
	\FullExpr(x,\beta) = \Gamma^*_1 \setcon{a_1, \varepsilon} \Gamma^*_2 \setcon{a_2,\varepsilon} \dots \Gamma^*_n \setcon{a_n,\varepsilon}, \;
	\text{where} \; \Gamma_i = \Loop(q_i,S_\beta(i)) \cup S_\beta(i).
\end{align*}
The next lemma shows that full expressions are preserved under shrinking.
\begin{lemma} \label{Lemma:FullExpressionShrink}
	For a first-write sequence $\beta$ and a witness $x$ with $\LValid_\beta(x) = \TRUE$, we have
	\begin{align*}
		\FullExpr(x,\beta)  = \FullExpr(\Shrink^*(x),\beta).
	\end{align*}
\end{lemma}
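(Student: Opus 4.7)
The plan is induction on the number of single-step $\Shrink$ applications needed to reach the fixed point $\Shrink^*(x)$, reducing everything to a single application, after which the statement follows by iteration together with Lemma \ref{Lemma:LeaderValidShrink} (which ensures leader validity is preserved along the way).

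If $\Shrink(x) = x$, there is nothing to show. Otherwise let $r < t$ be the indices from the definition of $\Shrink$, so that $q_r = q_t$ and the shrunk witness has word $(q_1, a_1) \dots (q_{r-1}, a_{r-1})(q_t, a_t) \dots (q_n, a_n)$. The index relations worked out in the proof of Lemma \ref{Lemma:LeaderValidShrink} (namely $S_\beta(j) = S'_\beta(j)$ for $j < r$, $S_\beta(t) = S'_\beta(r)$, and $S_\beta(j) = S'_\beta(j-t+r)$ for $j > t$), combined with the state equality $q_r = q_t$, imply that the factors of the two full expressions outside positions $r, \dots, t$ agree pointwise and that the new factor $\Gamma'_r$ equals the old factor $\Gamma_t$. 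Hence the overall language equality reduces to the single identity
\begin{align*}
\Gamma_r^* \setcon{a_r, \varepsilon} \Gamma_{r+1}^* \setcon{a_{r+1}, \varepsilon} \cdots \Gamma_t^* \setcon{a_t, \varepsilon} = \Gamma_t^* \setcon{a_t, \varepsilon}.
\end{align*}

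The inclusion from right to left is obtained by choosing $\varepsilon$ inside each $\setcon{a_j, \varepsilon}$ and the empty word inside each $\Gamma_j^*$ for $j \in [r..t-1]$. For the opposite inclusion it suffices to establish $\Gamma_j \subseteq \Gamma_t$ for all $j \in [r..t]$ and $a_j \in \Gamma_t$ for every $j \in [r..t-1]$ with $a_j \neq \bot$. The set component $S_\beta(j) \subseteq S_\beta(t)$ is immediate from monotonicity. For the loop component, leader validity of $x$ along $\beta$ gives a concrete leader path $q_r \to q_{r+1} \to \cdots \to q_t$ using writes among $a_r, \dots, a_{t-1}$ and reads from $\bigcup_{i \in [r..t-1]} S_\beta(i) \subseteq S_\beta(t)$. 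Given any loop at $q_j$ witnessing some $b \in \Loop(q_j, S_\beta(j))$, I splice it between the two halves of this outer path, obtaining a loop at $q_r = q_t$ that writes $b$ while reading from $S_\beta(t)$; this yields $\Loop(q_j, S_\beta(j)) \subseteq \Loop(q_t, S_\beta(t))$. Taking the outer path by itself as a loop at $q_r$ places each intermediate $a_j$ in $\Loop(q_t, S_\beta(t)) \subseteq \Gamma_t$.

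The main obstacle is the splicing argument: one must verify that the intermediate segments of the leader run extracted from the witness really are paths in the leader automaton, which is precisely what the hypothesis $\LValid_\beta(x) = \TRUE$ supplies, and that grafting an inner loop into the outer path respects the read alphabet bound $S_\beta(t)$. Once these two points are in place, the remaining bookkeeping on indices and sets is mechanical, and the lemma follows by induction on the number of $\Shrink$ steps.
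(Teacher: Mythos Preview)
Your proposal is correct and follows essentially the same route as the paper: reduce to a single $\Shrink$ step by induction (using Lemma~\ref{Lemma:LeaderValidShrink} to carry leader validity forward), match the prefix and suffix factors via the index relations $S_\beta(j) = S'_\beta(j)$ for $j<r$ and $S_\beta(j) = S'_\beta(j-t+r)$ for $j>t$, and collapse the middle block by the splicing argument that turns a loop at $q_j$ into a loop at $q_t=q_r$ with reads bounded by $S_\beta(t)$. The paper organizes this into three explicit steps but the content is identical to what you wrote.
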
 
\begin{proof}
	We show that the full expressions are invariant under the shrink operator.
	Formally, we prove that $\FullExpr(x,\beta) = \FullExpr(\Shrink(x),\beta)$.
	Invariance of leader validity under shrinking is due to Lemma \ref{Lemma:LeaderValidShrink}.
	Hence, the lemma then follows by induction.

	Let $x = (w,q,\sigma)$ be the given witness with $w = (q_1,a_1) \dots (q_n,q_n)$.
	If $\Shrink(x) = x$, there is nothing to show.
	Otherwise, there exist indices $r < t$ with $q_r = q_t$ such that $\Shrink(x) = (w',q,\sigma')$ where $w' = (q_1,a_1) \dots (q_{r-1},a_{r-1}) . (q_t,a_t) \dots (q_n,a_n)$.
	The map $\sigma'$ is given by $\sigma'(\ell) = \sigma(\ell)$ if $\sigma(\ell) < r$, $\sigma'(\ell) = r$ if $r \leq \sigma(\ell) \leq t$, and $\sigma'(\ell) = \sigma(\ell) - t + r$ otherwise.
	
	Considering the full expression defined by the witness $x$, we obtain
	\begin{align*}
		\FullExpr(x,\beta) = \Gamma^*_1 . \setcon{a_1,\varepsilon} \dots
		\Gamma^*_{r-1} . \setcon{a_{r-1},\varepsilon} .
		\Gamma^*_r . \setcon{a_r,\varepsilon} \dots
		\Gamma^*_t . \setcon{a_t,\varepsilon} \dots
		\Gamma^*_n . \setcon{a_n,\varepsilon}
	\end{align*}
	where $\Gamma_i = \Loop(q_i,S_\beta(i)) \cup S_\beta(i)$.
	The full expression defined by $\Shrink(x)$ is given by
	\begin{align*}
		\FullExpr(\Shrink(x),\beta) = \Sigma^*_1 . \setcon{a_1,\varepsilon} \dots
		\Sigma^*_{r-1} . \setcon{a_{r-1},\varepsilon} .
		\Sigma^*_t . \setcon{a_t,\varepsilon} \dots
		\Sigma^*_n . \setcon{a_n,\varepsilon}.
	\end{align*}
	To describe $\Sigma_i$ we use the notation $S'_\beta(i) = \Setcon{\beta_\ell}{\sigma'(\ell) \leq i}$.
	Then, the sets are given by $\Sigma_i = \Loop(q_i,S_\beta'(i)) \cup S_\beta'(i)$ for $i \in [1..r-1]$, $\Sigma_t = \Loop(q_t,S_\beta'(r)) \cup S_\beta'(r)$, and for $i \in [t+1 .. n]$ we have $\Sigma_i = \Loop(q_i,S_\beta'(i-t+r)) \cup S'_\beta(i-t+r)$.
	Note that we need the case distinction for the sets $\Sigma_i$ due to the index shift that occurs when going from $x$ to $\Shrink(x)$.
	
	Now we show the equality of the full expressions.
	To this end, we split them into three parts and show equality of the single parts.
	We proceed in three steps.

	\subparagraph{Step 1:}
	We prove the following equation to be correct:
	\begin{align*}
		\Gamma^*_1 . \setcon{a_1,\varepsilon} \dots \Gamma^*_{r-1} . \setcon{a_{r-1}, \varepsilon} = \Sigma^*_1 . \setcon{a_1,\varepsilon} \dots \Sigma^*_{r-1} . \setcon{a_{r-1}, \varepsilon}.
	\end{align*}
	It is enough to show that $\Gamma_i = \Sigma_i$ for $i \in [1..r-1]$.
	We have seen in the proof of Lemma \ref{Lemma:LeaderValidShrink} that $S'_\beta(i) = S_\beta(i)$ for these indices $i$.
	Hence, we get the desired equality.
	
	\subparagraph{Step 2:}
	We show the middle parts of the expressions to be equal.
	Formally:
	\begin{align*}
		\Gamma^*_r . \setcon{a_r, \varepsilon} \dots \Gamma^*_t . \setcon{a_t,\varepsilon} = \Sigma^*_t . \setcon{a_t,\varepsilon}.
	\end{align*}
	From the proof of Lemma \ref{Lemma:LeaderValidShrink} we know that $S_\beta(t) = S'_\beta(r)$.
	Hence, we obtain the equation $\Sigma_t = \Loop(q_t,S'_\beta(r)) \cup S'_\beta(r) = \Loop(q_t,S_\beta(t)) \cup S_\beta(t) = \Gamma_t$.
	Taking the equivalence into account and dropping $a_t$, it is left to show that 
	\begin{align*}
		\Gamma^*_r . \setcon{a_r, \varepsilon} \dots \Gamma^*_t = \Gamma^*_t.
	\end{align*}
	One inclusion is immediate.
	For the other one, we show that $a_r, \dots, a_{t-1}$ are contained in $\Gamma_t$ and that $\Gamma_r, \dots, \Gamma_{t-1}$ are actually subsets of $\Gamma_t$.
	
	Due to validity of $x$ with respect to the leader, $\LValid_\beta(x) = \TRUE$, we get a run $\rho$ on the leader $P_L$ of the form
	\begin{align*}
		q_t = q_r \xrightarrow{\wt{a_r} / \bot}_L q_{r+1} \xrightarrow{\wt{a_{r+1}} / \bot}_L \dots \xrightarrow{\wt{a_{t-1} / \bot}}_L q_t,
	\end{align*}
	where $q_i \xrightarrow{\bot}_L q_{i+1}$ denotes either a read of a symbol $b \in S_\beta(i)$ or an $\varepsilon$-transition.
	Since $S_\beta(i) \subseteq S_\beta(t)$ for each $i \in [r..t-1]$, all reads along $\rho$ are only from the set $S_\beta(t)$.
	This means that each $a_i$ with $i \in [r..t-1]$ is either $\bot$ or occurs as a write in a loop of $q_t$ where reads are restricted to the set $S_\beta(t)$.
	Phrased differently, $a_r, \dots, a_{t-1} \in \Loop(q_t,S_\beta(t)) \subseteq \Gamma_t$.
	
	Fix $i \in [r..t-1]$.
	We show that $\Gamma_i \subseteq \Gamma_t$.
	To this end, we reconsider the run $\rho$ from above and split it into two parts with middle $q_i$.
	We denote by $\rho_1$ the first part $q_t = q_r \rightarrow_L \dots \rightarrow_L q_i$.
	By $\rho_2$, we denote the latter part $q_i \rightarrow_L \dots \rightarrow_L q_t$.
	Let now $b \in \Gamma_i = \Loop(q_i,S_\beta(i)) \cup S_\beta(i)$.
	Then, either $b \in S_\beta(i) \subseteq S_\beta(t) \subseteq \Gamma_t$ or $b$ appears as a write on a loop in $q_i$ where reading is restricted to $S_\beta(i) \subseteq S_\beta(t)$.
	If $b$ appears as a write, we can append $\rho_1$ as prefix and $\rho_2$ as postfix to the corresponding run.
	Then, $b$ appears as a write in a loop in $q_t$ while reading is restricted to $S_\beta(t)$.
	Hence, $b \in \Loop(q_t,S_\beta(t)) \subseteq \Gamma_t$.
	
	\subparagraph{Step 3:} 
	We prove the equivalence of the latter parts of the expressions:
	\begin{align*}
		\Gamma^*_{t+1} . \setcon{a_{t+1},\varepsilon} \dots \Gamma^*_n . \setcon{a_n,\varepsilon} = \Sigma^*_{t+1} . \setcon{a_{t+1},\varepsilon} \dots \Sigma^*_n . \setcon{a_n, \varepsilon}.
	\end{align*}
	It suffices to show that $\Gamma_i = \Sigma_i$ for $i \in [t+1 .. n]$.
	To this end, let $i \in [t+1..n]$ be fixed.
	Like before, we refer to the proof of Lemma \ref{Lemma:LeaderValidShrink} and obtain $S_\beta(i) = S'_\beta(i-t+r)$.
	It yields
	\begin{align*}
		\Sigma_i = \Loop(q_i,S'_\beta(i-t+r)) \cup S'_\beta(i-t+r) = \Loop(q_i,S_\beta(i)) \cup S_\beta(i) = \Gamma_i.
	\end{align*}
	Altogether, the full expression is preserved under shrinking.
	This completes the proof.
\end{proof}
A further tool that we use in the proof of Lemma \ref{Lemma:RestrictionToShortWitnesses} is the blow up of witnesses.
It allows us to increase the order of a first-order witness.
Let $x = (w,q,\sigma)$ be a first-order witness.
Moreover, let $k \in \Naturals$ be a natural number such that $k < \sizeD$.
Then, we extend $x$ to a witness of order $k+1$ by mapping $k$ first writes to the first position and the remaining first write to the position indicated by $\sigma$.
The \emph{$(k+1)$-blow up} of $x$ is the witness $x^{(k+1)} = (w,q,\sigma^{(k+1)})$ where $\sigma^{(k+1)} : [1..k+1] \rightarrow [1..n]$ is given by
\begin{align*}
	\sigma^{(k+1)}(i) = 
	\left\lbrace
	\begin{aligned}
		1, \; \text{if} \; i \in [1..k], \\
		\sigma(1), \; \text{if} \; i = k+1.
	\end{aligned}
	\right.
\end{align*}
The following lemma states that the (full) expression of a product is the concatenation of the full expression of the left factor and the (full) expression of the blow up of the right factor.
\begin{lemma}
	\label{Lemma:ProductBlowUp}
	Let $x$ be a witness of order $k < \sizeD$ and $y$ a first-order witness.
	Moreover, let $\beta = \beta_1 \dots \beta_{k+1}$ be a first-write sequence and let $\beta'$ denote the prefix $\beta_1 \dots \beta_k$.
	Then we have
	\begin{enumerate}[a)]
		\item \label{Lemma:ProductBlowUpPartA} $\FullExpr(x \times y, \beta) = \FullExpr(x,\beta') . \FullExpr(y^{(k+1)},\beta)$,
		\item \label{Lemma:ProductBlowUpPartB} $\Expr(x \times y, \beta') = \FullExpr(x,\beta') . \Expr(y^{(k+1)},\beta')$.
	\end{enumerate}
\end{lemma}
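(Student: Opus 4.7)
My plan is to expand both sides via the definitions of $\Expr$, $\FullExpr$, witness concatenation, and $(k+1)$-blow up, and then verify equality of the resulting regular expressions position by position. I would write $x=(w,q,\sigma)$ with $w=(q_1,a_1)\dots(q_m,a_m)$ and $y=(v,p,\tau)$ with $v=(s_1,b_1)\dots(s_n,b_n)$, recalling that $(\sigma.\tau)(\ell)=\sigma(\ell)$ for $\ell\in[1..k]$ and $(\sigma.\tau)(k+1)=\tau(1)+m$, while $\sigma^{(k+1)}(\ell)=1$ for $\ell\in[1..k]$ and $\sigma^{(k+1)}(k+1)=\tau(1)$.

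For part (a), both sides are regular expressions of the form $\prod_{i=1}^{m+n}\Gamma_i^*\setcon{d_i,\varepsilon}$, so I would show that the $\Gamma$-set at each position matches, splitting into two ranges. For $i\in[1..m]$ the pair at position $i$ agrees with the $i$-th pair of $x$, and the first-write set satisfies $\Setcon{\beta_\ell}{(\sigma.\tau)(\ell)\leq i}=\Setcon{\beta_\ell}{\sigma(\ell)\leq i,\,\ell\in[1..k]}$ because $(\sigma.\tau)(k+1)=\tau(1)+m>m\geq i$ prevents $\beta_{k+1}$ from ever contributing; hence the first $m$ factors reproduce $\FullExpr(x,\beta')$. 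For $i\in[m+1..m+n]$ the pair at position $i$ equals the $(i-m)$-th pair of $y^{(k+1)}$, and I would check that the first-write sets agree: every $\beta_\ell$ with $\ell\in[1..k]$ is included on both sides (since $\sigma(\ell)\leq m<i$ and $\sigma^{(k+1)}(\ell)=1\leq i-m$), while $\beta_{k+1}$ is included on both sides exactly when $\tau(1)\leq i-m$. This yields the second factor $\FullExpr(y^{(k+1)},\beta)$.

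Part (b) proceeds analogously but with the left-hand side truncated at position $j=(\sigma.\tau)(k+1)=\tau(1)+m$, and first-write sequences of length $k$ throughout, so only indices $\ell\in[1..k]$ ever contribute. The first $m$ factors again reproduce $\FullExpr(x,\beta')$. For $i\in[m+1..j]$ the first-write set $\Setcon{\beta_\ell}{(\sigma.\tau)(\ell)\leq i,\,\ell\in[1..k]}$ simplifies to $\setcon{\beta_1,\dots,\beta_k}$, matching $\Setcon{\beta_\ell}{\sigma^{(k+1)}(\ell)\leq i-m,\,\ell\in[1..k]}$ on the right; moreover $j-m=\tau(1)=\sigma^{(k+1)}(k+1)$ is exactly the truncation point of $\Expr(y^{(k+1)},\beta')$, so the two truncated expressions match up.

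The main obstacle will be pure bookkeeping: tracking the index shift by $m$ between the concatenated witness and the right factor, and keeping straight the interplay between $\sigma.\tau$ and $\sigma^{(k+1)}$ with the first-write sequences $\beta$ and $\beta'$. No further technical ingredient beyond carefully unwinding the definitions is required.
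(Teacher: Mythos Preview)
Your proposal is correct and follows essentially the same route as the paper's proof: unwind the definitions of $\FullExpr$, $\Expr$, witness concatenation, and blow-up, then verify equality of the first-write sets position by position, splitting into the ranges corresponding to $x$ and to $y$. The only (harmless) slip is notational---since the blow-up is applied to $y=(v,p,\tau)$, the map you denote $\sigma^{(k+1)}$ should be written $\tau^{(k+1)}$.
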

\begin{proof}
	We first prove Part \ref{Lemma:ProductBlowUpPartA}).
	To this end, we let $x = (w,q,\sigma)$ with $w = (q_1,a_1) \dots (q_n,a_n)$ and $y = (v,p,\tau)$ with $v = (p_1,b_1) \dots (p_m,b_m)$ and $p_1 = q$.
	Consider the witness concatenation $x \times y = (w.v,p,\sigma.\tau)$.
	The full expression of it is given by 
	\begin{align*}
		\FullExpr(x \times y,\beta) = \Gamma^*_1 . \setcon{a_1,\varepsilon} \dots \Gamma^*_n . \setcon{a_n,\varepsilon} . \Sigma^*_1 . \setcon{b_1,\varepsilon} \dots \Sigma^*_m . \setcon{b_m,\varepsilon}.
	\end{align*}
	In the language, we have $\Gamma_i = \Loop(q_i,S^{x \times y}_\beta(i)) \cup S^{x \times y}_\beta(i)$ for each $i \in [1..n]$ and similarly $\Sigma_i = \Loop(p_i,S^{x \times y}_\beta(i+n)) \cup S^{x \times y}_\beta(i+n)$ for $i \in [1..m]$.
	
	Let $i \in [1..n]$.
	Then, by definition of $\sigma.\tau$, we obtain the following:
	\begin{align*}
		S^{x \times y}_\beta(i) = \Setcon{\beta_\ell}{\sigma.\tau(\ell) \leq i} = \Setcon{\beta_\ell \in \beta'}{\sigma(\ell) \leq i} = S^x_{\beta'}(i).
	\end{align*}
	This implies that $\Gamma_i = \Loop(q_i,S^x_{\beta'}(i)) \cup S^x_{\beta'}$ and hence we get the following equality:
	\begin{align*}
		\FullExpr(x,\beta') = \Gamma^*_1 . \setcon{a_1,\varepsilon} \dots \Gamma^*_n . \setcon{a_n,\varepsilon}.
	\end{align*}
	
	It is left to show that $\FullExpr(y^{(k+1)},\beta) = \Sigma^*_1 . \setcon{b_1,\varepsilon} \dots \Sigma^*_m . \setcon{b_m,\varepsilon}$.
	Let the blow up of $y$ be denoted by $y^{(k+1)} = (v,p,\tau^{(k+1)})$.
	Then, its full expression is given by
	\begin{align*}
		\FullExpr(y^{(k+1)},\beta) = L^*_1 . \setcon{b_1,\varepsilon} \dots L^*_m . \setcon{b_m,\varepsilon},
	\end{align*}
	where $L^*_i = \Loop(p_i,S^{(k+1)}_\beta(i)) \cup S^{(k+1)}_\beta(i)$ with $S^{(k+1)}_\beta(i) = \Setcon{\beta_\ell}{\tau^{(k+1)}(\ell) \leq i}$.
	We show that $L_i = \Sigma_i$ for each $i \in [1..m]$.
	To this end, it is enough to prove the equality of the first-write sets $S^{(k+1)}_\beta(i) = S^{x \times y}_\beta(i+n)$.
	
	By definition, we get the following for $i \in [1..m]$:
	\begin{align*}
		S^{(k+1)}_\beta(i) = \Setcon{\beta_\ell}{\tau^{(k+1)}(\ell) \leq i} = \setcon{\beta_1, \dots, \beta_k} \cup 
		\left\lbrace
		\begin{aligned}
			\setcon{\beta_{k+1}}, \; &\text{if} \; \tau(1) \leq i, \\
			\emptyset, \; &\text{otherwise}.
		\end{aligned}
		\right.
	\end{align*}
	By definition of the map $\sigma.\tau$, the sets $\setcon{\beta_1, \dots, \beta_k}$ and $\Setcon{\beta_\ell}{\sigma.\tau(\ell) \leq n}$ are equal.
	Hence, we can rewrite the above expression.
	Note that $\tau(1) > 0$.
	We obtain:
	\begin{align*}
		S^{(k+1)}_\beta(i) = \Setcon{\beta_\ell}{\sigma.\tau(\ell) \leq n} \cup
		\left\lbrace
		\begin{aligned}
			\setcon{\beta_{k+1}}, \; &\text{if} \; n < \tau((k+1) - k) + n \leq i + n, \\
			\emptyset, \; &\text{otherwise}.
		\end{aligned}
		\right.
	\end{align*}
	Then, by definition it follows
	\begin{align*}
		S^{(k+1)}_\beta(i) &= \Setcon{\beta_\ell}{\sigma.\tau(\ell) \leq n} \cup
		\left\lbrace
		\begin{aligned}
			\setcon{\beta_{k+1}}, \; &\text{if} \; n < \sigma.\tau(k+1) \leq i + n, \\
			\emptyset, \; &\text{otherwise}
		\end{aligned}
		\right. \\
		&= \Setcon{\beta_\ell}{\sigma.\tau(\ell) \leq i+n} \\
		&= S^{x \times y}_\beta(i+n).
	\end{align*}
	
	For Part \ref{Lemma:LeaderValidConcatWellBehaved:PartB}), consider the expression of $x \times y$
	\begin{align*}
		\Expr(x \times y,\beta') = \Gamma^*_1 . \setcon{a_1,\varepsilon} \dots \Gamma^*_n . \setcon{a_n,\varepsilon} . \Sigma^*_1 . \setcon{b_1,\varepsilon} \dots \Sigma^*_j,
	\end{align*}
	where $j + n = \sigma.\tau(k+1)$.
	Note that this implies $j = \tau(1)$.
	The sets $\Gamma_i$ and $\Sigma_i$ are given by $\Gamma_i = \Loop(q_i,S^{x \times y}_{\beta'}(i)) \cup S^{x \times y}_{\beta'}(i)$ for $i \in [1..n]$ and $\Sigma_i = \Loop(p_i,S^{x \times y}_{\beta'}(i+n)) \cup S^{x \times y}_{\beta'}(i+n)$ for $i \in [1..j]$.
	Note that the first writes refer to $\beta'$, we have $S^{x \times y}_{\beta'}(i) = \Setcon{\beta_\ell \in \beta'}{\sigma.\tau(\ell) \leq i}$.
	
	Let $i \in [1..n]$.
	Then we obtain from the definition of $\sigma.\tau$:
	\begin{align*}
		S^{x \times y}_{\beta'}(i) = \Setcon{\beta_\ell \in \beta'}{\sigma(\ell) \leq i} = S^x_{\beta'}(i).
	\end{align*}
	Similarly to the proof of Part \ref{Lemma:LeaderValidConcatWellBehaved:PartA}, we obtain $\FullExpr(x,\beta') = \Gamma^*_1 . \setcon{a_1,\varepsilon} \dots \Gamma^*_n . \setcon{a_n,\varepsilon}$.
	
	It is left to show that $\Expr(y^{(k+1)},\beta') = \Sigma^*_1 . \setcon{a_1,\varepsilon} \dots \Sigma^*_j$.
	By definition, we obtain
	\begin{align*}
		\Expr(y^{(k+1)},\beta') = L^*_1 . \setcon{b_1,\varepsilon} \dots L^*_{j'},
	\end{align*}
	where $j' = \tau^{(k+1)}(k+1) = \tau(1) = j$ and $L_i = \Loop(p_i,S^{(k+1)}_{\beta'}(i)) \cup S^{(k+1)}_{\beta'}(i)$.
	Now let $i \in [1..j]$.
	Since $\tau^{(k+1)}$ maps the first writes $\beta_1, \dots, \beta_k$ to position $1$, we obtain:
	\begin{align*}
		S^{(k+1)}_{\beta'}(i) = \Setcon{\beta_\ell \in \beta'}{\tau^{(k+1)}(\ell) \leq i} = \setcon{\beta_1, \dots, \beta_k}.
	\end{align*}
	The map $\sigma.\tau$ maps the first writes $\beta_1, \dots, \beta_k$ to positions smaller than $n$.
	Hence, we get
	\begin{align*}
		S^{x \times y}_{\beta'}(i+n) = \Setcon{\beta_\ell \in \beta'}{\sigma.\tau(\ell) \leq i+n} = \setcon{\beta_1, \dots, \beta_k} = S^{(k+1)}_{\beta'}(i).
	\end{align*}
	This implies $L_i = \Sigma_i$ and completes the proof.
\end{proof}
Under certain assumptions, shrinking operator and blow up commute.
The next lemma formalizes this observation.
The technical assumption that we have to make is that $\sigma$ maps the (only) first write to the first position in the word of the witness.
\begin{lemma}
	\label{Lemma:ShrinkingBlowUp}
	Let $x = (w,q,\sigma)$ be a first-order witness with $\sigma(1) = 1$.
	Let $y = \Shrink^*(x)$.
	For each $k < \sizeD$, we have the equality $\Shrink^*(x^{(k+1)}) = y^{(k+1)}$.
\end{lemma}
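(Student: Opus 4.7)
The plan is to show the stronger commutation $\Shrink(x^{(k+1)}) = \Shrink(x)^{(k+1)}$ under the hypothesis $\sigma(1) = 1$, and then iterate. The main observation is that blow up only affects the map component of a witness; the word stays the same. Consequently $x$ and $x^{(k+1)}$ have identical words, so any step of $\Shrink$ picks the same repeating pair of positions $r < t$ in both, and transforms the words in exactly the same way. All the work therefore concentrates on comparing the two updated first-write maps.

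First I would verify that the condition $\sigma(1) = 1$ is preserved by $\Shrink$. Consulting the three cases in the definition of $\Shrink$: if $r > 1$ then $\sigma(1) = 1 < r$ lands in the first case and stays at $1$; if $r = 1$ then $1 \le \sigma(1) = 1 \le t$ lands in the second case and is reset to $r = 1$. Either way $\sigma'(1) = 1$. So iterating $\Shrink$ maintains the invariant, and in particular $y = \Shrink^*(x) = (w'',q,\sigma'')$ still satisfies $\sigma''(1) = 1$.

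Next I would prove the commutation step. Write $x^{(k+1)} = (w, q, \tau)$ with $\tau(i) = 1$ for all $i \in [1..k+1]$ (using $\sigma(1) = 1$). If $w$ has no repeated states, then $\Shrink$ acts as the identity on both $x$ and $x^{(k+1)}$, so the equation is trivial. Otherwise, let $r < t$ be the indices chosen by $\Shrink$, giving word $w'$. Then $\Shrink(x) = (w', q, \sigma')$ with $\sigma'(1) = 1$, so $\Shrink(x)^{(k+1)} = (w', q, \tau')$ with $\tau'(i) = 1$ for all $i$. On the other hand, $\Shrink(x^{(k+1)}) = (w', q, \tau'')$ where $\tau''$ is obtained from $\tau$ by the update rule; since every $\tau(i) = 1$, the same case analysis as above (split on $r = 1$ vs.\ $r > 1$) yields $\tau''(i) = 1$ for every $i$. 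Thus $\tau' = \tau''$ and $\Shrink(x^{(k+1)}) = \Shrink(x)^{(k+1)}$.

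Finally, I would close by induction on the number of iterations needed for $\Shrink^*$ to stabilize. Because blow up preserves the word, $x$ and $x^{(k+1)}$ reach their fixed points in the same number of steps, so iterating the commutation gives $\Shrink^*(x^{(k+1)}) = (\Shrink^*(x))^{(k+1)} = y^{(k+1)}$. The only subtlety I anticipate is being careful with the two edge cases $r = 1$ in the definition of $\Shrink$, but since both the original and the blown-up first-write positions all sit at $1$, they fall into exactly the same branch of the case split, which is what makes the commutation work.
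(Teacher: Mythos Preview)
Your proposal is correct and follows essentially the same approach as the paper: both rely on the observation that, under $\sigma(1)=1$, the blown-up map $\sigma^{(k+1)}$ is the constant $1$ map, this constant map is preserved by each application of $\Shrink$, and the word component transforms identically in $x$ and $x^{(k+1)}$. Your version is slightly more explicit in establishing the one-step commutation $\Shrink(x^{(k+1)}) = \Shrink(x)^{(k+1)}$ and then inducting, whereas the paper argues the end result $\Shrink^*(x^{(k+1)}) = (w',q,\sigma^{(k+1)}) = y^{(k+1)}$ directly, but the content is the same.
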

\begin{proof}
	The witness $y$ is obtained by shrinking $x$.
	Hence, we get that $y$ is of the form $y = (w',q,\sigma)$.
	Note that $\sigma$ will not change under shrinking since $\sigma(1) = 1$ is its only value.
	Now consider the blow up of $x$, $x^{(k+1)} = (w,q,\sigma^{(k+1)})$.
	Due to the definition of the blow up, $\sigma^{(k+1)}$ is the constant $1$-map.
	
	Shrinking $x^{(k+1)}$ will result in a short witness $\Shrink^*(x^{(k+1)}) = (w',q,\sigma^{(k+1)})$.
	Note that the word $w'$ coincides with the word of $y$.
	Moreover, $\sigma^{(k+1)}$ is preserved under shrinking since it is the constant $1$-map.
	If we blow up $y$, we get $y^{(k+1)} = (w',q,\sigma^{(k+1)})$.
	Hence, we obtain the desired equality which completes the proof.
\end{proof}
Finally, we need a lemma which transforms a witness into a similar witness that separates the last first write.
Technically, we need that the first-write map $\sigma$ is strictly increasing for the last element it maps.
The lemma is key to the induction step in the proof of Lemma \ref{Lemma:RestrictionToShortWitnesses}.
\begin{lemma}
	\label{Lemma:StrictlyIncreasingSigma}
	Let $x = (w,q,\sigma) \in \Wit$ be a witness of order $k+1$ with $k < \sizeD$ and $\beta$ a first-write sequence with $\LValid_\beta(x) \wedge \CValid_\beta(x) = \TRUE$.
	Then, we can construct a witness $\hat{x} = (\hat{w},q,\hat{\sigma})$ with $\init(\hat{x}) = \init(x)$ and $\LValid_\beta(\hat{x}) \wedge \CValid_\beta(\hat{x}) = \TRUE$ that satisfies 
	\begin{align*}
		\hat{\sigma}(i) < \hat{\sigma}(k+1) \; \text{for each} \; i \in [1..k].
	\end{align*}
\end{lemma}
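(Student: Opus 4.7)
The approach will be to insert a single ``stay'' step into the leader run so that the last first-write is isolated at a fresh position, and then verify that validity is preserved under this local modification.

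I will distinguish two cases. If $\sigma(k+1) > \sigma(k)$, then by monotonicity $\sigma(i) \leq \sigma(k) < \sigma(k+1)$ for every $i \leq k$, so taking $\hat{x} = x$ already gives the claim. Otherwise, set $m = \sigma(k+1) = \sigma(k)$ and write $w = (q_1, a_1) \ldots (q_n, a_n)$. I will construct $\hat w$ by inserting the pair $(q_m, \bot)$ immediately before position $m$:
\[
\hat w \;=\; (q_1, a_1) \ldots (q_{m-1}, a_{m-1}) \, (q_m, \bot) \, (q_m, a_m) \ldots (q_n, a_n).
\]
The inserted step is a stay transition ($q_m = q_m$ with $a = \bot$), which is explicitly permitted by the leader-validity definition in the Appendix. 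I will then define $\hat\sigma(i) = \sigma(i)$ for $i \leq k$ and $\hat\sigma(k+1) = m+1$, so that $\hat\sigma(k+1)$ points to the shifted position of the original letter $(q_m, a_m)$. Since $\sigma(i) \leq m$ for every $i \leq k+1$, this yields $\hat\sigma(i) \leq m < m+1 = \hat\sigma(k+1)$, and clearly $\init(\hat x) = q_1 = \init(x)$.

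Validity splits into two parts. For leader validity, every transition of $x$ survives in $\hat x$ and the inserted stay is allowed by definition. The only subtle point is that $\beta_{k+1}$ becomes available only from position $m+1$ of $\hat w$; however, every read of $\beta_{k+1}$ in $w$ occurs at some position $i \geq \sigma(k+1) = m$, which after the shift sits at position $i+1 \geq m+1 = \hat\sigma(k+1)$ in $\hat w$, where $\beta_{k+1}$ is again available. For contributor validity, I will compare $\Expr(\hat x, \beta_1 \ldots \beta_{i-1})$ with $\Expr(x, \beta_1 \ldots \beta_{i-1})$ for each $i$: for $i \leq k$, only positions up to $\hat\sigma(i) = \sigma(i) \leq m$ are considered, and the relevant $\Gamma$-sets unfold to the same symbols as in $x$, so the two expressions coincide; for $i = k+1$, the new expression gains a trailing factor $\setcon{\varepsilon} \cdot \Gamma^*$ from the extra position $m+1$, and since the leader state and available writes at the inserted position duplicate those at position $m$, this factor just repeats $\Gamma_m^*$ and the expression collapses back to the original $\Expr(x, \beta_1 \ldots \beta_k)$. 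In either case, non-emptiness of the intersection with $h(\Trace_C(Q_i))$ is inherited from $x$.

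The main obstacle is the index bookkeeping around the insertion, namely verifying that the first-write sets $S_\alpha(j)$ and the loop languages $\Gamma_j$ align properly under the one-position shift. Since the modification is purely local at a single fresh position and $\sigma$ need only be separated at its last value, this computation stays routine.
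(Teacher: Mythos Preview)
Your proposal is correct and follows essentially the same approach as the paper: insert a single ``stay'' step $(q_m,\bot)$ immediately before position $m=\sigma(k+1)$, keep $\hat\sigma(i)=\sigma(i)$ for $i\le k$, set $\hat\sigma(k+1)=m+1$, and verify that leader and contributor validity are preserved by comparing the sets $S_\beta(j)$ and the expressions $\Expr(\cdot,\beta_1\ldots\beta_{i-1})$ before and after the shift. Your observation that the extra factor in $\Expr(\hat x,\beta_1\ldots\beta_k)$ collapses because $\hat\Gamma_m=\hat\Gamma_{m+1}$ is exactly the key point the paper uses as well.
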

\begin{proof}
	If $x$ already satisfies $\sigma(i) < \sigma(k+1)$ for any $i \in [1..k]$, we set $\hat{x} = x$.
	Otherwise, let $\sigma(k+1) = p$.
	We can write the word $w$ as follows:
	\begin{align*}
		w = (q_1,a_1) \dots (q_{p-1},a_{p-1}) . (q_p,a_p) \dots (q_n,a_n).
	\end{align*}
	The idea in the construction of $\hat{w}$ is to prolong the word $w$ by a copy of $q_p$ so that two different positions in $\hat{w}$ refer to the state.
	To this end, set
	\begin{align*}
	\hat{w} = (q_1,a_1) \dots (q_{p-1},a_{p-1}) . (q_p, \bot) . (q_p,a_p) \dots (q_n,a_n) .
	\end{align*}
	The map $\hat{\sigma}$ is defined by $\hat{\sigma}(i) = \sigma(i)$ for $i \in [1..k]$ and $\hat{\sigma}(k+1) = p+1$.
	Since $\sigma$ is monotonically increasing, we obtain the desired property $\hat{\sigma}(i) < \hat{\sigma}(k+1)$ from the definition.
	Moreover, $\hat{x}$ satisfies $\init(\hat{x}) = \init(x)$.
	It is left to show that $\hat{x}$ is valid for the leader and the contributors wrt. $\beta$.
	
	For the leader validity, we fist compare the the sets $S_\beta(j)$, associated to $x$, with $\hat{S}_\beta(j)$, associated to $\hat{x}$.
	Since we shift the index in the construction of $\hat{w}$, we will also get an index shift when moving from $S_\beta(j)$ to $\hat{S}_\beta(j)$.
	We reflect this in a case distinction.
	For the first case, let $j \in [1..p-1]$.
	Then we have that
	\begin{align*}
		\hat{S}_\beta(j) = 
		\Setcon{\beta_\ell}{\hat{\sigma}(\ell) \leq j} =
		\Setcon{\beta_\ell}{\sigma(\ell) \leq j} = 
		S_\beta(j).
	\end{align*}
	The equation comes from the fact that $\hat{\sigma}(\ell) = \sigma(\ell)$ if $\sigma(\ell) \leq j$ and $j \leq p-1$.
	
	For the case $j = p$, consider the following equivalence. It follows from $\hat{\sigma}(\ell) \leq p$ for each $\ell \in [1..k]$ and $\sigma(\ell) \leq p$ for any $\ell \in [1..k+1]$.
	\begin{align*}
		\hat{S}_\beta(p) = \setcon{\beta_1,\dots,\beta_k} = S_\beta(p) \setminus \setcon{\beta_{k+1}}.
	\end{align*}
	
	In the last case, let $j \in [p+1..n+1]$.
	Then, $\hat{\sigma}$ maps all the elements of $\beta$ to a position that is at most $j$.
	We have that $\hat{S}_\beta(j) = \setcon{\beta_1, \dots, \beta_{k+1}}$.
	The map $\sigma$ maps to positions that are strictly smaller than $j$, $S_\beta(j-1) = \setcon{\beta_1, \dots, \beta_{k+1}}$.
	Hence, $\hat{S}_\beta(j) = S_\beta(j-1)$.
	
	Now we prove the leader validity for all positions $j \in [1..n+1]$ along the same case distinction.
	Let $j \in [1..p-1]$.
	We have to show that there is a transition $q_i \xrightarrow{\wt{a_i} / \varepsilon / \rd{b}}_L$ with $b \in \hat{S}_\beta(j)$ or that $q_i = q_{i+1}$.
	By the leader validity of $x$ we get that either the states are equal or that there is a transition $q_i \xrightarrow{\wt{a_i} / \varepsilon / \rd{b}}_L q_{i+1}$ with $b \in S_\beta(j)$.
	Since $\hat{S}_\beta(j) = S_\beta(j)$ in that case, leader validity holds for position $j$.
	
	Consider the case $j = p$.
	By the definition of $\hat{w}$ we have that $q_p$ is the state of position $p$ and $p+1$.
	Hence, the states of the positions coincide and leader validity for position $p$ holds.
	
	For the last case, let $j \in [p+1..n+1]$.
	By $\LValid_\beta(x) = \TRUE$ we either get that $q_{j-1} = q_j$ or we obtain a transition $q_{j-1} \xrightarrow{\wt{a_{j-1} / \varepsilon / \rd{b}}}_L q_j$ with $b \in S_\beta(j-1) = \hat{S}_\beta(j)$.
	Hence, leader validity also holds in this case and we get that $\LValid_\beta(\hat{x}) = \TRUE$.
	
	Now we prove that $\CValid_\beta(\hat{x}) = \TRUE$.
	To this end, we show that the positions of the first writes within $\beta'$, a prefix of $\beta_1 \dots \beta_k$, under $\hat{\sigma}$ and $\sigma$ are the same.
	Let $j \in [1..p]$.
	Then
	\begin{align*}
		\hat{S}_{\beta'}(j) = 
		\Setcon{\beta_\ell \in \beta'}{\hat{\sigma}(\ell) \leq j} = 
		\Setcon{\beta_\ell \in \beta'}{\sigma(\ell) \leq j} =
		S_{\beta'}(j).
	\end{align*}
	Note that for the equality it is important to consider prefixes $\beta'$ which exclude the first write $\beta_{k+1}$.
	For $j \in [p+1..n+1]$ we have that $\hat{S}_{\beta'}(j) = \setcon{\beta_1, \dots, \beta_d} = S_{\beta'}(j-1)$ where $\beta_1 \dots \beta_d = \beta'$ denotes the considered prefix.
	
	Now we prove the equivalence of the expressions induced by $x,\hat{x}$ and $\beta$.
	Let $i \in [1..k]$ and $\beta' = \beta_1 \dots \beta_{i-1}$ a prefix.
	If we use the notation $\Sigma_j = \Loop(q_j,\hat{S}_{\beta'}(j)) \cup \hat{S}_{\beta'}(j)$ and \mbox{$\Gamma_j = \Loop(q_j,S_{\beta'}(j)) \cup S_{\beta'}(j)$}, we get the following two expressions:
	\begin{align*}
		\Expr(\hat{x},\beta') &= \Sigma_1^* . \setcon{a_1,\varepsilon} \dots \Sigma_{\hat{\sigma}(i)}^* , \\
		\Expr(x,\beta') &= \Gamma_1^* . \setcon{a_1,\varepsilon} \dots \Gamma_{\sigma(i)}^* .
	\end{align*}
	Since $i \leq k$, we get that $\hat{\sigma}(i) = \sigma(i)$ and $\hat{\sigma}(i) \leq p$.
	Thus, $\hat{S}_{\beta'}(j) = S_{\beta'}(j)$ for each $j \in [1..\hat{\sigma}(i)]$.
	This implies that $\Sigma_j = \Gamma_j$ and that the above expressions are the same.
	
	For $i = k+1$, the first-write sequence of interest is $\beta' = \beta_1 \dots \beta_k$.
	In this case, the expressions are of the form
	\begin{align*}
		\Expr(\hat{x},\beta') &= \Sigma_1^* . \setcon{a_1,\varepsilon} \dots \Sigma_p^* . \setcon{\varepsilon} . \Sigma_{p+1}^* , \\
		\Expr(x,\beta') &= \Gamma_1^* . \setcon{a_1,\varepsilon} \dots \Gamma_p^* .
	\end{align*}
	For $j \leq p$, we get that $\hat{S}_{\beta'}(j) = S_{\beta'}(j)$ by our earlier consideration.
	If $j = p+1$, we obtain $\hat{S}_{\beta'}(p+1) = S_{\beta'}(p)$.
	Hence, we get that $\Sigma_j = \Gamma_j$ for all $j \in [1..p]$ and $\Sigma_{p+1} = \Gamma_p$.
	Then the expressions again coincide.
	
	Since $\CValid_\beta(x) = \bigwedge_{i \in [1..k+1]} \CValid^i_\beta(x) = \TRUE$, we get that for each $i \in [1..k+1]$, the intersection $\Expr(x,\beta') \cap h(\Trace_C(Q_i))$ is non-empty, where $\beta' = \beta_1 \dots \beta_{i-1}$.
	Now we can replace $\Expr(x,\beta')$ by $\Expr(\hat{x},\beta')$ in each intersection and obtain that $\CValid^i_\beta(\hat{x}) = \TRUE$ for each $i \in [1..k+1]$ which implies $\CValid_\beta(\hat{x}) = \TRUE$.
\end{proof}
Finally, we turn to the proof of Lemma \ref{Lemma:RestrictionToShortWitnesses}.
\begin{proof}
	We fix a state $q \in Q_L$ and a first-write sequence $\beta$.
	For the first direction of the lemma, let a witness $x = (w,q,\sigma) \in \Wit$ with $\LValid_\beta(x) \wedge \CValid_\beta(x) = \TRUE$ be given.
	
	\subparagraph{First Direction:}
	By induction on the order of $x$, we prove a statement slightly stronger than depicted in the lemma.
	We show that there is a short witness $z = (w',q,\sigma')$ with $\init(z) = \init(x)$, $\ord(z) = \ord(x)$, $\FullExpr(z,\beta) = \FullExpr(x,\beta)$, and $\SValid_\beta(z) = \TRUE$.
	
	For the induction basis, consider the case where $\ord(x) = 0$.
	Then, $\beta = \varepsilon$.
	We set $z = \Shrink^*(x)$.
	Note that shrinking preserves initial state, target state, and order.
	Hence, the short witness $z$ is of the form $(w',q,\sigma')$ with $\init(z) = \init(x)$ and $\ord(z) = 0$.
	Recall that in this case, validity of $z$ is defined by $\SValid_\varepsilon(z) = \LValid_\varepsilon(z)$.
	Hence, we need to show validity of $z$ with respect to the leader.
	Since $\LValid_\varepsilon(x) = \TRUE$ by assumption, we obtain from Lemma \ref{Lemma:LeaderValidShrink} that $\LValid_\varepsilon(z) = \TRUE$.
	It is left to show that the full expressions of $z$ and $x$ coincide.
	But this follows immediately from Lemma \ref{Lemma:FullExpressionShrink}.
	
	Now assume that $\ord(x) = k+1$ for a $k \in \Naturals$ with $k < \sizeD$.
	Then, $\beta = \beta_1 \dots \beta_{k+1}$.
	We denote the prefix $\beta_1 \dots \beta_k$ of the first-write sequence by $\beta'$.
	Let $\sigma(k+1) = p$.
	Then, we can write the word $w$ as
	\begin{align*}
		w = (q_1,a_1) \dots (q_{p-1},a_{p-1}) . (q_p, a_p) \dots (q_n, a_n).
	\end{align*}
	By Lemma \ref{Lemma:StrictlyIncreasingSigma}, we can assume that $\sigma(i) < p$ for each $i \in [1..k]$.
	We define the word $w_\pre = (q_1,a_1) \dots (q_{p-1},a_{p-1})$ to be the prefix of $w$ up to the $(p-1)$-st letter.
	The remaining postfix is the denoted by $w_\po = (q_p,a_p) \dots (q_n,a_n)$.
	Moreover, we define the map $\sigma_\pre$ to be the restriction of $\sigma$ to $[1..k]$.
	Formally, $\sigma_\pre : [1..k] \rightarrow [1..p-1]$ with $\sigma_\pre(i) = \sigma(i)$.
	We further define $\sigma_\po$ to map a single first write to the first position $1$, $\sigma_\po(1) = 1$.
	Intuitively, $\sigma_\po$ is the map responsible for the last first write $\beta_{k+1}$.
	With these definitions we can split the witness $x$ into the following two witnesses
	\begin{align*}
		x_\pre = (w_\pre,q_p,\sigma_\pre) \; \text{and} \; x_\po = (w_\po,q,\sigma_\po).
	\end{align*}
	By definition, we get that $x = x_\pre \times x_\po$.
	Moreover, the orders are given by $\ord(x_\pre) = k$ and $\ord(x_\po) = 1$.
	We want to apply the induction hypothesis to $x_\pre$.
	To this end, we need to show that $\LValid_{\beta'}(x_\pre) \wedge \CValid_{\beta'}(x_\pre) = \TRUE$.
	
	For the leader validity, we use the fact that $\LValid_\beta(x) = \TRUE$.
	Let $j \in [1..p-1]$.
	By the leader validity of $x$, either $q_j = q_{j+1}$ or there exists a transition $q_j \xrightarrow{\wt{a_j} / \varepsilon / \rd{b}}_L$ with $b \in S_\beta(j)$.
	For the set $S_\beta(j)$, we have the following equivalence:
	\begin{align*}
		S_\beta(j) = 
		\Setcon{\beta_\ell \in \beta}{\sigma(\ell) \leq j} =
		\Setcon{\beta_\ell \in \beta'}{\sigma(\ell) \leq j} =
		\Setcon{\beta_\ell \in \beta'}{\sigma_\pre(\ell) \leq j} =
		S^\pre_{\beta'}(j).
	\end{align*}
	The first equality is by definition, the second by the fact that $j \leq p-1 < \sigma(k+1)$.
	The remaining equalities are again due to definition.
	Hence, $\LValid_{\beta'}(x_\pre) = \TRUE$.
	
	In order to see that $x_\pre$ is valid for the contributors wrt. to $\beta'$, consider the expressions induced by $x$ and $x_\pre$.
	Let $i \in [1..k]$.
	Since $S^\pre_{\beta'}(j) = S_\beta(j)$ for $j \in [1..p-1]$, we get 
	\begin{align*}
		\Expr(x_\pre,\beta_1 \dots \beta_{i-1}) = \Expr(x,\beta_1 \dots \beta_{i-1}).
	\end{align*}
	Hence, leader validity carries over to the witness $x_\pre$: $\CValid^i_{\beta'}(x_\pre) = \CValid^i_\beta(x) = \TRUE$.
	This means that also the conjunction of these values is true, $\CValid_{\beta'}(x_\pre) = \TRUE$.
	
	Now we can apply induction to $x_\pre$ and obtain a short witness $c = (w_c,q_p,\sigma_c) \in \Ord(k)$ with $\init(c) = q_1$, $\SValid_{\beta'}(c) = \TRUE$, and $\FullExpr(c,\beta') = \FullExpr(x_\pre,\beta')$.
	The witness $c$ is the first of two short witnesses that we will use in the recursion for short validity.
	The second witness is denoted by $d = (w_d,q,\sigma_d)$ and is defined by $d = \Shrink^*(x_\po)$.
	Then by definition, $d \in \Ord(1)$, $\init(d) = q_p$, and $\sigma_d(1) = 1$.
	Note that target state of $c$ and the initial state of $d$ match.
	Hence, the witness concatenation $c \times d$ is well-defined.
	
	The short witness of interest is then defined by $z = c \otimes d \in \Ord(k+1)$.
	Hence, $\ord(z) = \ord(x)$.
	Furthermore, we immediately get that $z$ is of the form $z = (w_z,q,\sigma_z)$ and that $\init(z) = \init(c) = q_1 = \init(x)$.
	It is therefore left to show that $z$ is valid, $\SValid_\beta(z) = \TRUE$, and that the full expressions coincide, $\FullExpr(z,\beta) = \FullExpr(x,\beta)$.
	
	We first focus on the validity of $z$.
	To this end, we make use of the recursive definition of $\SValid_\beta(z)$.
	It is enough to show that $\LValid_\beta(c \times d) = \TRUE$ and that $\CValid^{k+1}_\beta(c \times d) = \TRUE$.
	Note that $[z = c \otimes d]$ is true by definition and $\SValid_{\beta'}(c) = \TRUE$ holds by induction.
	
	Leader validity of $c \times d$ wrt. $\beta$ is obtained from the following chain of implications:
	\begin{align*}
		\LValid_\beta(x) \implies \LValid_\beta(x_\pre \times x_\po) \implies \LValid_\beta(c \times x_\po) \implies \LValid_\beta(c \times d).
	\end{align*}
	First note that $\LValid_\beta(x) = \TRUE$ by assumption.
	The first implication is due to the fact that $x = x_\pre \times x_\po$.
	For the second, we use that $\SValid_{\beta'}(c) = \TRUE$.
	We apply Lemma \ref{Lemma:ShortValidImpliesLeaderValid} and obtain that $\LValid_{\beta'}(c) = \TRUE$.
	Then, by Lemma \ref{Lemma:LeaderValidConcatWellBehaved}, we get that $\LValid_\beta(c \times x_\po) = \TRUE$.
	The last implication is again an application of Lemma \ref{Lemma:LeaderValidConcatWellBehaved} since $d = \Shrink^*(x_\po)$.
	
	Next, we show that $\CValid^{k+1}_\beta(c \times d) = \TRUE$.
	To this end, we prove 
	\begin{align*}
		\Expr(x,\beta') = \Expr(c \times d, \beta').
	\end{align*}
	Since $\CValid^{k+1}_\beta(x) = \TRUE$ by assumption, the equality of expressions implies that also $\CValid^{k+1}_\beta(c \times d)$ evaluates to $\TRUE$.
	Consider the expression of $c \times d = (w_c . w_d, q, \sigma_{c \times d})$ at $\beta'$.
	We have that 
	\begin{align*}
		\Expr(c \times d,\beta') = \FullExpr(c,\beta') . \Gamma^*_p,
	\end{align*}
	where $\Gamma_p = \Loop(q_p,S^{c \times d}_{\beta'} (\abs{w_c} + 1) ) \cup S^{c \times d}_{\beta'} (\abs{w_c} + 1)$. 
	The set of first writes $S^{c \times d}_{\beta'} (\abs{w_c} + 1)$ is given by $\Setcon{\beta_\ell \in \beta'}{\sigma_{c \times d}(\ell) \leq \abs{w_c} + 1}$.
	The equality holds since $\sigma_{c \times d}(k+1) = \abs{w_c} +1$, a fact that follows from $\sigma_d(1) = 1$.
	Since $\FullExpr(c,\beta') = \FullExpr(x_\pre,\beta')$, we get that 
	\begin{align*}
	\Expr(c \times d,\beta') = \FullExpr(x_\pre,\beta') . \Gamma^*_p.
	\end{align*}
	Now note that $S^{c \times d}_{\beta'} (\abs{w_c} + 1) = \setcon{\beta_1, \dots, \beta_k}$.
	This is due to $\sigma_{c \times d}(\ell) = \sigma_c(\ell) \leq \abs{w_c}$ for all $\ell \in [1..k]$.
	Moreover, we have the following equality of sets 
	\begin{align*}
			S^x_{\beta'} (p) = \Setcon{\beta_\ell \in \beta'}{\sigma(\ell) \leq p} = \setcon{\beta_1, \dots, \beta_k} = S^{c \times d}_{\beta'} (\abs{w_c} + 1).
	\end{align*}
	Hence, we obtain that $\Gamma_p = \Loop(q_p,S^x_{\beta'}(p)) \cup S^x_{\beta'}(p)$.
	Considering the expression of $x$ at $\beta'$, we then get the following
	\begin{align*}
		\Expr(x,\beta') = \Expr(x_\pre \times x_\po,\beta') = \FullExpr(x_\pre,\beta') . \Gamma^*_p
	\end{align*}
	since $\sigma(k+1) = p$.
	Thus, we have the desired equality.
	
	Finally, we prove that the full expressions of $z$ and $x$ coincide.
	To this end, we start with $\FullExpr(x,\beta)$ and transform it step by step to $\FullExpr(z,\beta)$.
	We begin with the following equalities which are consequences of $x = x_\pre \times x_\po$ and Lemma \ref{Lemma:ProductBlowUp}:
	\begin{align*}
		\FullExpr(x,\beta) = \FullExpr(x_\pre \times x_\po,\beta) = \FullExpr(x_\pre,\beta') . \FullExpr(x^{(k+1)}_\po,\beta).
	\end{align*}
	Since $d = \Shrink^*(x_\po)$ and $\sigma_\po(1) = 1$, we get by Lemma \ref{Lemma:ShrinkingBlowUp} that $d^{(k+1)} = \Shrink(x^{(k+1)}_\po)$.
	Hence, we obtain from Lemma \ref{Lemma:FullExpressionShrink} that $\FullExpr(x^{(k+1)}_\po,\beta') = \FullExpr(d^{(k+1)},\beta')$.
	Note that $x^{(k+1)}$ is leader valid wrt $\beta$ since $x$ is.
	Now we use that $\FullExpr(x_\pre,\beta') = \FullExpr(c,\beta')$ and get the equality:
	\begin{align*}
		\FullExpr(x_\pre,\beta') . \FullExpr(x^{(k+1)}_\po,\beta) = \FullExpr(c,\beta') . \FullExpr(d^{(k+1)},\beta).
	\end{align*}
	We apply Lemma \ref{Lemma:ProductBlowUp} and Lemma \ref{Lemma:FullExpressionShrink} again.
	Note that $z = \Shrink^*(c \times d)$ by definition.
	\begin{align*}
		\FullExpr(c,\beta') . \FullExpr(d^{(k+1)},\beta) = \FullExpr(c \times d,\beta) = \FullExpr(z,\beta).
	\end{align*}
	This completes the first direction of the proof.
	
	\subparagraph{Second Direction:}
	Now let a short witness $z = (w',q,\sigma')$ with $\SValid_\beta(z) = \TRUE$ be given.
	Like above, we employ induction to prove a slightly stronger statement.
	We show that there is a witness $x = (w,q,\sigma) \in \Wit$ with $\init(x) = \init(z)$, order $\ord(x) = \ord(z)$, $\FullExpr(x,\beta) = \FullExpr(z,\beta)$, and $\LValid_\beta(x) \wedge \CValid_\beta(x) = \TRUE$.
	
	For the induction basis, let $\ord(z) = 0$.
	In this case, $\beta = \varepsilon$.
	Set $x = z$.
	Then we only need to argue that $\LValid_\varepsilon(x) = \TRUE$ and $\CValid_\varepsilon(x) = \TRUE$.
	The latter holds since validity for contributors with empty first-write sequence is always true.
	Leader validity of $x$ holds since 
	\begin{align*}
		\LValid_\varepsilon(x) = \LValid_\varepsilon(z) = \SValid_\varepsilon(z) = \TRUE.
	\end{align*}
	
	Let $\ord(z) = k+1$ for $k < \sizeD$.
	Then, the first-write sequence is given by $\beta = \beta' . \beta_{k+1}$ with $\beta' = \beta_1 \dots \beta_k$.
	Since $\SValid_\beta(z) = \TRUE$, we get by the recursive definition of short validity, two witnesses $c \in \Ord(k)$ and $d \in \Ord(1)$ such that $z = c \otimes d$, $\LValid_\beta(c \times d) = \TRUE$, \mbox{$\CValid^{k+1}_\beta(c \times d) = \TRUE$}, and $\SValid_{\beta'}(c) = \TRUE$.
	We denote $c$ by $(w_c,q_c,\sigma_c)$ and $d$ similarly by $(w_d,q_d,\sigma_d)$.
	Note that $\init(c) = \init(z)$ and $q_d = q$.
	
	Since $c$ is a valid short witness of order $k$, we can apply induction.
	We obtain a witness $x' = (w_{x'},q_c,\sigma_{x'}) \in \Wit$ with initial state $\init(x') = \init(c) = \init(z)$, order $\ord(x') = k$, full expression $\FullExpr(x',\beta') = \FullExpr(c,\beta')$, and $\LValid_{\beta'}(x') \wedge \CValid_{\beta'}(x') = \TRUE$.
	The desired witness is $x = x' \times d$.
	Note that the concatenation is well-defined and that it immediately satisfies $x = (w,q,\sigma)$, $\init(x) = \init(z)$, and $\ord(x') = k + 1$.
	Hence, it is left to show that $\LValid_\beta(x) = \TRUE$, $\CValid_\beta(x) = \TRUE$, and that the full expressions of $x$ and $z$ coincide, $\FullExpr(x,\beta) = \FullExpr(z,\beta)$.
	
	We begin with leader validity.
	Since $\LValid_\beta(c \times d) = \TRUE$ and $\LValid_{\beta'}(x') = \TRUE$, we can apply Lemma \ref{Lemma:LeaderValidConcatWellBehaved}.
	It guarantees that $\LValid_\beta(x' \times d) = \TRUE$, which is what we wanted.
	
	For contributor validity, consider the following.
	We have seen that \mbox{$\CValid_{\beta'}(x') = \TRUE$} by induction.
	This means that each predicate $\CValid^i_{\beta'}(x')$ in the conjunction evaluates to $\TRUE$.
	We look at the corresponding expressions.
	For $x'$ and $x = x' \times d$, they are equivalent:
	\begin{align*}
		\Expr(x',\beta_1 \dots \beta_{i-1}) = \Expr(x, \beta_1 \dots \beta_{i-1})
	\end{align*}
	for each $i \in [1..k]$.
	The equation is due to $\sigma(i) = \sigma_{x'}(i)$ for $i \leq k$.
	Since $\CValid^i_{\beta'}(x') = \TRUE$, also the predicate $\CValid^i_{\beta}(x)$ evaluates to true for $i \in [1..k]$.
	It is left to argue that $\CValid^{k+1}_\beta(x) = \TRUE$.
	We make use of the fact that $\CValid^{k+1}_\beta(c \times d) = \TRUE$ and we show that the corresponding expressions of $x$ and $c \times d$ coincide.
	To this end, consider
	\begin{align*}
		\Expr(x,\beta') = \Expr(x' \times d,\beta') = \FullExpr(x',\beta') . \Expr(d^{(k+1)},\beta').
	\end{align*}
	The second equation follows by Lemma \ref{Lemma:ProductBlowUp}.
	Since the full expressions of $x'$ and $c$ coincide by induction, we get the following equations by invoking Lemma \ref{Lemma:ProductBlowUp} again:
	\begin{align*}
		\FullExpr(x',\beta') . \Expr(d^{(k+1)},\beta') = \FullExpr(c,\beta') . \Expr(d^{(k+1)},\beta') = \Expr(c \times d,\beta').
	\end{align*}
	This proves that the expressions are the same and that contributor validity carries over to $x$.
	We get $\CValid^{k+1}_\beta(x) = \TRUE$ and hence $\CValid_\beta(x) = \TRUE$.
	
	We show that the full expressions of $x$ and $z$ coincide.
	To this end, consider
	\begin{align*}
		\FullExpr(x,\beta) &= \FullExpr(x',\beta') . \FullExpr(d^{(k+1)},\beta) \\
		&= \FullExpr(c,\beta') . \FullExpr(d^{(k+1)},\beta) \\
		&= \FullExpr(c \times d,\beta) \\
		&= \FullExpr(z,\beta).
	\end{align*}
	The first and the third equation are due to Lemma \ref{Lemma:ProductBlowUp}.
	The second equation holds since the full expressions of $x'$ and $c$ are equivalent.
	Finally, the last equation is due to Lemma \ref{Lemma:FullExpressionShrink} which we can apply since $z = \Shrink^*(c \times d)$.
\end{proof}

\subsection*{Proof of Proposition \ref{Proposition:ValidWitnesses}}

It is left to explain the complexity.
Since there are $\bigO((\sizeL \sizeD)^\sizeL)$ many short witnesses and $\bigO(\sizeD^\sizeD)$ first-write sequences, the table has $\bigO((\sizeL \sizeD)^\sizeL \cdot \sizeD^\sizeD) = (\sizeL \cdot \sizeD)^{\bigO(\sizeL + \sizeD)}$ many entries.

To compute a single entry, we split $z$ into $x$ and $y$ by iterating over the short witnesses of order $k-1$ and $1$.
The iteration takes time proportional to the number of short witnesses $\bigO((\sizeL \sizeD)^\sizeL)$.
Checking whether $z = x \otimes y$ and evaluating $\LValid_\beta(x \times y) \wedge \CValid^{k+1}_\beta(x \times y)$ can be done in polynomial time.
Moreover, the value $\SValid_{\beta'}(x)$ can be looked up in the table.
Hence, computing an entry takes time $(\sizeL \sizeD)^{\bigO(\sizeL)}$.

The complete table, and hence all the values $\SValid_\beta(z)$, can thus be computed in time $(\sizeL \sizeD)^{\bigO(\sizeL + \sizeD)} \cdot (\sizeL \sizeD)^{\bigO(\sizeL)} = (\sizeL \sizeD)^{\bigO(\sizeL + \sizeD)} = (\sizeL + \sizeD)^{\bigO(\sizeL + \sizeD)}$.

\subsection*{Obtaining the Interfaces}

Let $z = (w,q,\sigma)$ with $w = (q_1,a_1) \dots (q_n,a_n)$ and $\beta$ a first-write sequence with $\SValid_\beta(z) = \TRUE$.
The state $q$ is the target state fixed by the witness.
The data value $a$ is the last symbol written in a computation along $z$.
It can either be $a_n$ or an arbitrary first write in $\beta$.
What remains is  to compute the set of all contributor states while conforming to the given short witness. We do this by iterating over all the contributor states and checking if it is reachable through the short witness. We start with an empty set of reachable contributors and will inductively build the required set by saturation. For each state of the contributor $c \in Q_C$, we check whether the contributor can reach the state $c$ from the initial state, when provided with the short witness as a support from the leader i.e. we check  $\Expr((w,q,\sigma),|\beta|) \cap h(\Trace_C(\{c\})) \neq \emptyset$. If the intersection is non empty then we add it to the set $S$. Iterating this procedure over all the states of contributor will give us the required set of reachable states $S$.

\section{Proofs of Section \ref{Section:Cycles}}
\label{Section:AppendixCycles}

We provide proofs and details for Section \ref{Section:Cycles}.

\subsection*{Proof of Lemma \ref{Lemma:Monotonicity}}

Let $\Gamma \subseteq \Gamma'$ be two subsets of $\Domain$.
Since the set of writes $\Writes(\decomp{S}{\Gamma})$ splits into $\Writes_C(\decomp{S}{\Gamma})$ and $\Writes_L(\decomp{S}{\Gamma})$, we show the two inclusions
\begin{align*}
	\Writes_C(\decomp{S}{\Gamma}) &\subseteq \Writes_C(\decomp{S}{\Gamma'}), ~\text{and} \\
	\Writes_L(\decomp{S}{\Gamma}) &\subseteq \Writes_L(\decomp{S}{\Gamma'}).
\end{align*}
To this end, let $\decomp{S}{\Gamma} = (S_1, \dots, S_\ell)$ and $\decomp{S}{\Gamma'} = (T_1, \dots, T_k)$ be the $\Gamma$-SCC decomposition and the $\Gamma'$-SCC decomposition of $S$.

For the first inclusion, take an element $b \in \Writes_C(S_1, \dots, S_\ell)$.
By definition, there are states $p,p'$ in a set $S_i$ and a transition $p \xrightarrow{\wt{b}}_C p'$.
Since $p,p'$ are in $S_i$, they are strongly connected in the graph $\restrictionGraph{S}{\Gamma}$.
Hence, the states are also strongly connected in $\restrictionGraph{S}{\Gamma'}$.
In fact, $\Gamma \subseteq \Gamma'$ implies that all the edges of $\restrictionGraph{S}{\Gamma}$ are also present in $\restrictionGraph{S}{\Gamma'}$.
Given that $(T_1, \dots T_k)$ is the $\Gamma'$-SCC decomposition of $S$, the states $p$ and $p'$ have to lie in one set $T_j$.
Hence, $b$ occurs as a write within a set of $(T_1, \dots, T_k)$ which means $b \in \Writes_C(T_1, \dots, T_k)$.

It is left to show the second inclusion.
Let $b \in \Writes_L(S_1, \dots, S_\ell)$.
Then, there are words $u,v \in \Ops{\Domain}^*$ such that $(q,a) \xrightarrow{u.\wt{b}.v}_{L'(\Gamma)} (q,a)$.
Recall that $\rightarrow_{L'(\Gamma)}$ is the transition relation of the automaton $P_{L'(\Gamma)}$.
It restricts the transitions of the leader to reads within the set $\Writes_C(S_1, \dots, S_\ell)$ and keeps track of the current memory content.
The latter may change due to a contributor write in $\Writes_C(S_1, \dots, S_\ell)$.
Since we already know that $\Writes_C(S_1, \dots, S_\ell) \subseteq \Writes_C(T_1, \dots, T_k)$, the automaton $P_{L'(\Gamma')}$ contains all the transitions of $P_{L'(\Gamma)}$.
Hence, the sequence of transitions $(q,a) \xrightarrow{u.\wt{b}.v}_{L'(\Gamma)} (q,a)$ in $P_{L'(\Gamma)}$ can also be carried out in $P_{L'(\Gamma')}$.
By definition, $b \in \Writes_L(T_1, \dots, T_k)$.

\subsection*{Proof of Proposition \ref{Proposition:CharacterizationStableSCCDecomp}}

We give an idea for proving the reverse direction.
A formal proof will be given afterwards.

Let $\Gamma$ be given.
We do not directly construct a saturated cycle, but a \emph{balanced} computation $\rho = c \rightarrow^+ d$ where $d$ and $c$ coincide up to the order of contributor states.
Phrased differently, $d$ is a permutation of $c$.
Moreover, $\rho$ is saturated in the above sense.
Since $d$ contains the same contributor states as $c$, $\rho$ can also be started in $d$.
This yields $c \rightarrow^+ d'$ where $d'$ is a new permutation of $c$.
Since there are only finitely many permutations, we eventually get a computation $c \rightarrow^* e \rightarrow^+_\sat e$ and hence, a saturated cycle.

Let $\decomp{S}{\Gamma} = (S_1, \dots, S_\ell)$.
To construct $\rho$, we first fix the behavior of the leader.
Formally, we pick a run $\rho_L$ of $P_{L'}$ from $(q,a)$ to $(q,a)$ that, on its way, writes all the symbols in $\Writes_L(S_1, \dots, S_\ell)$.
Note that such a run exists.
We let $t$ denote its length.
To execute $\rho_L$ properly, we have to provide the reads that it needs on the way.
Since these are from the set $\Writes_C(S_1, \dots, S_\ell)$, we construct supporting runs of the contributors providing them.

Let $b \in \Writes_C(S_1, \dots, S_\ell)$.
Then, there is a transition from $p$ to $p'$, both in $S_i$, writing $b$.
The idea is to keep enough copies of the \emph{source state} $p$ to provide $b$ whenever the leader needs it.
However, to obtain a balanced computation, we have to transfer the amount of contributors that moved from $p$ to $p'$ back to $p$.
Since $S_i$ is strongly $\Gamma$-connected, we know that there is a
path $p' \rightarrow^* p$ in $\restrictionGraph{S}{\Gamma}$.
Hence, there is a run on $P_C$ from $p'$ to $p$ reading only symbols from $\Gamma$.
With the above transition, we get a \mbox{cyclic run from $p$ to $p$.
We denote it by $\cyc(p)$.}

In the configuration $c$, we keep for each symbol $b$ with source state $p_b$ exactly $t+1$ copies of the states occurring in $\cyc(p_b)$.
We assume the contributors in $c$ are grouped into blocks $B_b(i)$ for $i \in [1..(t+1)]$.
Each block $B_b(i)$ simulates the run $\cyc(p_b)$.

When the leader starts to move along $\rho_L$, it might need to read a symbol $b$.
Then, there is a block $B_b(i)$ providing $b$.
To balance the block, all remaining transitions in it have to be executed.
Writes are simple.
They can be executed and ignored by other participants.
Read transitions in the block are handled in two different ways.

(1) Reads within the set $\Writes_C(S_1, \dots, S_\ell)$ are already executed in a special initial phase.
This explains the $(t+1)$-st copies of the cycles.
They are only used \mbox{to provide these reads.}

(2) Reads within $\Writes_L(S_1, \dots, S_\ell)$ are provided by the leader on $\rho_L$.
Since the leader traverses through all symbols in $\Writes_L(S_1, \dots, S_\ell)$, there is a transition which writes a particular symbol $b$ for the first time.
This write is then used to synchronize with all blocks.
The described computation is indeed balanced.
For more details, we refer to the formal proof.

\begin{proof}
	It remains to give a formal proof of the second direction.
	Let a non-empty set $\Gamma$ be given such that the $\Gamma$-SCC decomposition $\decomp{S}{\Gamma} = (S_1, \dots, S_\ell)$ is stable.
	This means that $\Gamma = \Writes(S_1, \dots,S_\ell)$.
	We split the set $\Gamma = \Gamma_C \cup \Gamma_L$, where $\Gamma_C = \Writes_C(S_1, \dots,S_\ell)$ are the writes of the contributors and $\Gamma_L = \Writes_L(S_1, \dots,S_\ell)$ are the writes of the leader.
	
	We fix a run of the leader.
	It is of the form $\pi = (q,a)\xrightarrow{w}_{L'} (q,a)$ and it writes every symbol in $\Gamma_L$.
	Formally, for each $g \in \Gamma_L$ there are $u,v \in \Ops{\Domain}^*$ such that $w = u \wt{g} v$. 
	Note that such a run exists.
	Potentially, we have to compose several cycles from $(q,a)$ to $(q,a)$.
	We denote the length of the run $\pi$ by $t$.
	
	For each element $b \in \Gamma_C$, let $p_0(b)$ and $p_1(b)$ be two states belonging to a set $S_{i(b)}$ of the $\Gamma$-SCC decomposition such that there is a transition $p_0(b) \xrightarrow{\wt{b}} p_1(b)$.
	Note that such a transition exists by definition.
	We call the set of states $\Gen = \Setcon{p_0(b)}{b \in \Gamma_C}$ the \emph{symbol generators}.
	Further, we fix a cycle for each symbol $b$.
	Let 
	\begin{align*}
		\cyc(b) = p_0(b) \rightarrow_C p_1(b) \rightarrow_C p_2(b) \rightarrow \dots \rightarrow_C p_k(b) = p_0(b)
	\end{align*}
	be a cyclic run in within $S_{i(b)}$, reading only symbols from $\Gamma$.
	Such a run exists since $S_{i(b)}$ is strongly connected in the graph $\restrictionGraph{S}{\Gamma}$.
	We use $\States(\cyc(b))$ to refer to the set $\setcon{p_0(b), \dots, p_{k-1}(b)}$ of states that appear in $\cyc(b)$.
	Moreover, given a configuration $c = (p,b,\pc)$ and a state $s$, we use $c[s]$ to denote the indices of the contributors that are currently in state $s$, $c[s] = \Setcon{j}{\pc(j) = s}$.
	
	We construct a computation $\rho$.
	The idea is to support the run $\pi$ of the leader and to provide all the needed symbols along its way.
	Moreover, we need to balance the computation:
	the number of contributors in a particular state is preserved after executing $\rho$.
	This is achieved by moving the contributors along the fixed cycles.
	
	For the construction, we start with $t+1$ many contributors in each state of $\cyc(b)$, for all symbols $b \in \Gamma_C$.
	Formally, we choose our initial configuration $c$ in such a way that for each $s \in S$ we have 
	\begin{align*}
		\abs{c[s]} = 
		\left\lbrace
		\begin{aligned}
			(t+1) \cdot \abs{ \Setcon{b \in \Gamma_C}{s \in \cyc(b)} }, ~&\text{if}~ s ~\text{lies in any cycle} \\
			1, ~&\text{otherwise.}
		\end{aligned}
		\right.
	\end{align*}
	Note that we add a single contributor in $s$ if the state does not appear in any cycle.
	This contributor does not move during the computation.
	The reason is that we can then ensure $\proj{C}(c) = S$ throughout the computation which keeps $\rho$ saturated.
	Moreover, we start with the appropriate leader state and memory value, $\proj{L}(c) =q,\proj{D}(c) = a$.
	
	During $\rho$, each contributor in a cycle moves to its neighbor by making exactly one move.
	To this end, we split $\rho$ into two phases: $\rho = \rho_1 . \rho_2$.
	In the first phase $\rho_1$, only the contributors move and the leader stays idle.
	The purpose of this phase is to ensure that all contributors can go to their neighbor in the cycle when reading a symbol from $\Gamma_C$ is required or when writing.
	Reading of other symbols is handled in $\rho_2$.
	
	Note that we have enough contributors in $c$ to provide each symbol in $\Gamma_C$ exactly $t+1$ many times.
	During $\rho_1$, we use up one of these contributors for each symbol and provide each symbol in $\Gamma_C$ once.
	To realize $\rho_1$, let $b \in \Gamma_C$.
	Pick one of the contributors currently in the state $p_0(b)$.
	It makes a move to $p_1(b)$ and writes $b$ to the memory.
	This is followed by a transition of every contributor in each of the cycles that can read $b$ and move to their neighbor.
	After the move, these contributors stay idle for the remainder of $\rho$.
	
	Let $c \rightarrow^* c'_1$ be the resulting computation.
	At the end of the computation, each transition in each copy of a cycle that involves reading a symbol from $\Gamma_C$ is already executed.
	Furthermore, one copy of the symbol generators is exhausted, the corresponding contributors made a move to the next state in the cycle.
	We still have $t$ contributors in the symbol generators left, $\abs{c'_1[p_0(b)]} = t$ for each $b \in \Gamma_C$.
	
	We complete the computation $\rho_1$.
	For any contributor in a state $s \in \cyc(b)$ that is not a symbol generator, $s \notin \Gen$, we do the following.
	If the contributor can write a symbol from $\Gamma_C$ and move to its neighbor state in $\cyc(b)$, we execute the transition.
	The written symbol is ignored by the other contributors and the leader.
	After executing these write transitions, we are at a configuration $\hat{c}_1$.
	We get $\rho_1 = c \rightarrow^* \hat{c}_1$.
	Still, we have $t$ contributors in the symbol generators left, $\abs{\hat{c}_1[p_0(b)]} = t$ for each $b \in \Gamma_C$.
	Hence, the contributors on the cycles that did not do a move so far are either the ones in the symbol generators or ones that require a symbol written by the leader, a symbol in $\Gamma_L$.
	
	We construct the second phase $\rho_2$ which shows how the leader runs.
	Recall that we already fixed the run $\pi$ of the leader providing all symbols in $\Gamma_L$.
	We execute each transition of $\pi$ interleaved with transitions of the contributors while maintaining two invariants.
	To formalize them, let $i \in [1..t]$.
	By $\Gamma^i_L \subseteq \Gamma_L$ we denote the set of symbols that the leader has written after $i$ many steps of $\pi$.
	The invariants are:
	(1) All contributors that are currently in a state $s \in \cyc(b)$ for a $b \in \Gamma_C$ but not in $\Gen$ and that can reach their neighbor while reading a symbol from $\Gamma^i_L$, have already performed this transition before the $(i+1)$-st step of $\pi$ is taken.
	(2) Before the $(i+1)$-st step of $\pi$ gets executed, for each $b \in \Gamma_C$, there are exactly $t-i$ many contributors left that can provide $b$.
	These are in the state $p_0(b)$.
	
	We construct the computation inductively.
	Assume, we already executed $i-1$ many steps of $\pi$.
	We denote the interleaved computation with the transitions of the contributors by $\rho^{i-1}_2$.
	We need a case distinction.
	
	If the $i$-th step of $\pi$, denoted by $\pi(i)$, is a write transition, we do not need to provide a symbol for the leader.
	The idea is to execute $\pi(i)$ and to let the contributors read the written symbol.
	Let $b \in \Gamma_L$ be that symbol.
	Then $\Gamma^i_L = \Gamma^{i-1}_L \cup \setcon{b}$.
	We first execute $\pi(i)$ and write $b$ to the shared memory.
	Now, each contributor on a cycle that needs to read a $b$ to arrive at its neighbor takes the corresponding read transition.
	This maintains Invariant (1).
	To ensure that (2) also holds, we add the following computation.
	For each symbol $b \in \Gamma_C$ we pick exactly one contributor in $p_0(b)$ and let it write $b$ to the memory.
	The write is ignored by others.
	This way, we consume exactly one copy of these contributors, maintaining (2).
	
	If $\pi(i)$ is a read of a symbol $b \in \Gamma_C$, we pick one contributor that is currently in $p_0(b)$.
	We let it execute its transition $p_0(b) \xrightarrow{\wt{b}} p_1(b)$ to provide $b$.
	The transition is followed by the leader taking $\pi(i)$.
	Invariant (1) is already ensured at this point since $\Gamma^i_L = \Gamma^{i+1}_L$.
	To guarantee (2), we consume copies for symbols different from $b$.
	Let $b' \in \Gamma_C$, $b' \neq b$.
	We let one copy of a contributor, currently in $p_0(b')$, perform its write transition on $b'$.
	The write is ignored by others.
	After executing these transitions, (2) holds. 
	
	Depending on the case, we add the resulting computation to $\rho^{i-1}_2$ and obtain a new computation $\rho^i_2$.
	Then we can define $\rho^2 = \rho^t_2$.
	Putting things together, we get
	\begin{align*}
		\rho = \rho_1 . \rho_2 = c \rightarrow^* \hat{c}_1 \rightarrow^* c_1.	
	\end{align*}
	By the maintained invariants, we get that $c_1$ is a permutation of $c$.
	All contributors took one transition along a cycle.
	Hence, the number of contributors in a certain state in $c$ and $c_1$ are equal.
	For each $s$ we have: $\abs{c[s]} = \abs{c_1[s]}$.
	Moreover, since $\pi$ is a cycle, we get $\proj{L}(c_1) = a = \proj{L}(c)$ and $\proj{\Domain}(c_1) = a = \proj{\Domain}(c)$.
	Hence, $\rho$ is a \emph{balanced} computation and can be applied again to $c_1$.
	
	Since there are only finitely many permutations of $c$, applying $\rho$ repeatedly will therefore yield a computation $c \rightarrow^* e \rightarrow^+_\sat e$ and hence, a saturated cycle.
\end{proof}

\subsection*{Proof of Lemma \ref{Lemma:OperatorProperties}}

We only need to show that for $X \subseteq \Domain$, the expression $\WritesSCC(X)$ can be evaluated in time $\bigO(\evaltime)$.
By definition, we have that $\WritesSCC(X) = \Writes(\decomp{S}{X})$.
We first compute $\decomp{S}{X}$.
To this end, we need to construct the graph $\restrictionGraph{S}{X}$.

To obtain $\restrictionGraph{S}{X}$, we iterate over the transitions in $\delta_C$.
If the current transition is a read within $X$ or a write, we keep it as an edge.
Hence, we need $\bigO(\abs{\delta_C}) = \bigO(\sizeC^2 \cdot \sizeD)$ time for the construction.
Note that a look-up in $X$ can be performed in constant time if we assume that $X$ is a bit-vector with $X(b) = 1$ if and only if $b \in X$.

Now we can apply Tarjan's algorithm to obtain the strongly connected components $(G_1, \dots, G_\ell)$ of $\restrictionGraph{S}{X}$.
Since the algorithm runs in time linear in the number of edges and the number of vertices, this takes time $\bigO(\sizeC + \abs{\delta_C}) = \bigO(\sizeC^2 \cdot \sizeD)$.
We obtain the $X$-SCC decomposition $\decomp{S}{X} = (S_1, \dots, S_\ell)$ by setting $S_i$ to the vertices of $G_i$.

It is left to compute the set $\Writes(S_1, \dots, S_\ell)$.
First, we focus on $\Writes_C(S_1, \dots, S_\ell)$.
To compute the set, we iterate over all transitions in $\delta_C$.
If the current transition is a write between two states $p,p'$ belonging to the same set $S_i$, we add the corresponding symbol to $\Writes_C(S_1, \dots, S_\ell)$.
We need $\bigO(\abs{\delta_C}) = \bigO(\sizeC^2 \cdot \sizeD)$ time for the iteration.
We can perform the check whether $p$ and $p'$ lie in the same set $S_i$ again in constant time.
Summing up, we needed $\bigO(\sizeC^2 \cdot \sizeD)$ time so far.

For computing $\Writes_L(S_1, \dots, S_\ell)$, we first need to construct the automaton $P_{L'}$.
The states $Q_L \times \Domain$ can be added in time $\bigO(\sizeL \cdot \sizeD)$.
The transitions of $P_{L'}$ are obtained by an iteration over $\delta_L$.
If the current transition is a write, $s \xrightarrow{\wt{b'}}_L s'$, then we add $\sizeD$ many transitions: $(s,b) \xrightarrow{\wt{b'}}_{L'} (s,b')$, one for each $b \in \Domain$.
If the transition is a read of a symbol $b$, we test whether $b \in \Writes_C(S_1, \dots, S_\ell)$ and add the single transition $(s,b) \xrightarrow{\rd{b}}_{L'} (s',b)$.
Adding these transitions takes time $\bigO(\abs{\delta_L} \cdot \sizeD) = \bigO(\sizeL^2 \cdot \sizeD^2)$ where the additional factor $\sizeD$ appears either since we add $\sizeD$ many transitions in the case of a write.
The $\varepsilon$-transitions in $P_{L'}$ can be added in time $\bigO(\sizeL \cdot \sizeD^2)$:
we iterate over each symbol $b' \in \Writes_C(S_1, \dots, S_\ell)$ and add $\sizeL \cdot \sizeD$ many transitions $(s,b) \xrightarrow{\varepsilon}_{L'} (s,b')$, one for each pair $(s,b)$.
Hence, we constructed the automaton $P_{L'}$ in time $\bigO(\sizeL^2 \cdot \sizeD^2)$.
Note that this limits the size of $\delta_{L'}$ to $\bigO(\sizeL^2 \cdot \sizeD^2)$.

To identify the elements in the set $\Writes_L(S_1, \dots, S_\ell)$, we iterate over all $b \in \Domain$ and test for each, whether it occurs as a write $\wt{b}$ on a cycle from $(q,a)$ to $(q,a)$ in $P_{L'}$.
The test can be reduced to a non-emptiness problem.
To this end, let $P_{L'}(q,a)$ be the automaton $P_{L'}$ with $(q,a)$ as initial and final state.
Then, $b \in \Writes_L(S_1, \dots, S_\ell)$ if and only if 
\begin{align*}
	\Ops{\Domain}^* . \wt{b} . \Ops{\Domain}^* \cap \langu(P_{L'}(q,a)) \neq \emptyset.
\end{align*}
Since the corresponding automaton for $\Ops{\Domain}^* . \wt{b} . \Ops{\Domain}^*$ has a constant number of states, building the product and deciding non-emptiness can be done in $\bigO(\abs{\delta_{L'}}) = \bigO(\sizeL^2 \cdot \sizeD^2)$ time.
Since the above non-emptiness test has to be executed for each $b \in \Domain$, we get a total time of $\bigO(\sizeL^2 \cdot \sizeD^3)$ to construct the set $\Writes_L(S_1, \dots, S_\ell)$.

Putting the sets $\Writes_C(S_1, \dots, S_\ell)$ and $\Writes_L(S_1, \dots, S_\ell)$ together, we obtain the complete set of writes, $\Writes(S_1, \dots, S_\ell) = \WritesSCC(X)$.
Adding up the complexities, we needed $\bigO(\evaltime)$ time for evaluating the operator.

\end{document}